\newcommand{\setX}{\mathcal{X}} 
\newcommand{\setZ}{\mathcal{Z}} 
\newcommand{\R}{\mathbb{R}} 
\newcommand{\dd}{\mathrm{d}}
\newcommand{\dx}{\dd x}
\newcommand{\dz}{\dd z}
\newcommand{\eqdef}{:=}
\newcommand{\ind}{\mathbb{1}}  
\newcommand{\V}{\mathcal{V}}
\newcommand{\E}{\mathbb{E}}
\newcommand{\Var}{\mathrm{Var}}
\newcommand{\Cov}{\mathrm{Cov}}
\renewcommand{\P}{\mathbb{P}}
\newcommand{\Q}{\mathbb{Q}}
\newcommand{\vas}{v_{\infty}}
\newcommand{\N}{\mathcal{N}}
\newcommand{\bigO}{\mathcal{O}}
\newcommand{\cvd}{\Rightarrow}
\newcommand{\cvas}{\stackrel{a.s.}{\rightarrow}}
\newtheorem{thm}{Theorem}
\newtheorem{lem}{Lemma}
\newtheorem{prop}{Proposition}
\newtheorem*{assumption}{Assumption}
\newtheorem{defi}{Definition}
\newenvironment{algo}[1]{
  \begin{center}
      \begin{algorithm}
        \caption{#1}
        \DontPrintSemicolon
      }
      {
      \end{algorithm}
  \end{center}
}
\newcommand{\FK}{Feynman-Kac}
\newcommand{\wf}{\mathrm{wf}}
\newcommand{\norm}[1]{\left\lVert#1\right\rVert}
\newcommand{\pr}[1]{\left( #1 \right)} 
\newcommand{\ps}[1]{\left[ #1 \right]} 
\newcommand{\px}[1]{\left\{ #1 \right\}} 
\newcommand{\abs}[1]{\left| #1 \right|} 
\newcommand{\CE}[2]{\E\ps{\left. {#1} \right \vert {#2}}} 
\newcommand{\CVar}[2]{\Var\pr{\left. {#1} \right \vert {#2}}} 
\newcommand{\tv}[1]{\norm{#1}_{\operatorname{TV}}}
\newcommand{\infnorm}[1]{\norm{#1}_\infty}
\newcommand{\wstd}[2]{\mathcal{V}^{\textrm{std,}#2}_{#1}}
\newcommand{\wwf}[1]{\mathcal{V}^{\textrm{wf}}_{#1}}
\newcommand{\ifactorstd}[2]{\operatorname{IF}_{#1}^{\textrm{std,}#2}}
\newcommand{\ifactorwf}[1]{\operatorname{IF}_{#1}^{\textrm{wf}}}
\author{Hai-Dang Dau \& Nicolas Chopin}
\title[Waste-free SMC]{Waste-free sequential Monte Carlo}
\begin{document}

\begin{abstract}
  A standard way to move particles in an SMC sampler is to apply several steps of
  an MCMC (Markov chain Monte Carlo) kernel. Unfortunately, it is not clear how
  many steps need to be performed for optimal performance. In addition, the
  output of the intermediate steps are discarded and thus wasted somehow. We
  propose a new, waste-free SMC algorithm which uses the outputs of all these
  intermediate MCMC steps as particles. We establish that its output is
  consistent and asymptotically normal. We use the expression of the asymptotic
  variance to develop various insights on how to implement the algorithm in
  practice. We develop in particular a method to estimate, from a single run of
  the algorithm, the asymptotic variance of any particle estimate. We show
  empirically, through a range of numerical examples, that waste-free SMC tends
  to outperform standard SMC samplers, and especially so in situations where the
  mixing of the considered MCMC kernels decreases across iterations (as in
  tempering or rare event problems).
\end{abstract}

\maketitle

\section{Introduction}
\label{sec:intro}

\subsection{Background}

Sequential Monte Carlo (SMC) methods are iterative stochastic algorithms that
approximate a sequence of probability distributions through successive
importance sampling, resampling, and Markov steps. Historically, they were
mainly used to approximate the filtering distributions of a state-space model.
More recently, they have been extended to an arbitrary sequence of probability
distributions \citep{MR1837132, MR1929161, DelDouJas:SMC}; in such applications,
they are often called ``SMC samplers''.

As an illustrative example, consider the tempering sequence:
\begin{equation}
  \label{eq:tempering_seq}
  \pi_t(\dx) \propto  \nu(\dx) L(x)^{\gamma_t}
\end{equation}
based on increasing exponents, $0=\gamma_0<\ldots <\gamma_T=1$. This sequence
may be used to interpolate between a distribution $\nu(\dx)$, which is easy to
sample from, and a distribution of interest, $\pi(\dx) \propto \nu(\dx) L(x)$
(e.g.\ a Bayesian posterior distribution), which may be difficult to simulate
directly. Other sequences of interest will be discussed later.

When used to sample from a fixed distribution (as in tempering), SMC samplers
present several advantages over MCMC (Markov chain Monte Carlo). First, they
provide an estimate of the normalising constant of the target distribution at no
extra cost; this quantity is of interest in several cases, in particular in
Bayesian model choice \citep[e.g.][]{MR3533634}. Second, they are easy to
parallelise, as the bulk of the computation treats the $N$ particles
independently \citep{Lee2010}. Third, it is easy to make SMC samplers
``adaptive''; that is, to use the current particle sample to automate the choice
of most of its tuning parameters. This is often crucial for good performance.

To elaborate on the third point, a common strategy to move the particles is to
apply a $k-$fold MCMC kernel that leaves the current distribution $\pi_t$
invariant. One may use for instance a random walk Metropolis kernel, with the
covariance of the proposal set to a small multiple of the empirical covariance
of the particle sample. In that way, the algorithm automatically scales to the
current distribution.

However, one tuning parameter of SMC samplers that is often overlooked in the
literature is the number $k$ of MCMC steps that should be applied to move the
particles. For instance, \citet{MR3634307} set $k=3$ arbitrarily in their
numerical experiments, but it turns out that this value is very sub-optimal, as
we show in our first numerical example.

A second issue with $k$ is that there is no reason to set it to a fixed value
across iterations. In application such as tempering, $\pi_t$ may become more and
more difficult to explore through MCMC; thus $k$ should be increased
accordingly, and may become very large.

To deal with these two issues, one could set $k$ adaptively; that is, iterate
MCMC steps until a certain stability criterion is met 
\citep{drovandipettitt2011, MR3283917, MR3515028, salomone2018unbiased,
Buchholz2018}.  However, in our experience, these approaches are not always
entirely reliable. There seems to be a fundamental difficulty in determining,
after $k$ steps have been performed, that this value of $k$ is optimal, without
performing several extra steps.

A third, and perhaps more essential issue, is that, if indeed large values of
$k$ are required for good performance, the intermediate output of these $k$ MCMC
steps are not used directly, and seems somehow wasted.

\subsection{Motivation and plan}

These issues motivated us to develop a waste-free SMC algorithm that exploits
the intermediate outputs of these MCMC steps; see Section~\ref{sec:proposed}.
The basic idea is to resample only $M=N/P$ out of the $N$ previous particles,
for some $P\geq 2$. Then each resampled particle is moved $P-1$ times through
the chosen MCMC kernel. The resampled particles and their $P-1$ iterates are
gathered to form a new sample of size $N$.

Standard results on the convergence of SMC estimates cannot be applied directly
to this new algorithm. We were able nonetheless to establish the consistency and
asymptotic normality of the output of waste-free SMC; see Section~\ref{sec:theory}. We also compared the performance and the robustness of waste-free SMC and standard SMC through an artificial example.

These theoretical results (in particular the expression of the asymptotic
variance) gives us various insights on how to implement waste-free SMC in
practice; see Section~\ref{sec:prac}. In particular, we are able to derive
variance estimates and confidence intervals for any particle estimate, which may
be computed from a single run.

To assess the performance and versatility of waste-free SMC, we perform
numerical experiments in three different scenarios where SMC samplers already
give state-of-the-art performance: logistic regression with a large number of
predictors; the enumeration of Latin squares; and the computation of Gaussian
orthant probabilities; see Section~\ref{sec:num}. In each case, waste-free SMC
performs at least as well as properly tuned SMC samplers, while requiring considerably less tuning effort.

Proofs are delegated to the appendix.

\subsection{Related work}

We focus on SMC samplers based on invariant (MCMC) kernels. These algorithms
have proved popular recently in a variety of applications, such as  
rare events \citep{johansen2005sequential, MR2909622}; 
experimental designs \citep{MR2281248};  cross-validation \citep{MR2676929};
variable selection  \citep{schafer2011sequential};  graphical models
\citep{NIPS2014_5570}; PAC-Bayesian classification  \citep{NIPS2014_5604}; 
Gaussian orthant probabilities \citep{MR3515028}; 
Bayesian model choice in hidden Markov models \citep{MR3533634}, and
un-normalised models \citep{MR3599680};
among others.

We note in passing that SMC samplers may be generalised to non-invariant
kernels, as shown in \cite{DelDouJas:SMC}; see also \cite{Heng2020} for how to
calibrate such kernels.  On the other hand,  it is also possible to add MCMC
steps to various SMC algorithms that are not SMC samplers; the idea goes back
to  \cite{MR1615251}. In particular, SMCMC   \citep[Sequential
MCMC,][]{septier2009mcmc, septier2016, MR4123640} algorithms approximate
recursively the filtering distribution of a state-space model: each iteration
$t$ runs a MCMC chain that leaves invariant a certain (partly discrete)
approximation of the current filter. It is not clear however how to derive a
waste-free version of these algorithms, and thus we do not consider them
further.

Finally, we mention that several improvements proposed for standard SMC
samplers might be also adapted to waste-free SMC, such as methods  to combine
the output of the intermediate steps,  see  \cite{Beskos2017} and
\cite{MR3960770}.

\cite{MR3357384} proposes several algorithms that are variations of the
resample-move algorithm of \cite{GilksBerzu}; one of them (generalized
resample-move) bears a similarity with waste-free SMC in the context of
of tempering. 
 
\section{Proposed algorithm}
\label{sec:proposed}

\subsection{Notations}
\label{subsec:notations}

Throughout the paper, $(\setX, \mathbb{X})$ stands for a measurable space, and
$\varphi:\setX\rightarrow \R$ for a measurable function; let
$\infnorm{\varphi}\eqdef \sup_{x\in\setX}|\varphi(x)|$ (supremum norm). The
expectation of $\varphi(X)$ when $X\sim\pi(\dx)$ is denoted by $\pi(\varphi)$;
i.e. $\pi(\varphi)\eqdef \int \varphi(x) \pi(\dx)$. Recall that a Markov kernel
$K(x, \dd y)$ is a map $K:\setX\times \mathbb{X} \rightarrow [0, 1]$ such that
$x\rightarrow K(x, A)$ is measurable in $x$, for any $A\in\mathbb{X}$; and
$A\rightarrow K(x, A)$ is a probability measure (on $(\setX, \mathbb{X})$), for
any $x\in\setX$. We use the following standard notations for the integral
operators associated to Markov kernel $K$: $\pi K$ is the distribution such that
$\pi K(A) = \int_\setX \pi(\dx) K(x, A)$, and $K(\varphi)$ is the
function $x\rightarrow \int_\setX K(x,\dd y) \varphi(y)$, for
$\varphi:\setX\rightarrow \R$.

Symbol $\cvd$ means convergence in distribution, and $\tv{\cdot }$ stands for
the total variation norm, $\tv{\mu-\nu}=\sup_{A\in\mathbb{X}}|\mu(A)-\nu(A)|$.
\subsection{A generic SMC sampler}
\label{sec:basic_smc_sampler}

We consider a generic sequence of target probability distributions of the form
(for $t=0, 1, \ldots, T$):
\begin{equation}
  \label{eq:generic_target}
  \pi_t(\dx) = \frac{1}{L_t} \gamma_t(x) \nu(\dx) 
\end{equation}
where $\nu(\dx)$ is a probability measure, with respect to measurable space
$(\setX, \mathbb{X})$, $\gamma_t$ is a measurable, non-negative function, and
$L_t\eqdef \int_{\setX}\gamma_t(x)\nu(\dx)$, the normalising constant, is
assumed to be properly defined, i.e. $0<L_t<\infty$. In the tempering scenario
mentioned in the introduction, $\gamma_t(x)=L(x)^{\gamma_t}$, for certain
exponents $\gamma_t$. Other interesting scenarios include data tempering
(sequential learning), where $x$ represents a parameter, $\nu(\dx)$ its prior
distribution, and $\gamma_t(x)$ is the likelihood of data-points $y_0,\ldots,
y_t$; rare-event simulation (and likelihood-free inference), where
$\gamma_t(x)=\ind_{\mathcal{E}_t}(x)$, the indicator function of nested sets
$\mathcal{E}_0\supset \mathcal{E}_1 \supset \ldots $; among others. See e.g.
Chapter 3 of \cite{SMCbook} for a review of common applications of SMC samplers,
and the sequence of target distributions arising in these applications.

One way to track the sequence $\pi_t$ would be to perform sequential importance
sampling: sample particles (random variates) from the initial distribution
$\nu(\dx)$, then reweight them sequentially according to weight function
$G_t(x)\eqdef\gamma_t(x) / \gamma_{t-1}(x)$ (for $t\geq 1$, and $G_0(x)\eqdef
\gamma_0(x)$). In most applications however, the weights degenerate quickly,
making this naive approach useless.

SMC samplers alternate such reweighting steps with resampling and Markov steps.
For the latter, we introduce Markov kernels $M_t(x_{t-1}, \dx_t)$ which leave
invariant the target distributions: $\pi_{t-1} M_t = \pi_{t-1}$ for $t \geq 1$.
It is easy to check that the sequence of \FK{} distributions (for $t=0,\ldots,
T$) defined as:
\begin{equation}
  \label{eq:FK}
  \Q_{t}(\dd x_{0:t}) = \frac{1}{L_t} \nu(\dx_0) \prod_{s=1}^t M_s(x_{s-1}, \dx_s)
  \prod_{s=0}^t G_s(x_s)
\end{equation}
is such that the marginal distribution of variable $X_t$ (with respect to
$\Q_t$) is $\pi_t$. We call \FK{} model the set of the components that define
this sequence of distributions, that is, the initial distribution $\nu$, the
kernels $M_t$, $t=1,\ldots,T$, and the functions $G_t$, $t=0,\ldots, T$. For
more background on \FK{} distributions, see e.g. \cite{DelMoral:book}.

Algorithm~\ref{alg:genericSMC} recalls the structure of an SMC sampler that
corresponds to this \FK{} model; and in particular which targets at each
iteration $t$ distribution $\pi_t$. It takes as inputs: $N$, the number of
particles, the considered \FK{} model, and the chosen resampling scheme
(function \texttt{resample}). Several resampling schemes exist. In this paper,
we focus for simplicity on multinomial resampling, which generates ancestor
variables $A_t^n$ independently from the categorical distribution that
generates label $m$ with probability $W_t^m$.

\begin{algo}{Generic SMC sampler\label{alg:genericSMC}}
  \KwIn{Integer $N\geq 1$, a \FK~model (initial distribution $\nu(\dx)$,
    functions $G_t$, Markov kernels $M_t$)} \For{$t\gets 0$ \KwTo $T$}{
    \If{$t=0$} { \For{$n=1$ \KwTo $N$} { $X_0^n\sim \nu(\dx_0)$ } } \Else{
      $A_t^{1:N} \sim \FuncSty{resample}(N, W_{t-1}^{1:N})$\; \For{$n=1$ \KwTo
        $N$} { $X_t^n \sim M_t(X_{t-1}^{A_t^n}, \dx_t)$ \; } } \For{$n \gets 1$
      \KwTo $N$} {$w_t^n \gets G_t(X_t^n)$ \; } \For{$n \gets 1$ \KwTo $N$} {
      $W_t^n \gets w_t^n / \sum_{m=1}^N w_t^m$ \; } }
\end{algo}

At any iteration $t$, quantity $\sum_{n=1}^N W_t^n \varphi(X_t^n)$ is an
estimate of the expectation $\pi_t(\varphi)$, for $\varphi:\setX\rightarrow \R$,
and quantity $L_t^N\eqdef \prod_{s=0}^t \ell_s^N$, where $\ell_s^N\eqdef
N^{-1}\sum_{n=1}^N w_s^n$, is an estimate of the normalising constant $L_t$.
These estimates are consistent and asymptotically normal (as $N\rightarrow
+\infty$) under general conditions.

\subsection{Note on the generality of Algorithm~\ref{alg:genericSMC}}
\label{sub:generality}

While generic, Algorithm~\ref{alg:genericSMC} is a simplified version of most
practical SMC samplers. In particular, we have stressed in the introduction the
importance of making SMC samplers adaptive; that is, to adapt both the
distributions $\pi_t$ and the Markov kernels $M_t$ on the fly. This means that
these quantities may depend on the current particle sample. For simplicity, our
notations do not account for this. We will see later that similar adaptation
tricks may be developed for waste-free SMC.

Another interesting generalisation is when the state space $\setX$ evolves over
time; in particular when its dimension increases. This happens for instance when
performing sequential inference on a model involving latent variables. The ideas
developed in this paper may easily be adapted to this scenario, as we shall see
in our third numerical example. For the sake of exposition, however, we focus on
the fixed state space case.


\subsection{Proposed algorithm: waste-free SMC}
\label{sec:proposed_algo}

The idea behind waste-free SMC is to resample only $M$ ancestors, with $M\ll N$.
Then each of these ancestors is moved $P-1$ times through Markov kernel $M_t$.
The resulting $M$ chains of length $P$ are then put together to form a new
particle sample, of size $N=MP$. See Algorithm~\ref{alg:wasteless}.

\begin{algo}{Waste-free SMC sampler\label{alg:wasteless}}
  \KwIn{Integers $M, P\geq 1$ (let $N\leftarrow MP$), a \FK~model (initial
    distribution $\nu(\dx)$, functions $G_t$, Markov kernels $M_t$)} \For{$t
    \gets 0$ \KwTo $T$}{ \If{$t=0$}{ \For{$n \gets 1$ \KwTo $N$} { $X_0^n \sim
        \nu(\dx_0)$ \; } } \Else{ $A_t^{1:M} \sim \FuncSty{resample}(M,
      W_{t-1}^{1:N})$\; \For{$m \gets 1$ \KwTo $M$}{ $\tilde{X}_t^{m,1} \gets
        X_{t-1}^{A_t^m}$\; \For{$p \gets 2$ \KwTo $P$}{ $\tilde{X}_t^{m,p} \gets
          M_t(\tilde{X}_t^{m,p-1}, \dx_t)$ \; } } Gather variables
      $\tilde{X}_t^{m,p}$ so as to form new sample $X_t^{1:N}$\; } \For{$n \gets
      1$ \KwTo $N$} { $w_t^n \gets G_t( X_t^n)$ \; } \For{$n \gets 1$ \KwTo $N$}
    { $W_t^n \gets w_t^n / \sum_{m=1}^N w_t^m$ \; } }
\end{algo}

The output of the algorithm may be used exactly in the same way as for standard
SMC: e.g. $\sum_{n=1}^N W_t^n \varphi(X_t^n)$ is an estimate of
$\pi_t(\varphi)$.

To get some intuition why waste-free SMC may be a valid and interesting
alternative to standard SMC, consider at time $t-1$ a fictitious particle
$X_{t-1}^n$, whose weight $W_{t-1}^n$ is large. In a standard SMC 
sampler, this particle is selected many times as an ancestor for the Markov
step. Then, if $M_t$ mixes poorly, its many children will be strongly
correlated.

On the other hand, in waste-less SMC, provided that $M\ll N$, the particle
$X_{t-1}^n$ is selected a much smaller number of times; each time it is
selected,  $P$ successive variables are introduced in the sample. By
construction, two such variables should be less correlated than if they had the
same ancestor (as in standard SMC); see Figure~\ref{fig:depend} for a
graphical representation of this idea.

\begin{figure}
  \centering
  \includegraphics[scale=0.45]{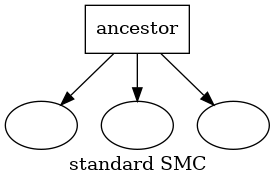}
  \includegraphics[scale=0.45]{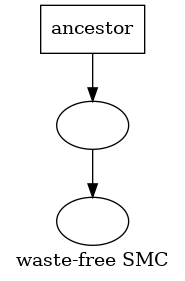}
  \caption{Pictorial representation of dependencies in standard SMC and
    waste-free SMC. Left: in standard SMC, an ancestor generates $3$ 
    children for the next iteration. Right: in waste-free SMC, the same ancestor
    generates itself, one child, and one grand-child. 
    Each arrow corresponds to one transition through kernel $M_t$. 
\label{fig:depend}
}
\end{figure}

Another insight is provided by chaos propagation theory \citep[][Chap.
8]{DelMoral:book}, which says that, when $M\ll N$, $M$ resampled particles
behave essentially like $M$ independent variables that follows the current
target distribution. Thus, in a certain asymptotic regime, we expect the
particle sample to behave like the variables of $M$ independent, stationary
Markov chains, of length $P$.

Before backing these intuitions with a proper analysis, we provide a last
insight regarding the underlying structure of waste-free SMC.

\subsection{\FK{} model associated with waste-free SMC}
\label{sec:fk_waste}

Algorithm~\ref{alg:wasteless} may be cast as a standard SMC sampler that
propagates and reweights particles that are Markov chains of length $P$. The
components of the corresponding \FK{} model may be defined as follows. Assume
$P\geq 1$ is fixed. Let $\setZ=\setX^P$, and, for $z\in\setZ$, denote component
$p$ as $z[p]$: $z=(z[1], \ldots, z[P])$. Then define the potential functions as:
\begin{equation}
  \label{eq:Gt_wf}
  G_t^{\wf}(z) \eqdef \frac{1}{P} \sum_{p=1}^P G_t(z[p])
\end{equation}
the initial distribution as: $\nu^\wf(\dz) \eqdef \prod_{p=1}^P \nu(\dz[p])$,
and the Markov kernels as:
\begin{multline}
  M_t^{\wf}(z_{t-1},\dz_t) \eqdef \\
  \left\{ \sum_{p=1}^P \frac{G_{t-1}(z_{t-1}[p])}{\sum_{q=1}^P
      G_{t-1}(z_{t-1}[q])} \times M_t(z_{t-1}[p], \dz_t[1])
  \right\}\prod_{p=2}^P M_t(z_t[p-1], \dz_t[p]).
\end{multline}

The following proposition explains how this waste-free \FK~model relates to the
initial \FK~model of Algorithm~\ref{alg:genericSMC}.

\begin{prop}
  \label{prop:wf_FK}
  The \FK~model associated with initial distribution $\nu^{\wf}$, Markov kernels
  $M_t^{\wf}$, and functions $G_t^\wf$, that is, the sequence of distributions:
  \[ \Q_t^{\wf}(dz_{0:t}) = \frac{1}{L_t^{\wf}} \nu^{\wf}(\dz_0) \prod_{s=1}^t
    M_s^{\wf}(z_{s-1}, \dz_s) \prod_{s=0}^t G_s^{\wf}(z_s)
  \]
  where $L_t^{\wf}$ is a normalising constant, is such that:
  \begin{itemize}
  \item $L_t^{\wf}=L_t$, the normalising constant of \eqref{eq:generic_target}
    and \eqref{eq:FK};
  \item $\Q_t^{\wf}(dz_t)$ is the distribution of a stationary Markov chain of
    size $P$ whose Markov kernel is $M_t$ (and thus whose initial distribution
    is $\pi_t$):
    \[ \Q_t^{\wf}(\dz_t) = \pi_t(\dz_t[1]) \prod_{p=2}^P
      M_t(z_t[p-1],\dz_t[p]). \]
  \end{itemize}
\end{prop}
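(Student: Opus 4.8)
The plan is to prove both assertions simultaneously by induction on $t$, working with the unnormalised \FK{} measure $\gamma_t^{\wf}(\dz_{0:t}) \eqdef \nu^{\wf}(\dz_0)\prod_{s=1}^t M_s^{\wf}(z_{s-1},\dz_s)\prod_{s=0}^t G_s^{\wf}(z_s)$, whose total mass is $L_t^{\wf}$ and whose normalised $z_t$-marginal is $\Q_t^{\wf}(\dz_t)$. The inductive claim I would carry is that the $z_t$-marginal of $\gamma_t^{\wf}$ equals $L_t$ times the stationary chain $\pi_t(\dz_t[1])\prod_{p=2}^P M_t(z_t[p-1],\dz_t[p])$. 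Both bullet points follow at once: integrating this identity over $\setZ$ gives $L_t^{\wf}=L_t$ (the chain integrates to one), and dividing by $L_t^{\wf}$ gives the asserted form of $\Q_t^{\wf}(\dz_t)$.

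The key preliminary observation, and the reason the algebra closes, is that the denominator inside $M_t^{\wf}$ is exactly $\sum_{q=1}^P G_{t-1}(z_{t-1}[q]) = P\,G_{t-1}^{\wf}(z_{t-1})$. Consequently, in the product defining $\gamma_t^{\wf}$, the potential $G_{t-1}^{\wf}(z_{t-1})$ carried at time $t-1$ cancels the normalising denominator of the kernel $M_t^{\wf}$ applied at time $t$. Carrying out this cancellation for every $s=1,\dots,t$ telescopes the whole expression: the potentials $G_0^{\wf},\dots,G_{t-1}^{\wf}$ all disappear, only the terminal potential $G_t^{\wf}(z_t)$ survives, and each transition collapses to the unnormalised selection-and-extend factor $\bigl(\sum_p G_{s-1}(z_{s-1}[p])\,M_s(z_{s-1}[p],\dz_s[1])\bigr)\prod_{p\geq 2}M_s(z_s[p-1],\dz_s[p])$. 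This rewriting is what makes the integration tractable; a naive one-step marginal recursion instead fails, because the un-cancelled denominator couples all $P$ coordinates of $z_{t-1}$ and turns the update into an intractable expectation of a ratio.

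With the telescoped form in hand I would integrate out $z_0,\dots,z_{t-1}$ one block at a time. At each stage the relevant block is a stationary $M$-chain, so every coordinate is marginally distributed according to the appropriate target; the selection term $\sum_p G(z_s[p])\,M(z_s[p],\cdot)$ then integrates using the kernel invariance together with the weight identity $G_t(x)\,\pi_{t-1}(\dx)=\tfrac{L_t}{L_{t-1}}\pi_t(\dx)$ (immediate from $\pi_t\propto\gamma_t\nu$ and $G_t=\gamma_t/\gamma_{t-1}$), reproducing the leading coordinate of the next block at the correct target, up to an explicit constant carrying a factor $P$ and a ratio of normalising constants. Propagating this from $s=0$ to $s=t-1$ yields the claimed product form for the $z_t$-marginal; taking total mass and collapsing the telescope of constants via $\pi_{t-1}(G_t)=L_t/L_{t-1}$ delivers $L_t^{\wf}=L_t$.

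The main obstacle I anticipate is the bookkeeping around the terminal potential $G_t^{\wf}(z_t)=\tfrac1P\sum_p G_t(z_t[p])$: because it averages over all $P$ coordinates rather than depending on the first alone, one must verify carefully that reweighting the stationary chain by this average recombines with the selection mechanism so as to leave exactly $\pi_t$ on the leading coordinate while preserving the $M_t$-chain structure on the trailing ones. Tracking precisely which target each coordinate follows through the telescoping (an index-shift that is easy to get wrong) and dealing with the degenerate base case $t=0$, where no kernel has yet been applied and the ``chain'' is a single coordinate, are the two places where I would expect to spend the most care.
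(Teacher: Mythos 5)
Your two structural observations are correct and are indeed the heart of the matter: the denominator of $M_t^{\wf}$ equals $P\,G_{t-1}^{\wf}(z_{t-1})$, so each intermediate potential cancels against the selection weights of the next kernel, and the identity $G_t(x)\,\pi_{t-1}(\dx)=(L_t/L_{t-1})\,\pi_t(\dx)$ is what closes the algebra. (The paper states this proposition without a written proof, so there is no official route to compare against.) However, the induction hypothesis you propose to carry is too strong, and is in fact false for $P\geq 2$: the $z_t$-marginal of your $\gamma_t^{\wf}$ is
\[
 L_{t-1}\, G_t^{\wf}(z_t)\,\pi_{t-1}(\dz_t[1])\prod_{p=2}^P M_t(z_t[p-1],\dz_t[p]),
\]
i.e.\ the stationary $M_t$-chain (whose invariant law is $\pi_{t-1}$, since the paper's convention is $\pi_{t-1}M_t=\pi_{t-1}$) tilted by the \emph{averaged} potential $\tfrac1P\sum_p G_t(z_t[p])$. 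That tilt is a mixture over which coordinate carries the reweighting; it has total mass $L_t/L_{t-1}$, but it is not of product/Markov form and no single coordinate of it is marginally $\pi_t$. A two-point state space with $P=2$, $G_0\equiv 1$ and $M_1(x,\cdot)=\pi_0$ already exhibits the discrepancy. So the step you flag as the ``main obstacle'' --- verifying that reweighting by $G_t^{\wf}$ recombines so as to leave exactly $\pi_t$ on the leading coordinate while preserving the chain structure --- is not delicate bookkeeping: it is the point where the argument fails, because the statement being verified is false.

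The repair is to run the induction on the \emph{predictive} marginal $\eta_t(\dz_t)\eqdef\int \nu^{\wf}(\dz_0)\prod_{s=0}^{t-1}G_s^{\wf}(z_s)\prod_{s=1}^{t}M_s^{\wf}(z_{s-1},\dz_s)$, for which the product form $\eta_t(\dz_t)=L_{t-1}\,\pi_{t-1}(\dz_t[1])\prod_{p=2}^P M_t(z_t[p-1],\dz_t[p])$ does hold. The induction closes because the next step never needs the joint law of the tilted chain: after your cancellation, one only integrates $\tfrac1P\sum_p G_t(z_t[p])M_{t+1}(z_t[p],\dz_{t+1}[1])$ against $\eta_t$, and by stationarity every coordinate $z_t[p]$ is marginally $\pi_{t-1}$, so each of the $P$ terms contributes the same measure $L_t\,(\pi_t M_{t+1})(\dz_{t+1}[1])=L_t\,\pi_t(\dz_{t+1}[1])$; only the coordinate-wise marginals and the total mass of the reweighted law are ever used. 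Taking the total mass of $\eta_t\, G_t^{\wf}$ gives $L_t^{\wf}=L_{t-1}\pi_{t-1}(G_t)=L_t$, i.e.\ the first bullet. Accordingly, the displayed identity in the second bullet should be read as a statement about the predictive marginal $\Q_{t-1}^{\wf}M_t^{\wf}$ (consistently with the predictive CLTs \eqref{eq:clt_pred_fixedP} and \eqref{eq:clt_pred}, which target $\pi_{t-1}$), not about the $G_t^{\wf}$-reweighted marginal of $\Q_t^{\wf}(\dd z_{0:t})$ as literally defined; your plan, which targets the latter, cannot be completed as written.
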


We can now interpret Algorithm~\ref{alg:wasteless} as an instance of Algorithm
\ref{alg:genericSMC} where the number of particles is $M$, and the underlying
\FK~model is defined as above. In particular, consider how Algorithm
\ref{alg:genericSMC} would operate if applied to that \FK~model. At time $t$, it
would select randomly an ancestor $z_{t-1}$ (a chain of length $P$), with
probability $\propto \sum_{p=1}^P G_{t-1}(z_{t-1}[p])$. Then, when kernel
$M_t^{\wf}$ is applied to this chain, one component would be selected randomly,
with probability $G_{t-1}(z_{t-1}[p]) / \sum_{q=1}^P G_{t-1}(z_{t-1}[q])$. Thus,
this particular component would be used as a starting point of the subsequent
chain with probability $\propto G_{t-1}(z_{t-1}[p])$. This is precisely what is
done in Algorithm~\ref{alg:wasteless}.

This interpretation of waste-free SMC as a standard SMC sampler makes it easy
to derive several of its properties; for instance, regarding its estimates of
the normalising constants.

\begin{prop}
  \label{prop:unbiased}
  At iteration $t\geq 0$ of Algorithm~\ref{alg:wasteless} , the quantity
  \begin{equation}
    \label{eq:norm_const_est}
    L_t^N \eqdef 
    \prod_{s=0}^t \ell_s^N,\quad 
    \mbox{where } \ell_s^N \eqdef  \frac 1 N \sum_{n=1}^N G_s(X_s^n)
  \end{equation}
  is an unbiased estimate of $L_t$, the normalising constant of target
  distribution $\pi_t$, as defined in \eqref{eq:generic_target}.
\end{prop}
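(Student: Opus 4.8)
The plan is to avoid any direct computation and instead reduce the statement to a standard property of SMC samplers, via the embedding provided by Proposition~\ref{prop:wf_FK}. That proposition recasts Algorithm~\ref{alg:wasteless} as the generic sampler of Algorithm~\ref{alg:genericSMC}, run with $M$ particles on the waste-free \FK{} model $(\nu^{\wf}, M_t^{\wf}, G_t^{\wf})$, whose normalising constant satisfies $L_t^{\wf}=L_t$. I would therefore first invoke the classical fact that, for the generic SMC sampler applied to any \FK{} model, the product-of-averages estimate of the normalising constant is unbiased, for every fixed number of particles (and in particular under multinomial resampling); see e.g.\ \cite{DelMoral:book}.

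Concretely, write $Z_s^{1:M}$ for the $M$ chain-valued particles of the embedded sampler and $\ell_s^{\wf,M}\eqdef M^{-1}\sum_{m=1}^M G_s^{\wf}(Z_s^m)$ for the corresponding incremental weight average. The black-box result gives $\E\big[\prod_{s=0}^t \ell_s^{\wf,M}\big]=L_t^{\wf}=L_t$. The second step is then purely a matter of bookkeeping: using \eqref{eq:Gt_wf} and the fact that the $MP$ components $Z_s^m[p]$ are exactly the $N=MP$ particles $X_s^n$ gathered in Algorithm~\ref{alg:wasteless},
\[
\ell_s^{\wf,M}=\frac{1}{M}\sum_{m=1}^M\frac{1}{P}\sum_{p=1}^P G_s(Z_s^m[p])=\frac{1}{N}\sum_{n=1}^N G_s(X_s^n)=\ell_s^N,
\]
hence $\prod_{s=0}^t\ell_s^{\wf,M}=L_t^N$ and the claim follows.

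The only points that require care, rather than any real difficulty, are: checking that the identification of the two samplers indeed holds at the initial step $t=0$, where Algorithm~\ref{alg:wasteless} and the embedded sampler both reduce to drawing $N$ i.i.d.\ variables from $\nu$ and where $\ell_0^{\wf,M}=\ell_0^N$ likewise; and making sure the resampling scheme used in the embedding (categorical selection of an ancestor chain with probability $\propto\sum_p G_{t-1}(z_{t-1}[p])$) is precisely the multinomial scheme for which the unbiasedness theorem applies. Once Proposition~\ref{prop:wf_FK} is granted, the substance of the result is carried entirely by the standard unbiasedness theorem, so I expect no genuine obstacle beyond this identification.
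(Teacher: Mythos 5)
Your proposal is correct and follows essentially the same route as the paper's own proof: rewrite $L_t^N$ as $\prod_{s=0}^t\{M^{-1}\sum_{m=1}^M(P^{-1}\sum_{p=1}^P G_s(z_s^m[p]))\}$, recognise it as the normalising constant estimate of the generic sampler applied to the waste-free \FK{} model of Proposition~\ref{prop:wf_FK}, and conclude by the classical unbiasedness result together with $L_t^{\wf}=L_t$. The additional checks you flag (the $t=0$ step and the identification of the resampling scheme) are sensible but do not change the substance of the argument.
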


This proposition is a small variation over the well known result of
\citep{DelMoral1996unbiased} that, in a standard SMC sampler, the estimate of 
the normalising constant estimate is unbiased. 

We can also use the interpretation of waste-free SMC as a standard SMC sampler
to derive asymptotic results.

\begin{prop}
  \label{prop:clt_fixed_p}
  For $P\geq 1$ fixed, and $\varphi:\setX \rightarrow \R$ measurable and
  bounded, the output of Algorithm~\ref{alg:wasteless} at time $t\geq 0$ is such
  that
  \begin{align}
    \label{eq:clt_pred_fixedP}
    \sqrt{N} \left( \frac 1 N \sum_{n=1}^N \varphi(X_t^n) - \pi_{t-1}(\varphi)\right)
    & \cvd \N \left( 0, \tilde{\mathcal{V}}_t^P(\varphi) \right) \\
    \label{eq:clt_filt_fixedP}
    \sqrt{N} \left(  \sum_{n=1}^N W_t^n \varphi(X_t^n) - \pi_t(\varphi) \right)
    & \cvd \N \left( 0, \mathcal{V}_t^P(\varphi) \right) 
  \end{align}
  as $M\rightarrow +\infty$, $N=MP$, where $\pi_{t-1}$ means $\nu$ in
  \eqref{eq:clt_pred_fixedP} when $t=0$, $\tilde{\V}_0^P(\varphi)\eqdef
  \Var_{\nu}(\varphi)$,
  \begin{align}
    \tilde{\mathcal{V}}_t^P(\varphi)
    & \eqdef \V_{t-1}^P(\bar{M}_t^P\varphi) + v_P(M_t,\varphi),
      \quad t\geq 1, \label{eq:asympt_var_fixedP}
    \\
    \mathcal{V}_t^P(\varphi)
    & \eqdef \tilde{\mathcal{V}}_t^P\left(\bar{G_t} (\varphi-\pi_t\varphi) \right),
      \quad t\geq 0, 
  \end{align}
  $\bar{G}_t \eqdef G_t/\ell_t$, $\bar{M}_t^P=P^{-1}\sum_{p=1}^PM_t^{p-1}$,
  \begin{equation*}
    v_P(M_t,\varphi) \eqdef \Var\left( \frac 1 {\sqrt{P}} \sum_{p=0}^{P-1}\varphi(Y_p)
    \right)
  \end{equation*}
  and $(Y_p)_{p\geq 0}$ stands for a stationary Markov chain with kernel $M_t$
  (i.e. $Y_0\sim\pi_t$).
\end{prop}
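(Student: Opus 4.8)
The plan is to exploit Proposition~\ref{prop:wf_FK} and the discussion after it, which identify Algorithm~\ref{alg:wasteless} with the generic sampler of Algorithm~\ref{alg:genericSMC} run with $M$ particles on the extended \FK{} model $(\nu^\wf, M_t^\wf, G_t^\wf)$ over $\setZ=\setX^P$. On $\setZ$, a standard SMC central limit theorem applies directly (e.g.\ the one for the generic sampler in \cite{DelMoral:book} or \cite{SMCbook}): for bounded $\psi:\setZ\to\R$, the unweighted estimator $M^{-1}\sum_m\psi(Z_t^m)$ and the weighted estimator $\sum_m W_t^{(m)}\psi(Z_t^m)$ are asymptotically normal at rate $\sqrt M$, centred at $\eta_t^\wf(\psi)$ and $\Q_t^\wf(\psi)$, with variances given by the usual SMC recursions. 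Since $P$ is fixed and $N=MP$, a $\sqrt M$--CLT with variance $\sigma^2$ is the same statement as a $\sqrt N$--CLT with variance $P\sigma^2$. The whole proof then reduces to (i) picking the right $\psi$, (ii) identifying the centring constants, and (iii) rewriting $P$ times the extended variance in the claimed closed form.

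For \eqref{eq:clt_pred_fixedP} I take $\psi=\bar\varphi$, with $\bar\varphi(z)\eqdef P^{-1}\sum_{p=1}^P\varphi(z[p])$ (mirroring the potential \eqref{eq:Gt_wf}). The gathering step of Algorithm~\ref{alg:wasteless} gives the exact identity $M^{-1}\sum_m\bar\varphi(Z_t^m)=N^{-1}\sum_n\varphi(X_t^n)$, so the two estimators literally coincide. Proposition~\ref{prop:wf_FK} shows every coordinate of a $\Q_t^\wf$--chain is marginally $\pi_t$; combined with $\eta_t^\wf=\Q_{t-1}^\wf M_t^\wf$ and the invariance $\pi_{t-1}M_t=\pi_{t-1}$, this yields $\eta_t^\wf(\bar\varphi)=\pi_{t-1}(\varphi)$, the centring in \eqref{eq:clt_pred_fixedP}. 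For \eqref{eq:clt_filt_fixedP} the same gathering identity rewrites $\sum_n W_t^n\varphi(X_t^n)$ as the extended weighted estimator $\sum_m W_t^{(m)}\Phi(Z_t^m)$, with $W_t^{(m)}\propto G_t^\wf(Z_t^m)$ and $\Phi(z)=\sum_p G_t(z[p])\varphi(z[p])\big/\sum_p G_t(z[p])$, whose limit is $\Q_t^\wf(\Phi)=\pi_t(\varphi)$.

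The crux is step (iii): showing that $P$ times the extended predictive variance of $\bar\varphi$ equals $\V_{t-1}^P(\bar M_t^P\varphi)+v_P(M_t,\varphi)$. I would obtain this by a two-fold law-of-total-variance argument tied to the chain structure of Algorithm~\ref{alg:wasteless}. Conditionally on the $M$ resampled ancestors, the $M$ chains are independent and the conditional mean of $\bar\varphi$ over a chain started at $x$ is exactly $\bar M_t^P\varphi(x)=P^{-1}\sum_{p=1}^P M_t^{p-1}\varphi(x)$, since coordinate $p$ is $M_t^{p-1}$ applied to the ancestor. The variance of the conditional means is, inductively, the filter variance at $t-1$ of $\bar M_t^P\varphi$ plus a resampling term $P\,\Var_{\pi_{t-1}}(\bar M_t^P\varphi)$ coming from drawing only $M=N/P$ ancestors; the averaged within-chain conditional variance, rescaled by the factor $P$ from the rate change, equals $v_P(M_t,\varphi)$ minus the same quantity $P\,\Var_{\pi_{t-1}}(\bar M_t^P\varphi)$, by total variance applied to the length-$P$ stationary chain whose coordinate sum has conditional mean $P\,\bar M_t^P\varphi$. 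The two corrections cancel, leaving precisely \eqref{eq:asympt_var_fixedP}. I expect this cancellation — that the extra variance from resampling few ancestors is exactly offset by the reduced within-chain fluctuation — to be the delicate point, and the place where the invariance $\pi_{t-1}M_t=\pi_{t-1}$ (so that chains started from $\pi_{t-1}$ are stationary) is used crucially; rigorously it is cleanest to match the above with the standard one-step SMC recursion on $\setZ$, whose selection term involves $M_t^\wf\bar\varphi$.

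Finally, \eqref{eq:clt_filt_fixedP} follows from the predictive statement by the classical Boltzmann--Gibbs (ratio) argument: centring the weighted average and linearising the self-normalisation shows its asymptotic variance is the predictive variance of the centred, reweighted function, giving $\mathcal{V}_t^P(\varphi)=\tilde{\mathcal{V}}_t^P\big(\bar G_t(\varphi-\pi_t\varphi)\big)$ with $\bar G_t=G_t/\ell_t$. The induction starts at $t=0$, where the $N$ variables $X_0^n$ are i.i.d.\ $\nu$, so the ordinary CLT gives $\tilde{\V}_0^P(\varphi)=\Var_\nu(\varphi)$; the recursion \eqref{eq:asympt_var_fixedP} then carries the claim to every $t$.
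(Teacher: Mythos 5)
Your proposal follows exactly the route the paper itself indicates: Proposition~\ref{prop:clt_fixed_p} is stated there without proof, with the remark that it ``amounts to applying known central limit theorems for SMC estimates to the waste-free \FK{} model'', which is precisely what you do (lifting to $\setZ=\setX^P$ with $M$ particles, converting the $\sqrt M$-rate to a $\sqrt N$-rate by the factor $P$, and identifying the variance recursion via conditioning on the ancestors). Your fleshed-out variance computation — including the law-of-total-variance bookkeeping that recombines the ancestor-resampling term with the within-chain term to give $v_P(M_t,\varphi)$ — is correct, so this is a valid completion of the same argument.
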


This proposition is stated without proof, as it amounts to applying known
central limit theorems \citep[see Chapter 11 of][and references
therein]{SMCbook} for SMC estimates to the waste-free \FK{} model mentioned
above. Notice how the asymptotic variances depend on $P$ in a non-trivial way.
This suggests that the fixed $P$ regime is not very convenient; in particular it
is not clear how to choose $P$ for optimal performance. If we take
$P\rightarrow +\infty$, we expect the first term of \eqref{eq:asympt_var_fixedP}
to go to zero, and the second term to converge to the asymptotic variance of
kernel $M_t$. This suggests, at the very least, that taking $P$ large may often
be reasonable. The next section studies the asymptotic behaviour of the
algorithm as $P\rightarrow +\infty$.

\section{Convergence as $P\rightarrow +\infty$}
\label{sec:theory}

\subsection{Assumptions}

This section is concerned with the behaviour of waste-free SMC in the
``long-chain'' regime, that is, when $P\rightarrow +\infty$, while $M$ is either
fixed or may grow with $P$ at some rate. We start by remarking that this regime
requires some assumption on the mixing of the Markov kernels $M_t$. Indeed,
assume that $M_t$ is the identity kernel:
$M_t(x_{t-1},\dx_t)=\delta_{x_{t-1}}(\dx_t)$. In that case, at time $1$, one
has:
\begin{equation*}
  \frac 1 N \sum_{n=1}^N \varphi(X_1^n) = \frac 1 M \sum_{m=1}^M \varphi(X_0^{A_0^m})
\end{equation*}
since the $P$ particles $\tilde{X}_t^{m,p}$ are identical for a given $m$. The
variance of this quantity should be $\bigO(M^{-1})$, and cannot go to zero if
$M$ is kept fixed.
 
We thus consider the following assumptions.
\begin{assumption}[M]
  The Markov kernels $M_t$ are uniformly ergodic, that is, there exist constants
  $C_t \geq 0$ and $\rho_t\in[0,1[$ such that,
  \begin{equation*}
    \tv{M_t^k(x_{t-1}, \dx_t) - \pi_{t-1}(\dx_t) }
    \leq C_t \rho_t^k, \quad \forall x_{t-1}\in\setX, k\geq 1.
  \end{equation*}
\end{assumption}

\begin{assumption}[G]
  The functions $G_t$ are upper-bounded, $G_t(x)\leq D_t$ for some $D_t>0$ and
  all $x\in\setX$.
\end{assumption}

Ergodic Markov kernels in an SMC sampler was also considered in
\cite{Beskos2014} in order to study the behaviour of the algorithm as the
dimension of the state space gets high.

\subsection{Non-asymptotic bound}
We first state a non-asymptotic result.

\begin{prop}
  \label{thm:L2}
  Under Assumptions (M) and (G), there exist constants $c_t$ and $c_t'$ such
  that the following inequalities apply to the output of iteration $t \geq 0$ of
  Algorithm~\ref{alg:wasteless}, for any $M, P\geq 1$, and any bounded function
  $\varphi:\setX\rightarrow \R$:
  \begin{align}
    \E\left\{\frac 1 N \sum_{n=1}^N \varphi(X_t^n)
    - \pi_{t-1} (\varphi)\right\}^2
    & \leq c_t \frac{\infnorm{\varphi}^2}{N} \label{eq:L2_pred}\\ 
    \E\left\{\sum_{n=1}^N W_t^n \varphi(X_t^n)
    - \pi_t(\varphi)\right\}^2 
    & \leq c'_t \frac{\infnorm{\varphi}^2}{N} \label{eq:L2_filt}
  \end{align}
  where $\pi_{t-1}$ means $\nu$ in \eqref{eq:L2_pred} at time $t=0$.
\end{prop}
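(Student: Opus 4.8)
The plan is to prove both inequalities simultaneously by induction on $t$, establishing at each step the predictive bound \eqref{eq:L2_pred} first and then deducing the self-normalised bound \eqref{eq:L2_filt} from it. The base case $t=0$ is immediate: the $X_0^n$ are i.i.d.\ from $\nu$, so the left-hand side of \eqref{eq:L2_pred} equals $\Var_\nu(\varphi)/N \leq \infnorm{\varphi}^2/N$. Throughout, I write $\eta_t^N(\varphi)\eqdef N^{-1}\sum_{n=1}^N\varphi(X_t^n)$ and $\pi_t^N(\varphi)\eqdef\sum_{n=1}^N W_t^n\varphi(X_t^n)$. The crucial point---and the reason standard SMC bounds cannot simply be quoted for the waste-free \FK{} model with $M$ particles---is that those would only yield a rate $\bigO(1/M)$, whereas we need $\bigO(1/N)=\bigO(1/MP)$. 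The extra factor $P$ must be extracted from the mixing Assumption~(M).

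For the inductive step of \eqref{eq:L2_pred}, I would condition on two nested $\sigma$-fields: $\mathcal{F}_{t-1}$, generated by the particle system up to time $t-1$, and $\mathcal{G}_t\eqdef\sigma(\mathcal{F}_{t-1},A_t^{1:M})$, which additionally fixes the resampled starting points. Writing the chain average $S_t^m\eqdef P^{-1}\sum_{p=1}^P\varphi(\tilde X_t^{m,p})$, one has $\eta_t^N(\varphi)=M^{-1}\sum_m S_t^m$, and conditionally on $\mathcal{G}_t$ the $S_t^m$ are independent with mean $\bar M_t^P\varphi(X_{t-1}^{A_t^m})$. I would then split
\[
\eta_t^N(\varphi)-\pi_{t-1}(\varphi)=T_1+T_2+T_3,
\]
with $T_1\eqdef\eta_t^N(\varphi)-M^{-1}\sum_m\bar M_t^P\varphi(X_{t-1}^{A_t^m})$, $T_2\eqdef M^{-1}\sum_m\bar M_t^P\varphi(X_{t-1}^{A_t^m})-\pi_{t-1}^N(\bar M_t^P\varphi)$, and $T_3\eqdef\pi_{t-1}^N(\bar M_t^P\varphi)-\pi_{t-1}(\varphi)$. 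By successive conditioning these three terms are mutually orthogonal in $L^2$, so it suffices to bound each squared term. For $T_1$, Assumption~(M) makes the within-chain covariances decay geometrically uniformly in the (arbitrary) starting point, so $\Var(S_t^m\mid\mathcal{G}_t)=\bigO(\infnorm{\varphi}^2/P)$, whence $\E[T_1^2]=\bigO(\infnorm{\varphi}^2/MP)=\bigO(\infnorm{\varphi}^2/N)$. For $T_3$, invariance $\pi_{t-1}M_t=\pi_{t-1}$ gives $\pi_{t-1}(\bar M_t^P\varphi)=\pi_{t-1}(\varphi)$, so $T_3$ is exactly the time-$(t-1)$ filtering error of the bounded function $\bar M_t^P\varphi$ (with $\infnorm{\bar M_t^P\varphi}\leq\infnorm{\varphi}$), controlled by the induction hypothesis \eqref{eq:L2_filt}.

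The decisive term is $T_2$, the resampling fluctuation. A crude bound gives only $\bigO(1/M)$, but here Assumption~(M) enters a second time: since $\abs{\bar M_t^P\varphi(x)-\pi_{t-1}(\varphi)}\leq P^{-1}\sum_{p=1}^P 2C_t\rho_t^{p-1}\infnorm{\varphi}\leq 2C_t\infnorm{\varphi}/\{P(1-\rho_t)\}$, the centred function driving $T_2$ has sup-norm $\bigO(\infnorm{\varphi}/P)$. Conditioning on $\mathcal{F}_{t-1}$, the multinomial draws are i.i.d., so $\E[T_2^2]\leq M^{-1}\infnorm{\bar M_t^P\varphi-\pi_{t-1}(\varphi)}^2=\bigO(\infnorm{\varphi}^2/MP^2)\leq\bigO(\infnorm{\varphi}^2/N)$ because $P\geq 1$. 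Summing the three contributions yields \eqref{eq:L2_pred} at time $t$ with an explicit constant $c_t$ built from $c'_{t-1}$, $C_t$ and $\rho_t$.

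Finally, to pass from \eqref{eq:L2_pred} to \eqref{eq:L2_filt}, I would use the standard self-normalisation identity
\[
\pi_t^N(\varphi)-\pi_t(\varphi)=\frac{\eta_t^N\!\big(G_t(\varphi-\pi_t\varphi)\big)}{\eta_t^N(G_t)},
\]
noting that $\pi_{t-1}\big(G_t(\varphi-\pi_t\varphi)\big)=0$, so the numerator is a predictive error of a function bounded by $2D_t\infnorm{\varphi}$ (Assumption~(G)), hence $\bigO(\infnorm{\varphi}^2/N)$ by the just-proved \eqref{eq:L2_pred}. To handle the random denominator I would split on the event $\{\eta_t^N(G_t)\geq\ell_t/2\}$, where $\ell_t=\pi_{t-1}(G_t)>0$: on this event the ratio is bounded by $2/\ell_t$ times the numerator, and on its complement I use the deterministic bound $\abs{\pi_t^N(\varphi)-\pi_t(\varphi)}\leq 2\infnorm{\varphi}$ together with $\P(\eta_t^N(G_t)<\ell_t/2)=\bigO(1/N)$, which follows from Chebyshev and \eqref{eq:L2_pred} applied to $G_t$. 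The main obstacle is the $T_2$ estimate: recognising that the resampling error must be measured against the nearly-constant function $\bar M_t^P\varphi$, and quantifying its $\bigO(1/P)$ deviation via the ergodic decay, is precisely what converts the naive $\bigO(1/M)$ rate into the desired $\bigO(1/N)$.
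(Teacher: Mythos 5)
Your proof is correct, and it follows the same induction skeleton as the paper: base case at $t=0$, the standard self-normalisation argument to pass from \eqref{eq:L2_pred} to \eqref{eq:L2_filt} (which you spell out where the paper only cites Section 11.2.2 of \cite{SMCbook}), and, for the inductive step, control of the term $\pi_{t-1}^N(\bar M_t^P\varphi)-\pi_{t-1}(\varphi)$ by applying the hypothesis \eqref{eq:L2_filt} to $\bar M_t^P\varphi$ exactly as in the paper. Where you genuinely diverge is in the treatment of the remaining fluctuation. The paper lumps your $T_1+T_2$ into a single conditional variance $\tfrac1M\CVar{P^{-1}\sum_p\bar\varphi(\tilde X_t^{1,p})}{\mathcal{F}_{t-1}}$ and bounds it by $C_2\infnorm{\varphi}^2/P$ via Lemma~\ref{lem:var_unif_ergo}, whose proof is a maximal-coupling argument with a stationary chain (a construction the authors then reuse in the proof of Theorem~\ref{thm:clt_longchain}). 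You instead split off the resampling fluctuation $T_2$ and kill it by noting $\infnorm{\bar M_t^P\varphi-\pi_{t-1}(\varphi)}=\bigO(\infnorm{\varphi}/P)$ --- a clean observation that gives the even better rate $\bigO(1/MP^2)$ for that piece --- and you handle $T_1$ by geometric decay of the lag-$q$ covariances uniformly in the starting point. That last claim is the one place you should add a line of justification, since the paper's Lemma~\ref{lem:unif_ergo_finite_variance} only covers the stationary case: for $p<q$, condition on $X_p$ to write $\Cov(\varphi(X_p),\varphi(X_q))=\Cov\bigl(\varphi(X_p),M_t^{q-p}\varphi(X_p)\bigr)$ and use $\infnorm{M_t^{q-p}\varphi-\pi_{t-1}(\varphi)}\leq 2C_t\rho_t^{q-p}\infnorm{\varphi}$, which gives the bound $4C_t\rho_t^{q-p}\infnorm{\varphi}^2$ independently of the law of $X_p$. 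With that line added, your argument is complete, arguably more elementary than the coupling route, and yields the same conclusion.
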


The constants $c_t$ and $c_t'$ are not sharp. However, this result remains
interesting, in that it shows that waste-free SMC is consistent (in $L^2$ norm,
and thus in probability) whenever $N=MP\rightarrow +\infty$, that is, whenever
$P\rightarrow +\infty$, or $M\rightarrow +\infty$, or both simultaneously,
possibly at different rates.

\subsection{Central limit theorems}
\label{sub:clts}
We now state a central limit theorem for the long chain regime.

\begin{thm}
  \label{thm:clt_longchain}
  Under Assumptions (M) and (G), for $M=M(P)=\bigO(P^\alpha)$, $\alpha\geq 0$
  (i.e. $M$ is either fixed or grows with $P$ at a certain rate) and
  $\varphi:\setX\rightarrow \R$ measurable and bounded, one has at any time
  $t\geq 0$
  \begin{align}
    \label{eq:clt_pred}
    \sqrt{N} \left( \frac 1 N \sum_{n=1}^N \varphi(X_t^n) - \pi_{t-1}(\varphi)\right)
    & \cvd \N \left( 0, \tilde{\mathcal{V}}_t(\varphi) \right) \\
    \label{eq:clt_filt}
    \sqrt{N} \left(  \sum_{n=1}^N W_t^n \varphi(X_t^n) - \pi_t(\varphi) \right)
    & \cvd \N \left( 0, \mathcal{V}_t(\varphi) \right) 
  \end{align}
  as $P\rightarrow \infty$ (or equivalently as $N\rightarrow \infty$, since
  $N=MP$), where $\pi_{t-1}$ in \eqref{eq:clt_pred} means $\nu$ at time $t=0$,
  $\tilde{\V}_0(\varphi)=\Var_\nu(\varphi)$,
  \begin{align}
    \label{eq:asymp_var_longchain}
    \tilde{\V}_t(\varphi)
    & \eqdef v_{\infty}(M_t, \varphi) \eqdef \Var\left( \varphi(Y_0) \right)
      + 2 \sum_{p=1}^\infty \Cov\left( \varphi(Y_0), \varphi(Y_p) \right),
      \quad t\geq 1,
    \\
    \V_t(\varphi)
    & \eqdef \tilde{\V}_t\left(\bar{G}_t (\varphi-\pi_t\varphi)\right), 
      \quad t\geq 0,
  \end{align}
  and $(Y_p)_{p\geq 0}$ stands for a stationary Markov chain with kernel $M_t$
  (hence $Y_0\sim \pi_t$).
\end{thm}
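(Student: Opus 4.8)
The plan is to establish the ``predictive'' CLT \eqref{eq:clt_pred} first and to deduce the self-normalised version \eqref{eq:clt_filt} from it by a routine argument: writing $\sum_n W_t^n\varphi(X_t^n)-\pi_t(\varphi)$ as the ratio of $\frac1N\sum_n G_t(X_t^n)(\varphi-\pi_t\varphi)(X_t^n)$ to $\frac1N\sum_n G_t(X_t^n)$, applying \eqref{eq:clt_pred} to the bounded function $G_t(\varphi-\pi_t\varphi)$ (whose $\pi_{t-1}$-integral vanishes), and using that the denominator converges in probability to $\ell_t=\pi_{t-1}(G_t)$ by Proposition~\ref{thm:L2}; Slutsky's lemma together with the fact that $\tilde{\V}_t$ is quadratic in its argument then yields exactly $\V_t(\varphi)=\tilde{\V}_t(\bar G_t(\varphi-\pi_t\varphi))$. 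So I concentrate on \eqref{eq:clt_pred}, which I prove directly at each time $t$ (the base case $t=0$ being the ordinary i.i.d.\ CLT for $X_0^{1:N}\sim\nu$). The crucial structural observation is that, conditionally on $\mathcal{F}_{t-1}$, the $M$ chains produced at iteration $t$ are independent, each being a Markov chain with kernel $M_t$ started from a draw of the weighted empirical measure $\hat\pi_{t-1}^N\eqdef\sum_n W_{t-1}^n\delta_{X_{t-1}^n}$, which is consistent for the invariant law $\pi_{t-1}$ of $M_t$ (Assumption (M), Proposition~\ref{thm:L2}).

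Next I would linearise the chains via the Poisson equation. Under Assumption (M) the equation $\hat\varphi-M_t\hat\varphi=\varphi-\pi_{t-1}(\varphi)$ admits a bounded solution $\hat\varphi$ (given by the absolutely convergent series $\sum_{k\geq0}(M_t^k\varphi-\pi_{t-1}\varphi)$), which I normalise so that $\pi_{t-1}(\hat\varphi)=0$. Setting $\Delta_t^{m,p}\eqdef\hat\varphi(\tilde X_t^{m,p})-M_t\hat\varphi(\tilde X_t^{m,p-1})$, a telescoping identity gives, for each chain, $\sum_{p=1}^P(\varphi(\tilde X_t^{m,p})-\pi_{t-1}\varphi)=\sum_{p=2}^P\Delta_t^{m,p}+\hat\varphi(\tilde X_t^{m,1})-M_t\hat\varphi(\tilde X_t^{m,P})$, so that
\[ \sqrt N\Big(\tfrac1N\sum_{n=1}^N\varphi(X_t^n)-\pi_{t-1}(\varphi)\Big)=\frac1{\sqrt N}\sum_{m=1}^M\sum_{p=2}^P\Delta_t^{m,p}+\frac1{\sqrt N}\sum_{m=1}^M\big(\hat\varphi(\tilde X_t^{m,1})-M_t\hat\varphi(\tilde X_t^{m,P})\big). \]
The increments $\Delta_t^{m,p}$ are martingale differences for the filtration that reveals the chains one after another (using the Markov property within a chain and conditional independence across chains), so the first sum is a martingale array and the second is a boundary remainder.

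I would then show the boundary remainder is negligible. Because $\pi_{t-1}(\hat\varphi)=\pi_{t-1}(M_t\hat\varphi)=0$, each boundary average equals $\sqrt{M/P}$ times an average of $M$ conditionally i.i.d.\ bounded variables whose conditional mean ($\hat\pi_{t-1}^N(\hat\varphi)$, resp.\ $\hat\pi_{t-1}^N M_t^{P}(\hat\varphi)$) tends to $0$; combining the conditional fluctuation $\bigO(M^{-1/2})$ with the consistency bound of Proposition~\ref{thm:L2} shows each boundary term is $\bigO_{\P}(P^{-1/2})\to0$, uniformly in how $M=M(P)$ grows.

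Finally I apply a martingale-array CLT to $N^{-1/2}\sum_{m,p}\Delta_t^{m,p}$. The conditional Lindeberg condition is immediate, since $|\Delta_t^{m,p}|\leq2\infnorm{\hat\varphi}$ forces $\max_{m,p}|\Delta_t^{m,p}|/\sqrt N\to0$. The sum of conditional variances equals $\frac1N\sum_{m=1}^M\sum_{p=2}^P g_t(\tilde X_t^{m,p-1})$ with $g_t\eqdef M_t(\hat\varphi^2)-(M_t\hat\varphi)^2$ bounded and $\pi_{t-1}(g_t)=\vast$; this is again a predictive average of a fixed bounded function, so Proposition~\ref{thm:L2} gives convergence in probability to the constant $\pi_{t-1}(g_t)=\vast$. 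The array CLT then yields \eqref{eq:clt_pred} with $\tilde{\V}_t(\varphi)=\vast$, as in \eqref{eq:asymp_var_longchain}. The main obstacle is precisely to handle the coupled double limit $N=MP$ with $M=\bigO(P^\alpha)$ for arbitrary $\alpha\geq0$ in a single stroke: iterated limits or a case split on $\alpha$ would be awkward, whereas casting everything as one martingale difference array indexed by all $N$ particles reduces the problem to the condition $N\to\infty$, with the delicate variance step handed off (non-trivially) to the $L^2$ bound of Proposition~\ref{thm:L2} applied to $g_t$. Establishing the bounded Poisson solution and controlling the boundary terms are the remaining points requiring care.
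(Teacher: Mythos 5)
Your proposal is correct, but it reaches \eqref{eq:clt_pred} by a genuinely different route than the paper. The paper proceeds by coupling the $\setX^M$-valued chain $Z_p=(\tilde X_t^{1,p},\ldots,\tilde X_t^{M,p})$ to a stationary version at a time $R=\bigO(P^\beta)$, $\beta\in(0,1/2)$, showing the three discrepancy terms vanish (using Proposition~\ref{thm:L2} for the initial segment), and then invoking a bespoke CLT for $M(P)$ independent \emph{stationary} uniformly ergodic chains (Lemma~\ref{lem:clt_Mchains}), which itself requires a careful characteristic-function argument to let $M$ grow with $P$. You instead linearise via the Poisson equation $\hat\varphi-M_t\hat\varphi=\varphi-\pi_{t-1}\varphi$ (whose solution is bounded under Assumption (M)), turning the statistic into a single martingale difference array of size $N=MP$ plus boundary terms; the array CLT with \emph{constant} limiting variance then needs only the conditional Lindeberg condition (trivial, since $|\Delta_t^{m,p}|\leq 2\infnorm{\hat\varphi}$) and convergence in probability of the summed conditional variances, which you correctly reduce to the $L^2$ consistency of a predictive average of the bounded function $g_t=M_t(\hat\varphi^2)-(M_t\hat\varphi)^2$ via Proposition~\ref{thm:L2}, together with the classical identity $\pi_{t-1}(g_t)=\vast$. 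Your handling of the boundary terms is also sound: the crude bound $\bigO(\sqrt{M/P})$ would fail for $\alpha\geq 1$, but splitting each into conditional mean (controlled by \eqref{eq:L2_filt} at $t-1$, resp.\ by uniform ergodicity) plus an $\bigO_\P(M^{-1/2})$ conditional fluctuation gives $\bigO_\P(P^{-1/2})$ uniformly in $\alpha$. What each approach buys: yours absorbs the awkward double limit $M=\bigO(P^\alpha)$ into the single index $N\to\infty$ and dispenses with both the coupling and the growing-$M$ stationary-chain CLT, at the price of importing a triangular-array martingale CLT (Hall--Heyde/McLeish; note that the absence of row-nesting is harmless precisely because the limit variance is deterministic) and the Poisson-equation machinery; the paper's argument is more self-contained probabilistically and recycles the coupling already built for Proposition~\ref{thm:L2}. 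Both proofs deduce \eqref{eq:clt_filt} from \eqref{eq:clt_pred} by the same ratio/Slutsky argument, and both ultimately rely on Proposition~\ref{thm:L2} to kill the influence of the non-stationary initialisation. The only points you should write out in full are the existence and boundedness of $\hat\varphi$, the verification that $\pi_{t-1}(g_t)$ equals the autocovariance series in \eqref{eq:asymp_var_longchain}, and the precise filtration (chains revealed lexicographically, with $\mathcal{F}_{t-1}$ and the ancestor indices included at the start) making the $\Delta_t^{m,p}$ martingale increments; none of these presents a difficulty.
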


The most striking feature of the asymptotic variances above is that they depend
only on the current time step $t$; in standard CLTs for SMC algorithms, these
quantities are a sum of terms depending on all the previous time steps. More
precisely, $v_\infty(M_t,\varphi)$ is the asymptotic variance of an average
$P^{-1}\sum_{p=1}^P \varphi(Y_p)$ obtained from a single stationary Markov chain
with kernel $M_t$. Equation \eqref{eq:asymp_var_longchain} shows that the $N$
particles $X_t^n$ behave like $M$ independent, `long' Markov chains. This simple
interpretation will make it possible to construct estimates of the asymptotic
variances above; see Section~\ref{sub:variance_est}. We also note that these
asymptotic variances do not depend on $M$ (when $M$ is fixed), or its growth
rate (when $M=\bigO(P^\alpha)$, $\alpha>0$). This suggests that the performance
of the algorithm should depend weakly on the actual value of $M$, provided $M\ll
N$.

We now consider a similar result for the normalising constant estimates that may
be obtained from Algorithm~\ref{alg:wasteless}.
\begin{thm}
  \label{thm:clt_norm_cst}
  Under Assumptions (M) and (G), for $M=\bigO(P^\alpha)$, $\alpha\in[0,1)$ (i.e.
  either $M$ is fixed, or $M$ grows sub-linearly with $P$), and
  $\varphi:\setX\rightarrow \R$ measurable and bounded, one has at time $t\geq
  0$:
  \begin{equation}
    \label{eq:clt_norm_cst}
    \sqrt{N} \left( \log L_t^N - \log L_t \right) 
    \cvd \N \left( 0, \sum_{s=0}^t v_\infty(M_s,\bar{G}_s) \right)
  \end{equation}
  as $P\rightarrow \infty$ (or equivalently as $N\rightarrow \infty$ since
  $N=MP$).
\end{thm}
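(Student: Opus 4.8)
The plan is to build directly on Theorem \ref{thm:clt_longchain}, since the normalising constant estimate factorises as $\log L_t^N = \sum_{s=0}^t \log \ell_s^N$, where $\ell_s^N = N^{-1}\sum_{n=1}^N G_s(X_s^n)$. First I would write the telescoping decomposition $\log L_t^N - \log L_t = \sum_{s=0}^t \log(\ell_s^N / \ell_s)$, recalling that $\ell_s = \pi_{s-1}(G_s)$ is the true incremental normalising constant (with the convention that $\pi_{-1}=\nu$). Each term $\ell_s^N / \ell_s$ is a ratio of the form $N^{-1}\sum_n \varphi(X_s^n)$ with $\varphi = \bar G_s = G_s/\ell_s$, so by \eqref{eq:clt_pred} applied to the \emph{predictive} estimate (the unweighted average, not the weighted one), each $\sqrt N(\ell_s^N/\ell_s - 1)$ is asymptotically normal with variance $\tilde{\V}_s(\bar G_s) = v_\infty(M_s, \bar G_s)$ for $s\geq 1$, and $\Var_\nu(\bar G_0)$ at $s=0$, which should coincide with $v_\infty(M_0,\bar G_0)$ under the stated conventions.

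The second step is to pass from the joint behaviour of the ratios $\ell_s^N/\ell_s$ to the sum of their logarithms. Here I would use the delta method: since each $\ell_s^N/\ell_s \to 1$ and $\log(1+u) \approx u$ near $u=0$, one has $\sqrt N\,\log(\ell_s^N/\ell_s) = \sqrt N(\ell_s^N/\ell_s - 1) + o_P(1)$. Thus $\sqrt N(\log L_t^N - \log L_t)$ has the same limit as $\sum_{s=0}^t \sqrt N(\ell_s^N/\ell_s - 1)$. To identify the limiting variance as the \emph{sum} $\sum_{s=0}^t v_\infty(M_s,\bar G_s)$ with no cross terms, I would need the \emph{asymptotic independence} of the fluctuations at distinct time steps. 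This is the crux of the argument, and I expect it to be the main obstacle.

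The cleanest route to asymptotic independence is to establish a joint central limit theorem for the vector $\bigl(\sqrt N(\ell_s^N/\ell_s - 1)\bigr)_{s=0}^t$ rather than marginal CLTs step by step. I would set up a martingale-type or triangular-array decomposition in which the fluctuation introduced at step $s$ — arising from resampling $M$ ancestors and propagating them through the length-$P$ chains — is conditionally centred given the filtration $\mathcal{F}_{s-1}$ up to the previous step. The key structural fact, already emphasised after Theorem \ref{thm:clt_longchain}, is that in the long-chain regime the $N$ particles behave like $M$ independent stationary chains, so the dominant noise at step $s$ is the within-chain averaging noise of order $v_\infty(M_s,\bar G_s)/N$, and this noise is conditionally independent of the past given the (asymptotically exact) current target. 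One then checks a conditional Lindeberg condition (using boundedness of $\bar G_s$, guaranteed by Assumption (G) together with the lower bound on $\ell_s$, and the uniform ergodicity of Assumption (M) to control the variance of long-chain averages) and that the conditional covariances across steps vanish, yielding a diagonal limiting covariance matrix. The restriction $\alpha\in[0,1)$ (strictly sub-linear growth of $M$) presumably enters precisely here, to ensure that the resampling-induced cross-correlations between consecutive steps are negligible relative to the within-chain variance; I would expect the more delicate part of the proof to be verifying that these cross terms are genuinely $o(1)$ under $M=\bigO(P^\alpha)$ with $\alpha<1$, as opposed to the weaker condition $\alpha\geq 0$ allowed in Theorem \ref{thm:clt_longchain}.
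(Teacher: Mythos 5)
Your decomposition (telescoping the log, delta method on each ratio, and reducing everything to the asymptotic independence of the per-step fluctuations) is exactly the paper's starting point, and you correctly identify that independence as the crux. Where you diverge is in how it is established. You propose a martingale/triangular-array CLT with a conditional Lindeberg condition and vanishing conditional cross-covariances; the paper instead proves a \emph{conditional} CLT, namely that $\sqrt{N}\left(\log \ell_t^N - \log \ell_t\right) \mid \mathcal{F}_{t-1} \Rightarrow \N\left(0, v_\infty(M_t,\bar{G}_t)\right)$ in the strong sense of Definition~\ref{def:convg_cond_dist} (convergence conditionally on \emph{every} sequence of past events, following Sweeting), and then gets joint convergence to a product law via Lemma~\ref{lem:conv_prod} and induction on $t$. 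The two routes encode the same idea --- the step-$t$ noise forgets the past --- but the paper's formulation sidesteps a difficulty your sketch glosses over: $\E\left[\ell_t^N \mid \mathcal{F}_{t-1}\right] \neq \ell_t$ (it equals a weighted particle average of $P^{-1}\sum_{p=0}^{P-1} M_t^p G_t$), so your increments are not conditionally centred as written and would require a further decomposition before any martingale CLT applies.

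One concrete correction: the place where $\alpha<1$ actually bites is not the cross-correlation between steps per se, but the burn-in term of the coupling decomposition \eqref{eq:decomp_clt}. Conditionally on an arbitrary event $B_{t-1}\in\mathcal{F}_{t-1}$ (possibly of vanishing probability), the unconditional $L^2$ bound of order $R/\sqrt{MP}$ used in the proof of Theorem~\ref{thm:clt_longchain} is no longer available, and one must fall back on the deterministic bound of order $R\sqrt{M/P}\,\infnorm{\varphi}$; making this vanish while $R\to\infty$ forces $M\ll P$, i.e. $M=\bigO(P^\alpha)$ with $\alpha<1$ (equivalently $M\ll N^{1/2}$), whereas Theorem~\ref{thm:clt_longchain} needs only $M\ll N$. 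If you pursue the martingale route you will meet the same constraint when verifying that the conditional variance of the step-$t$ increment converges in probability to the deterministic limit $v_\infty(M_t,\bar{G}_t)$ uniformly enough over the conditioning.
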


The theorem above puts a stronger constraint on $M$; i.e. it requires $M \ll P$,
and thus $M\ll N^{1/2}$ (while Theorem~\ref{thm:clt_longchain} requires only
$M\ll N$).

Note that
\[
  \log L_t^N - \log L_t = \sum_{s=0}^t \left(\log \ell_s^N - \log
    \ell_s\right),\quad \mbox{where } \ell_s^N = \frac 1 N \sum_{n=1}^N
  G_s(X_s^n),
\]
and we could already deduce from \eqref{eq:clt_pred} and the delta-method that
\[
  \sqrt{N} \left (\log \ell_s^N - \log \ell_s \right) \cvd \N\left( 0,
    v_\infty(M_s, \bar{G}_s) \right).
\]

Thus, \eqref{eq:clt_norm_cst} suggests that the error terms in this
decomposition are nearly independent. Again, we shall use this interpretation
to derive an estimate of the asymptotic variance of $L_t^N$.

\subsection{Comparing the asymptotic variances of standard and waste-free SMC}%
\label{sub:formal_comp}

In this sub-section, we use the previous results to  compare formally the
performance of standard SMC and waste-free SMC in an artificial example. 

Let $A_t$, $t=0, 1, \ldots$ be a sequence of subsets of $\setX$ such that $A_0
\supset A_1 \supset \ldots$ and $\nu(A_t) = r^t$ for some $r < 1$, and some
initial distribution $\nu$. Consider the \FK~distributions such that $G_t(x_t)
= \ind_{A_t}(x_t)$ and $M_t=K_t^k$, i.e. the $k-$fold kernel such that  $K_t(x,
B) = (1-p) \ind_B(x) + p  \pi_{t-1}(B)$ for some $0<p<1$.  (In words, with
probability $p$, do not move, with probability $1-p$, sample exactly from the
current target.)

A standard SMC sampler applied to this problem will fulfil a CLT of the form: 
\[\sqrt N \left(\sum_{n=1}^N W_t^n \varphi(X_t^n) - \pi_t(\varphi)\right) 
    \Rightarrow \mathcal{N}\left(0, \wstd{t}{k}(\varphi)\right); 
\]
see \eqref{eq:vtphi_smcbook} in the proof of Proposition \ref{prop:comparison}
for an expression for $\wstd{t}{k}(\varphi)$ and e.g. Chapter 11 of
\cite{SMCbook} for more details.  Define the `inflation factor' (relative
error) for standard SMC to be: 
\[\ifactorstd{t}{k}(\varphi) \eqdef
        \frac{\wstd{t}{k}(\varphi)}{\Var_{\pi_t}(\varphi)}.
    \] 

For waste-free SMC, we take $k=1$, i.e. $M_t=K_t$, and define similarly its
inflation factor to be 
$\ifactorwf{t}(\varphi) \eqdef \frac{\wwf{t}(\varphi)}{\Var_{\pi_t}(\varphi)}$,
where $\wwf{t}(\varphi)$ is the asymptotic variance defined in Theorem
\ref{thm:clt_longchain}.

\begin{prop} 
    \label{prop:comparison}
    For the model considered above, let $k_0:= \log r / 2\log(1-p)$, then 
\begin{enumerate} 
\item The quantities $\ifactorstd{t}{k}(\varphi)$ and $\ifactorwf{t}(\varphi)$ do not depend
    on $\varphi$.  

\item For the standard SMC sampler, the inflation factor
    $\ifactorstd{t}{k}$ is stable with respect to $t$ if and only if $k\geq k_0$. 
        If $k<k_0$ however, $\ifactorstd{t}{k}$ explodes exponentially with $t$.  

\item For the waste-free SMC sampler, $\ifactorwf{t}$ is  stable with respect 
        to $t$ and is always equal to $\frac{1}{r} \pr{\frac 2 p - 1}$.  

\item For any choice of $k$, we have
    \[\lim_{t\to \infty} \frac{\ifactorwf{t}}{k \ifactorstd{t}{k}} \leq 4. \]

\end{enumerate} 
\end{prop}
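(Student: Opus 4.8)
The plan is to reduce everything to two explicit variance computations, exploiting that both kernels collapse to a simple mixture. First I would record the common structure. Since the sets are nested with $\nu(A_t)=r^t$, one has $\gamma_t=\ind_{A_t}$, hence $\pi_t(\dx)=r^{-t}\ind_{A_t}(x)\nu(\dx)$, $L_t=r^t$, and $\ell_t=\pi_{t-1}(G_t)=r$ for $t\ge 1$. The crucial simplification is that, because $K_t$ leaves $\pi_{t-1}$ invariant, an easy induction gives the closed form
\[
  M_t(x,\dd y)=K_t^k(x,\dd y)=a\,\delta_x(\dd y)+(1-a)\,\pi_{t-1}(\dd y),\qquad a:=(1-p)^k,
\]
i.e.\ $M_t$ keeps the current state with probability $a$ and otherwise refreshes it from $\pi_{t-1}$. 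In particular $\tv{M_t^m(x,\cdot)-\pi_{t-1}}\le a^m$, so Assumptions (M) and (G) hold and the results of Section~\ref{sec:theory} apply. This mixture form makes all relevant integrals explicit and is the engine of the whole proof.

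For the waste-free sampler ($k=1$, $a=1-p$) I would apply Theorem~\ref{thm:clt_longchain}, so $\wwf{t}(\varphi)=v_\infty(M_t,\phi)$ with $\phi:=\bar G_t(\varphi-\pi_t\varphi)$ and $(Y_p)$ the stationary chain driven by $M_t$, whose invariant law is $\pi_{t-1}$ by the invariance $\pi_{t-1}M_t=\pi_{t-1}$. Using the mixture form, $Y_p$ equals $Y_0$ with probability $(1-p)^p$ and is otherwise an independent draw from $\pi_{t-1}$; since $\pi_{t-1}(\phi)=0$ this gives the geometric autocovariances $\Cov(\phi(Y_0),\phi(Y_p))=(1-p)^p\Var_{\pi_{t-1}}(\phi)$. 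Summing the series yields $v_\infty(M_t,\phi)=\frac{2-p}{p}\Var_{\pi_{t-1}}(\phi)$, and a direct computation gives $\Var_{\pi_{t-1}}(\phi)=\tfrac1r\Var_{\pi_t}(\varphi)$. Hence $\wwf{t}(\varphi)=\frac{2-p}{pr}\Var_{\pi_t}(\varphi)$, so $\ifactorwf{t}=\frac1r\pr{\frac2p-1}$ for every $t$ and every $\varphi$. This proves item~(3) and the waste-free half of item~(1).

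For the standard sampler I would invoke the classical SMC central limit theorem (Chapter~11 of \cite{SMCbook}), which writes $\wstd{t}{k}(\varphi)$ as a sum over $s=0,\dots,t$ of local contributions $\Var_{\eta_s}(\bar G_s c_{s,t})$, where $\eta_s=\pi_{s-1}$ is the time-$s$ proposal and the centred sensitivity functions obey the backward recursion $c_{t,t}=\varphi-\pi_t\varphi$, $c_{s,t}=M_{s+1}(\bar G_{s+1}c_{s+1,t})$ centred under $\pi_s$. The key step is solving this recursion: because $c_{t,t}$ is centred under $\pi_t$ and every $\bar G_u$ carries the indicator $\ind_{A_u}$, the refresh part $(1-a)\pi_{u}(\cdots)$ integrates to zero at each stage, leaving only the ``stay'' contribution, whence an induction gives
\[
  c_{s,t}=\left(\frac{a}{r}\right)^{t-s}\ind_{A_t}\,(\varphi-\pi_t\varphi).
\]
Substituting and computing second moments yields $\Var_{\eta_s}(\bar G_s c_{s,t})=\frac1r\beta^{\,t-s}\Var_{\pi_t}(\varphi)$ for $s\ge 1$ and $\beta^{\,t}\Var_{\pi_t}(\varphi)$ for $s=0$, where $\beta:=a^2/r=(1-p)^{2k}/r$. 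Summing the geometric series gives
\[
  \ifactorstd{t}{k}=\beta^{\,t}+\frac{1}{r}\,\frac{1-\beta^{\,t}}{1-\beta},
\]
which is independent of $\varphi$ (finishing item~(1)). Item~(2) then follows by inspection, since $\beta\le 1\iff (1-p)^{2k}\le r\iff k\ge k_0$: for $\beta<1$ the factor converges to $\frac{1}{r(1-\beta)}$ (stable), whereas for $\beta>1$ it grows like $\beta^{\,t}$, i.e.\ explodes exponentially (the borderline $\beta=1$ gives linear, sub-exponential growth).

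Finally, for item~(4) I would combine the two expressions. If $k<k_0$ then $\beta>1$, $\ifactorstd{t}{k}\to\infty$, and the ratio tends to $0$. If $k\ge k_0$ (the case $\beta<1$), then
\[
  \lim_{t\to\infty}\frac{\ifactorwf{t}}{k\,\ifactorstd{t}{k}}
  =\frac{(2-p)(1-\beta)}{kp},
\]
and since $r<1$ we have $\beta=a^2/r>a^2$, so $1-\beta<1-(1-p)^{2k}\le 2kp$ using the elementary bound $1-q^{2k}\le 2k(1-q)$; this gives $\frac{(2-p)(1-\beta)}{kp}<2(2-p)<4$. I expect the only genuine obstacle to be the backward induction for $c_{s,t}$ in the standard case: setting up the classical variance formula with the correct proposal measures and normalisations, and verifying that the refresh terms vanish by centring. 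Once that explicit form is in hand, the remaining steps are a geometric-series summation and the one-line inequality above.
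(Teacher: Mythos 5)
Your proposal is correct and follows essentially the same route as the paper: both exploit the explicit mixture form $M_t = a\,\delta_x + (1-a)\,\pi_{t-1}$ so that the refresh component annihilates centred functions, which collapses the standard-SMC variance formula to a geometric series in $\beta = (1-p)^{2k}/r$, gives the waste-free variance via geometric autocovariances, and closes item (4) with the same elementary bound $1-(1-p)^{2k}\leq 2kp$. The only (immaterial) divergence is the $s=0$ term of the geometric sum, where you normalise $G_0$ by $\ell_0=\nu(A_0)=1$ while the paper uses $\bar G_t = G_t/r$ uniformly; this changes a single transient coefficient and none of the four conclusions.
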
 

In words, the performance of standard SMC may deteriorate very quickly whenever
the number of MCMC steps, $k$, is set to a too small value. On the other hand,
up to small factor, waste-free SMC provides the same level of performance as
standard SMC based on a well chosen value for $k$. 

Of course, these statements are proven here for a specific example;
however, our numerical experiments (Section~\ref{sec:num}) suggest they apply
more generally.

\section{Practical considerations}
\label{sec:prac}

\subsection{Choice of $M$}
\label{sub:choice_M}

By default, we recommend to take $M \ll N$, first, because our previous results
indicate that, within this regime, performance should be robust to the precise
value of $M$; and, second, because we observe empirically that this regime
usually leads to best performance (i.e. lowest variance for a given CPU budget).
See our numerical experiments in Section~\ref{sec:num}.

On parallel hardware, we recommend to take $M$ equal to, or larger than the number
of processors, as it is easy to divide the computational load of each iteration
of Algorithm~\ref{alg:wasteless} into $M$ independent tasks.

\subsection{Choice of kernels $M_t$}
\label{sub:thin}

As discussed in the introduction, a standard practice is to set $M_t$ to be a
$k-$fold Metropolis kernel, whose proposal is calibrated on the current particle
sample; e.g. for a random walk proposal, set the covariance matrix of the
proposal to a certain fraction of the empirical covariance matrix of the
particles.

This type of recipe may be used within waste-free SMC, with one important twist.
Contrary to standard SMC, we recommend to always take $k=1$. This recommendation
is based on the following thinning argument. We know from MCMC theory that
thinning (subsampling) an MCMC chain is generally detrimental: \citet[][Theorem
3.3]{geyer1992practical} shows that $k v_\infty(M_t^k,\varphi) > v_\infty(M_t,
\varphi)$ (provided $M_t$ is reversible and irreducible). In words, between two
estimates computed from the same long chain, one using all the samples, and the
other using only one every other $k$-sample, the former will have a lower
variance (asymptotically, as the length of the chain goes to infinity).

The same remark applies to waste-free SMC: if we compare a waste-free SMC
sampler with $N$ particles, and Markov kernels $M_t=K_t^k$, for a certain $K_t$,
with the same algorithm with $kN$ particles, and kernels $M_t=K_t$, then the
latter will have (asymptotically) lower variance, given the expression of the
asymptotic variances in Theorem~\ref{thm:clt_longchain}.

As announced in the introduction, we see therefore that waste-free SMC is
indeed more economical than standard SMC, as it is able to exploit all the
intermediate steps of a given MCMC kernel (while standard SMC often requires to
take $k\gg 1$ for optimal performance).

\subsection{Variance estimation from a single run}
\label{sub:variance_est}

As explained below Theorem~\ref{thm:clt_longchain}, the output of waste-free SMC
at time $t$ behaves asymptotically like $M$ independent, stationary chains of
size $P$. Thus, to estimate the asymptotic variance
$\tilde{V}_t(\varphi)=v_\infty(M_t,\varphi)$ in \eqref{eq:clt_pred}, we propose
the following `$M$-chain estimate'. Denote by $\gamma_{t,q}^{M,P}$ the empirical
autocovariance of order $q \in \px{0, 1, \ldots, p-1}$ computed from the $M$
chains:
\begin{equation*}
  \gamma_{t,q}^{M,P} \eqdef \frac{1}{MP} \sum_{m=1}^M\sum_{p=1}^{P-q}
  \ps{\varphi(\tilde X_t^{m,p}) -  \mu^{M,P}_t(\varphi)} 
    \ps{\varphi(\tilde X_t^{m, p+q}) - \mu^{M,P}_t(\varphi)}
\end{equation*}
where $\mu^{M,P}_t(\varphi) \eqdef N^{-1} \sum_{m=1}^M \sum_{p=1}^P
\varphi(\tilde X_t^{m,p})$ is the empirical mean. Then, the estimator is defined
as
\begin{equation*}
  \tilde{V}_t^{M, P}(\varphi) \eqdef 
   \psi_P\left(\gamma_{t,0}^{M,P}(\varphi) ,
    \ldots, \gamma_{t,P-1}^{M,P}(\varphi) \right)
\end{equation*}
where $\psi_P:\R^P\rightarrow \R$ is a certain estimator of the asymptotic
variance $v_\infty(M_t,\varphi)$ based on the autocorrelations of a single chain
of length $P$.

Several such single-chain estimators $\psi_P$ have been proposed in the
literature, see e.g. the introduction of \cite{MR2604704}. In our experiments,
we found the initial monotone sequence estimator of \cite{geyer1992practical} to
be a convenient default, as it is simple to use (no tuning parameter), and it
seems to work well. Note however that this estimator is based on a property
which is specific to reversible kernels (namely that sums of adjacent pairs of
autocovariance form a decreasing sequence). When the chosen kernels $M_t$ are
not reversible, one may consider an alternative estimator; see our third
numerical experiment (Section~\ref{sec:num}) for more discussion on this point.

To estimate $V_t(\varphi)=\tilde{V}_t(G_t(\varphi-\Q_t(\varphi)))$, we use the
same approach with $\varphi$ replaced by $G_t(\varphi-\Q_t^N(\varphi))$,
$\Q_t^N(\varphi)=\sum_{n=1}^N W_t^n \varphi(X_t^n)$. Similarly, to estimate each
term in the asymptotic variance of the log normalising constant,
\eqref{eq:clt_norm_cst}, we replace $\bar{G_t}=G_t/\ell_t$ by $G_t/\ell_t^N$.

We note that there is an alternative approach to obtain variance estimates from
a single run of waste-free SMC. It consists in (a) casting waste-free SMC as a
standard SMC sampler, as we did in Section~\ref{sec:fk_waste} (taking $P$
fixed); and (b) to apply the method of \cite{Lee2018}, see also \cite{Chan2013},
\cite{Olsson2019} and \cite{2019arXiv190913602D}, for obtaining variance
estimates from SMC outputs. This method relies on genealogy tracking (i.e.
tracking the ancestors at time 0 of each current particle).

This alternative approach has two drawbacks however. First, it relies on the
fixed $P$ regime, while, as already said, we recommend by default to run
waste-free SMC in the $P\rightarrow +\infty$ regime, i.e. by taking $M\ll N$.
Second, the method of \cite{Lee2018} degenerates as soon as the number of common
ancestors of the $N$ particle drops to one; something which tends to occur
quickly as $t$ increases.

One may mitigate the degeneracy by tracking the genealogy only up to time $t-l$,
for a certain lag value $l$, as recommended by \cite{Olsson2019}. However this
introduces a bias, and choosing $l$ is non-trivial.

We will compare both approaches in the numerical experiments of
Section~\ref{sec:num}.

\subsection{On-line adaptation of $P$}
\label{sub:adaptive}

In certain applications, the mixing of kernels $M_t$ may vary wildly with $t$;
for instance, for a tempering sequence, the mixing of $M_t$ may deteriorate over
time. The second numerical example in Section~\ref{sec:num} illustrates this
phenomenon.

In such a case, it makes sense to adjust the computational effort to the mixing
of the chain. That is, at time $t$, take $P=P_t$ so that the variance of
estimates computed at time $t$ stay of the same order of magnitude. In practice,
we found the following strategy to work reasonably well: at iteration $t$,
adjust $P_t$ so that it exceeds $\kappa$ times the auto-correlation time of
kernel $M_t$, i.e. the quantity $v_\infty(M_t, \varphi)/2\Var_{\pi_t}(\varphi)$
for a certain constant $\kappa\geq 1$, and a certain function $\varphi$, as
estimated from the current sample (which consists of $M$ chains of length
$P_t$). In our simulations, we took $\varphi=\log G_t$, and $\kappa$ between 2 and
10. To adjust $P_t$, we set it to an initial value, then we doubled it until the
requirement was met.

The main drawback of this adaptive approach is that it makes the CPU time of the
algorithm random, which is less convenient for the user. On the other hand, it
seems to present two advantages, as observed in our experiments (see second
example in Section~\ref{sec:num}): (a) it avoids the poor performance one
obtains by taking a value for $P$ that is too small for certain iterations $t$;
and (b) it makes the variance estimates more robust in this type of scenario.

\section{Numerical experiments}
\label{sec:num}

In this section, we evaluate the performance of waste-free SMC in a variety of
challenging scenarios, covering different types of state-spaces (continuous or
discrete, with a fixed or an increasing dimension), of sequence of target
distributions (based on tempering or something else), and of MCMC kernels
(Metropolis or Gibbs). In each example, standard SMC is known to be a
competitive approach, and we assess in particular how waste-free SMC may improve
on the performance of standard SMC.

\subsection{Logistic regression}

We consider the problem of sampling from, and computing the normalising constant
of, the posterior distribution of a logistic regression model, based on data
$(y_i,z_i)\in\{-1,1\}\times \R^p$, 
parameter $x\in \R^p$, and likelihood
\[
  L(x) = \prod_{i=1}^{n_\mathcal{D}} F(y_i x^T z_i), \quad F(x) =
  \frac{1}{1+e^{-x}}.
\]
We consider the sonar dataset (available in the UCI machine learning
repository), which is one of the more challenging datasets considered in
\cite{MR3634307}, and for which SMC tempering is one of the competitive
alternatives (and the only one that may be used to estimate the marginal
likelihood). Following standard practice, each predictor is rescaled to have
mean 0 and standard deviation $0.5$; an intercept is added; the dimension of
$\setX$ is then $p=63$. The prior is an independent product of centred normal
distributions, with standard deviation $20$ for the intercept, $5$ for
other coordinates.

We compare the performance of standard SMC and waste-free SMC when applied to
the tempering sequence $\pi_t(\dx) \propto \nu(\dx) L(x)^{\gamma_t}$. In both
cases, the tempering exponents are set automatically (using Brent's method) so
that the ESS of each importance sampling step equals $\alpha N$, and the Markov
kernel $M_t$ is a $k$-fold random walk Metropolis kernel calibrated to the
resampled particles (see Section~\ref{sub:thin}). For waste-free, we always
take $k=1$ (as per the thinning argument of the same Section).
We take $\alpha=1/2$ here; see the supplement for results with other values of 
$\alpha$.

Figure~\ref{fig:bin_sonar_logLT} plots box-plots of estimates of the log of the
normalising constant of the posterior obtained from 100 independent runs of
standard SMC, for $k=5$, $20$, $100$, $500$, and $1000$ and waste-free SMC for
$k=1$, and $M=50$, $100$, $200$, $400$ and $800$. The number of particles
is set to $N=N_0/k$, with $N_0=2 \times 10^5$, so that all algorithms have roughly the
same CPU cost. (For waste-free, $P$ is adjusted accordingly, i.e. $P=N/M$, with
$N=2 \times 10^5$.) Figure~\ref{fig:bin_sonar_QT} does the same for the estimate of the
posterior expectation of the mean of all components of $x$, namely $\pi_T(\varphi)$ with $\varphi(x) \eqdef p^{-1} \sum_{s=1}^p x_s$ for $x \in \mathbb{R}^p$.

\begin{figure}
  \centering \includegraphics[scale=0.55]{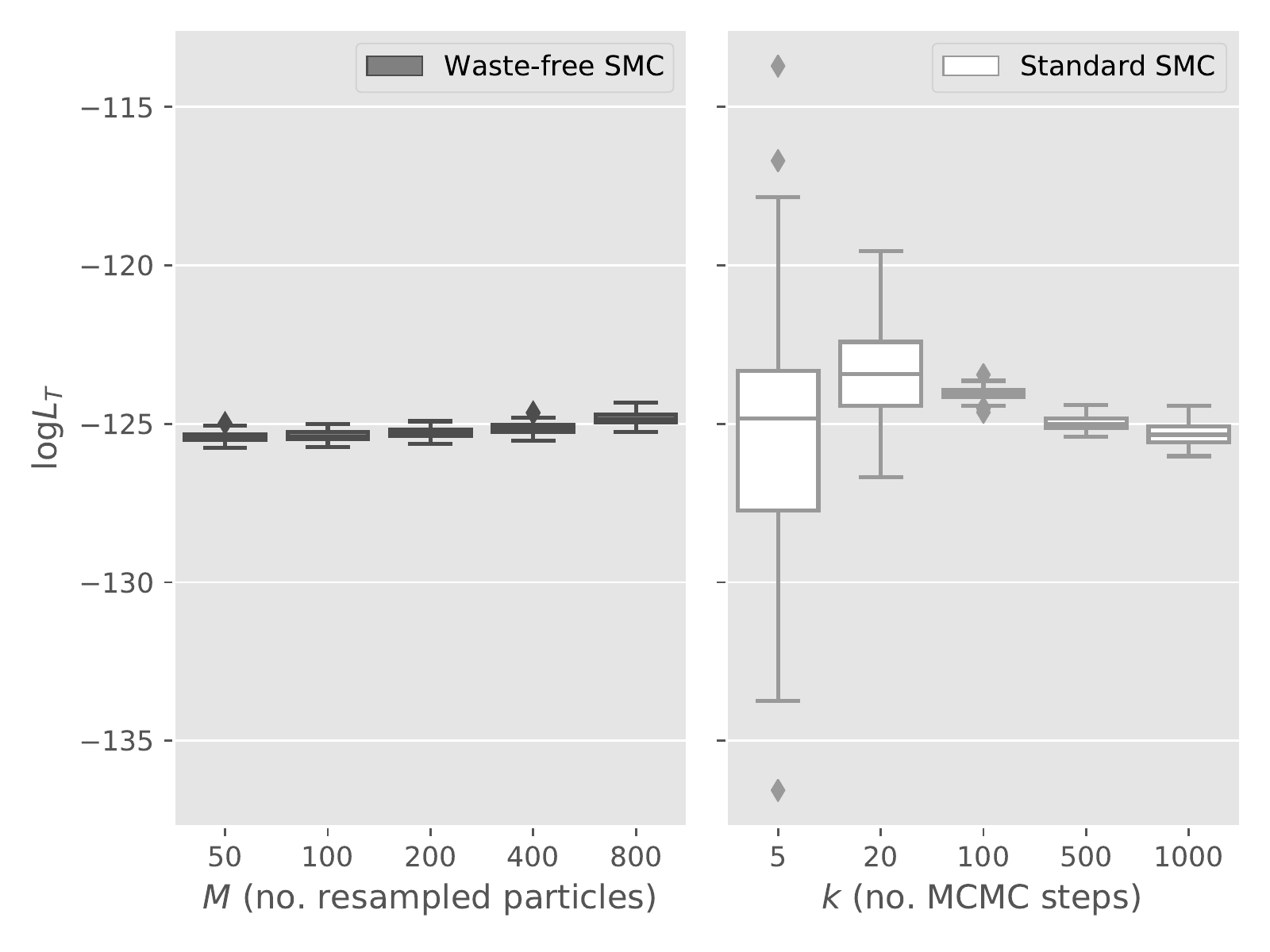}
  \caption{Logistic regression: estimates of the normalising constant obtained
    from waste-free SMC ($N=2 \times 10^5$) and standard SMC ($N=2 \times 10^5/k$). 
  \label{fig:bin_sonar_logLT}
}
\end{figure}

\begin{figure}
  \centering \includegraphics[scale=0.55]{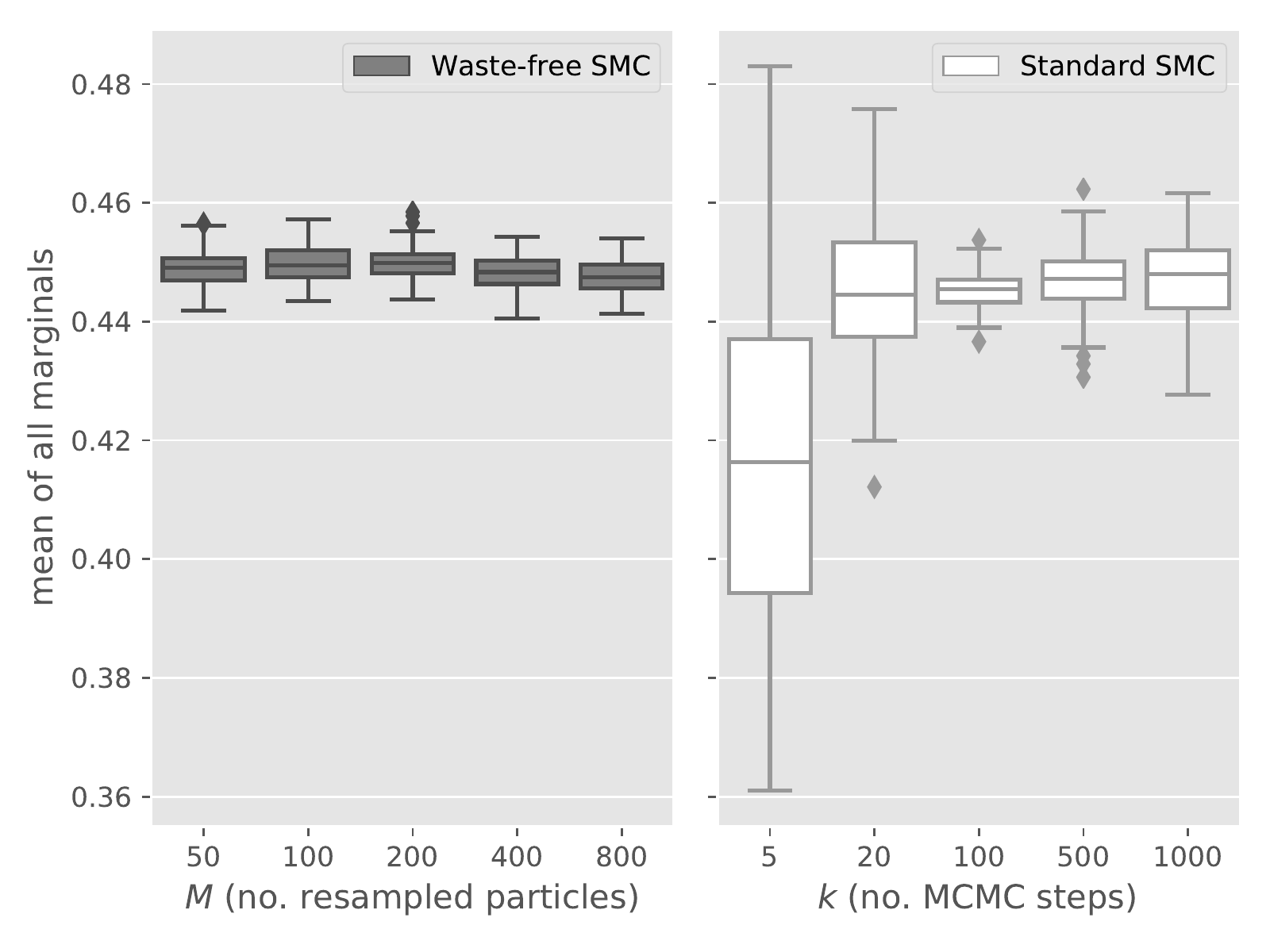}
  \caption{Same plot as Figure~\ref{fig:bin_sonar_logLT} for the estimate
    of the posterior expectation of the mean of all coordinates. 
    \label{fig:bin_sonar_QT}
  }
\end{figure}

These figures deserve several comments. First, waste-free seems to perform best
in the ``long chain'' regime, when $M\ll N$. Second, within this regime, the
performance seems robust to the choice of $M$; notice how the same level of
performance is obtained whether $M=50$ or $M=400$ (similar performance is also
obtained for $M < 50$, results not shown. We focused on $M \geq 50$ for reasons
related to parallel hardware as discussed in Section~\ref{sub:choice_M}.) Third,
in contrast, it seems difficult to choose $k$ to obtain optimal performance;
notice in particular that Figure~\ref{fig:bin_sonar_QT} suggests to take
$k=100$, but, for this value of $k$, the estimate of the log-normalising
constant seems biased, see Figure~\ref{fig:bin_sonar_logLT}. (Interestingly, we
observed such an upward bias for all values of $k$ when we ran standard SMC for
a smaller value of $N$, $N=10^5$; hence standard SMC seems also slightly less
robust to the choice of $N$; results not shown.) Fourth, and perhaps most
importantly, we are able to obtain better performance from waste-free SMC for a
given CPU budget.

We now evaluate the performance of the variance estimates discussed in
Section~\ref{sub:variance_est}. Figure~\ref{fig:bin_sonar_var} shows box-plots
of these estimates obtained from 100 runs of waste-free SMC, for $N=2 \times
10^5$ and $M=50$: the $M-$chain estimate advocated in
Section~\ref{sub:variance_est}; the estimate of \cite{Olsson2019}, with a lag
of 3 (the biased, but more stable version of \cite{Lee2018}, as explained in
Section~\ref{sub:variance_est}) and finally, the empirical variance over 10
independent runs. All these variance estimates are re-scaled by the same
factor, such that the empirical variance over the 100 runs equals one. (Other
values for the lag in the method of \cite{Olsson2019} did not seem to give
better results.)

Clearly, the $M-$chain estimator is more satisfactory, as it performs better
(especially for the normalising constant, left plot) than the empirical
variance, although being computed from a single run. On the other hand, the
approach of \cite{Lee2018} performs poorly. To be fair, this approach works more
reasonably if we increase significantly $M$ (results not shown), but since
taking $M$ too large decreases the performance of the algorithm, it seems fair to
state that this approach is not useful for waste-free SMC, at least in this
example.

\begin{figure}
  \centering \includegraphics[scale=0.35]{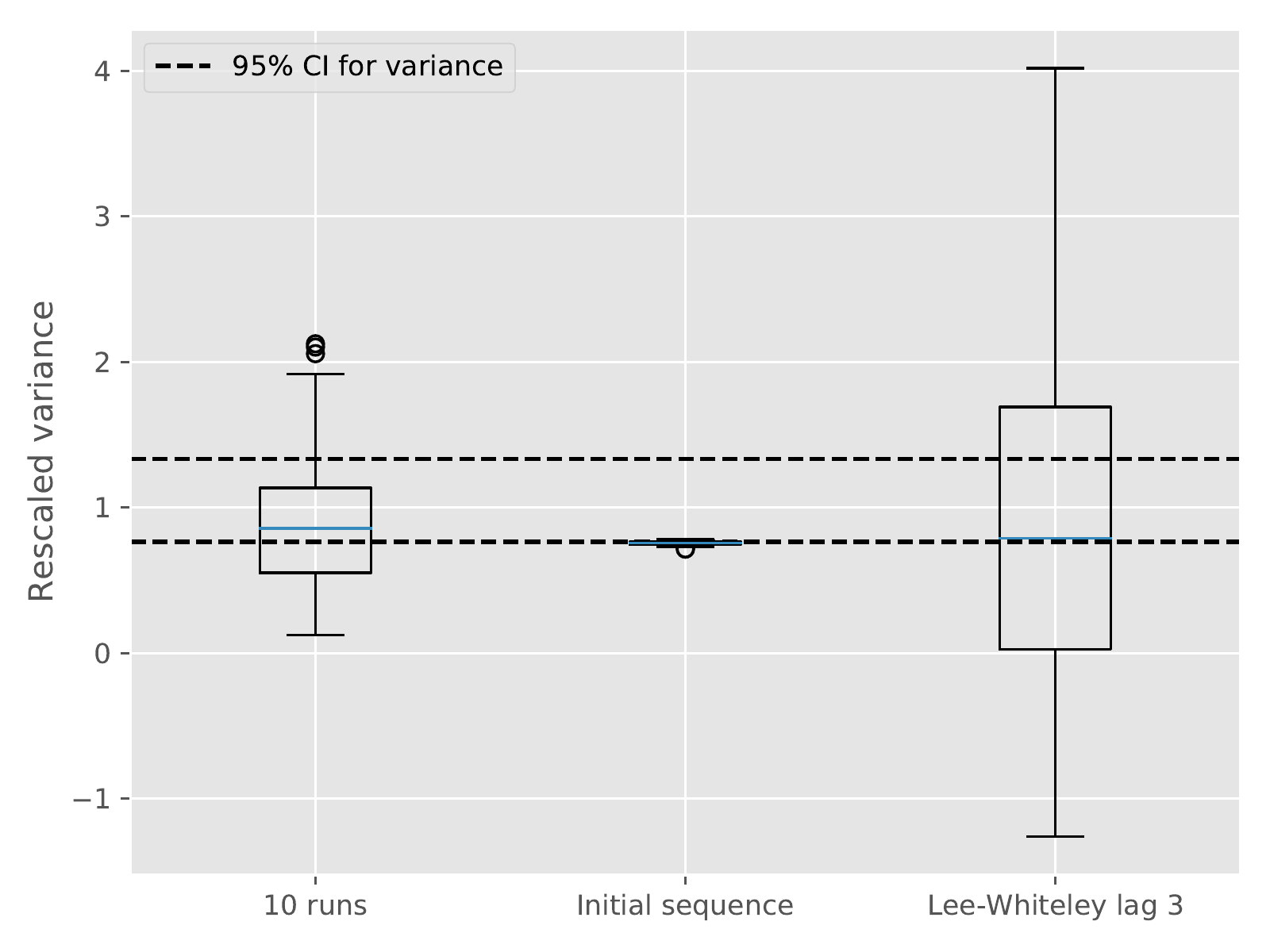}
  \includegraphics[scale=0.35]{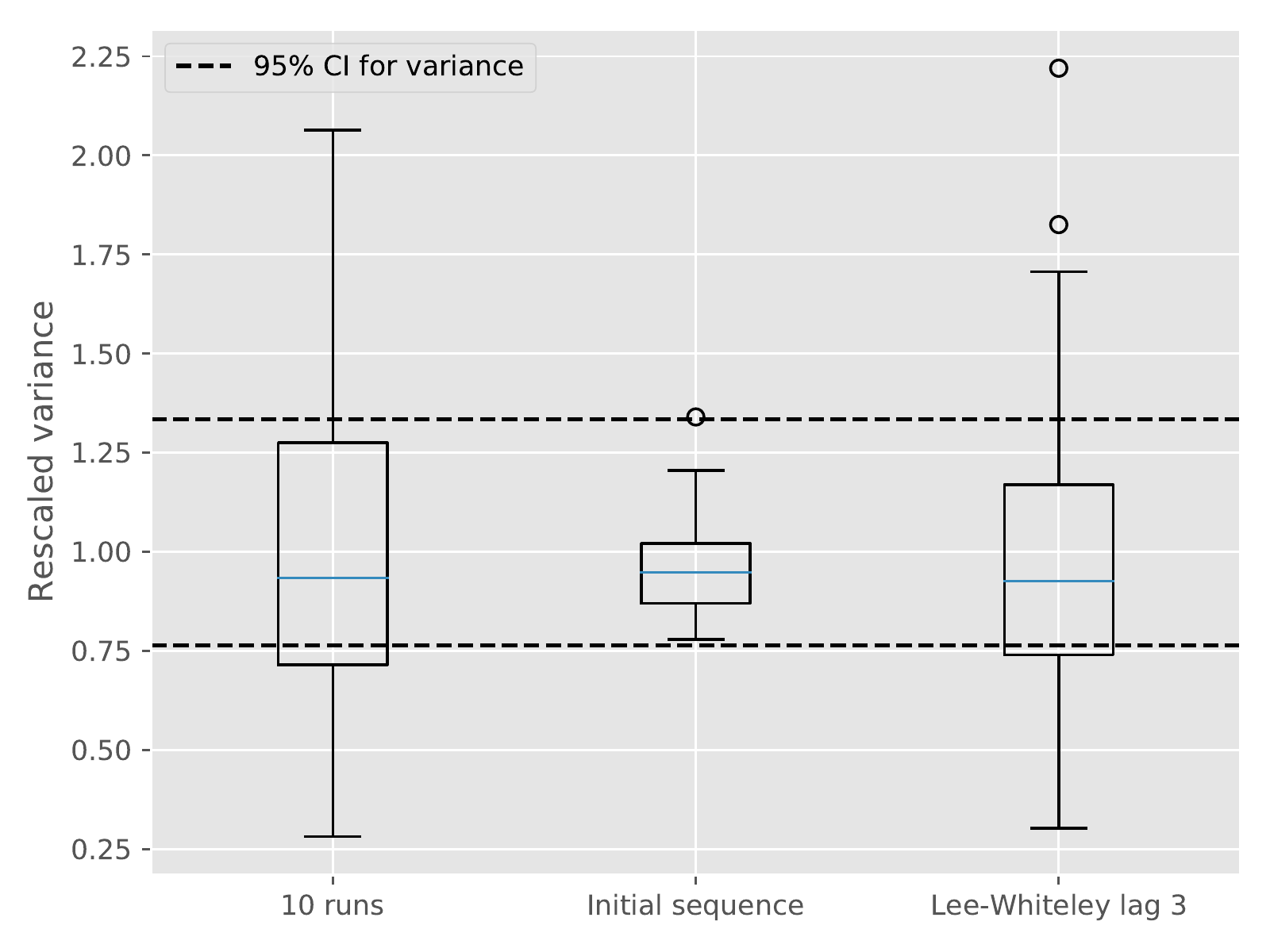}
  \caption{Logistic regression: box-plots of variance estimates over 100 runs
    obtained with waste-free SMC. Left: variance of the log-normalising constant
    estimate. Right: variance of the mean of all coefficients estimate. The
    variance estimates are re-scaled so that the empirical variance over the 100
    runs equals one; see text for more details. 
    \label{fig:bin_sonar_var}
  }
\end{figure}

\subsection{Latin squares}
\label{subset:lat}

Our second example concerns the enumeration of Latin squares of size $d$; that
is, $d\times d$ matrices with entries in $\{0, \ldots, d-1\}$, and such that
each integer in that range appears exactly once in each row and in each column;
see Table~\ref{tab:latin} for an example. The number $l(d)$ of Latin squares of
size $d$ increases very quickly with $d$, and is larger than $10^{43}$ for
$d=11$, the largest value for which it is known; see sequence A002860 of the
OEIS database \citep{Latin_squares_sequence}.

\begin{table}
  \label{tab:latin}
  \begin{tabular}{rrrrrrrrrr}
    1 &  5 &  0 &  3 &  7 &  8 &  9 &  6 &  2 &  4 \\
    0 &  4 &  5 &  8 &  6 &  9 &  1 &  7 &  3 &  2 \\
    2 &  8 &  7 &  0 &  9 &  4 &  5 &  3 &  1 &  6 \\
    3 &  7 &  4 &  1 &  5 &  2 &  8 &  0 &  6 &  9 \\
    6 &  0 &  9 &  5 &  1 &  3 &  2 &  8 &  4 &  7 \\
    8 &  2 &  1 &  9 &  4 &  0 &  6 &  5 &  7 &  3 \\
    9 &  6 &  3 &  2 &  0 &  5 &  7 &  4 &  8 &  1 \\
    5 &  1 &  6 &  4 &  3 &  7 &  0 &  2 &  9 &  8 \\
    4 &  9 &  2 &  7 &  8 &  6 &  3 &  1 &  5 &  0 \\
    7 &  3 &  8 &  6 &  2 &  1 &  4 &  9 &  0 &  5 \\
  \end{tabular}
  \caption{A Latin square of size $10$}
\end{table}

Let $\setX$ be the set of permutation squares of size $d$, that is, $d\times d$
matrices such that each row is a permutation of $\{0, \ldots, d-1\}$, and let
$p(d)$ its cardinal, $p(d)=(d!)^d$. We consider the following sequence of
tempered distributions: $\pi_t(\dx) = \nu(\dx) \exp\{ - \lambda_t V(x) \} / L_t
$, where $\nu(\dx)$ stands for the uniform distribution over $\setX$, and $V$ is
a certain score function such that $V(x)=0$ if $x$ is a Latin square, $V(x)\geq
1$ otherwise. Specifically, denoting the entries of matrix $x$ by $x[i, j]$, we
take
\[
  V(x) = \sum_{j=1}^d \left\{ \sum_{l=1}^d \left( \sum_{i=1}^d \ind(x[i,j]=l)
    \right)^2 - d \right\}.
\]

The quantity $L_t \times p(d)$ will be at distance $\varepsilon$ of $l(d)$, the
number of Latin squares, as soon as $\lambda_t \geq \log( p(d) /\varepsilon ) $.
Thus, we select adaptively the successive exponents $\lambda_t$ (as in the
previous example), and stop the algorithm at the first iteration $t$ such that
this condition is fulfilled, for $\varepsilon=10^{-16}$.

We set the Markov kernel $M_t$ to be a $k$-fold Metropolis kernel based on the
following proposal distribution: given $x$, select randomly a row $i$, two
columns $j$, $j'$, and swap components $x[i, j]$ and $x[i, j']$.

Figure~\ref{fig:latin_logLT} compares the performance of standard SMC and
waste-free SMC for evaluating the log of the normalising constant $L_T$, that is
(up to a small error as explained above), the log of the number of Latin squares
$l(d)$; we take $d=11$ since this is the largest value of $d$ for which $l(d)$
is known exactly.

\begin{figure}
  \centering \includegraphics[scale=0.55]{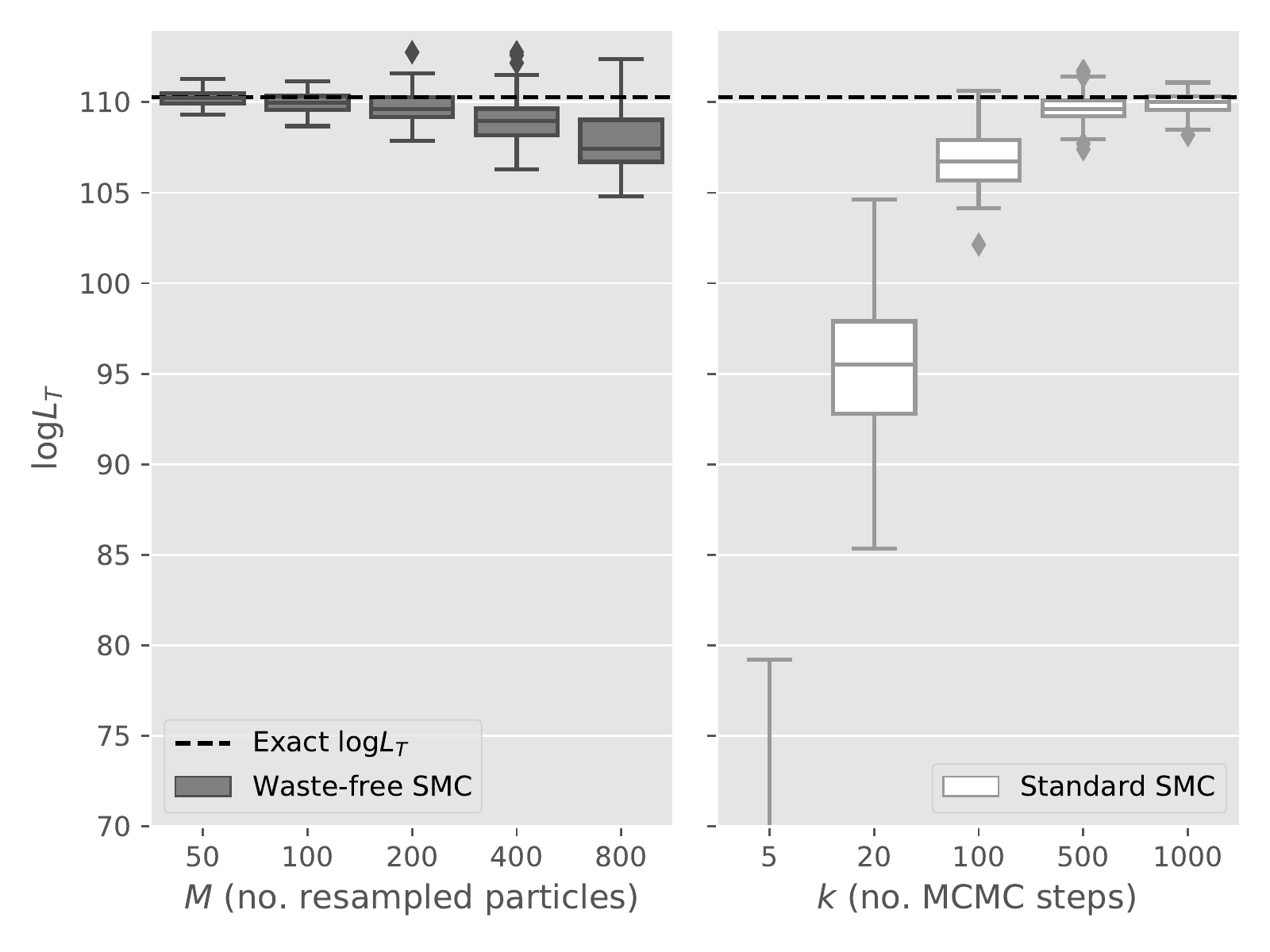}
  \caption{Latin squares: box-plots of estimates of $\log L_T$ (log of number of
    Latin squares) obtained from 100 independent runs of the following
    algorithms: waste-free SMC ($N=2\times 10^5$, different values of $M$, the number of
    resampled particles), and standard SMC ($N=2 \times 10^5/k$, different values for
    $k$, the number of MCMC steps).
    \label{fig:latin_logLT}
  }
\end{figure}

As in the previous example, the compared algorithms are given (roughly) the same
CPU budget: $N=2 \times 10^5/k$ for standard SMC, while $N=2 \times 10^5$ for waste-free (and
$k=1$, as already discussed). We make the same observations as in the previous
example: best performance is obtained from waste-free SMC in the long chain
regime ($M \ll N$), and, within this regime, performance does not seem to depend
strongly on $M$.

One distinctive feature of this example is that the mixing of the Metropolis
kernel used to move the particles significantly decreases over time; see
Figure~\ref{fig:latin_ar_vs_t}, which plots the acceptance rate of that kernel
at each iteration $t$ of a waste-free SMC run.

\begin{figure}
  \centering \includegraphics[scale=0.4]{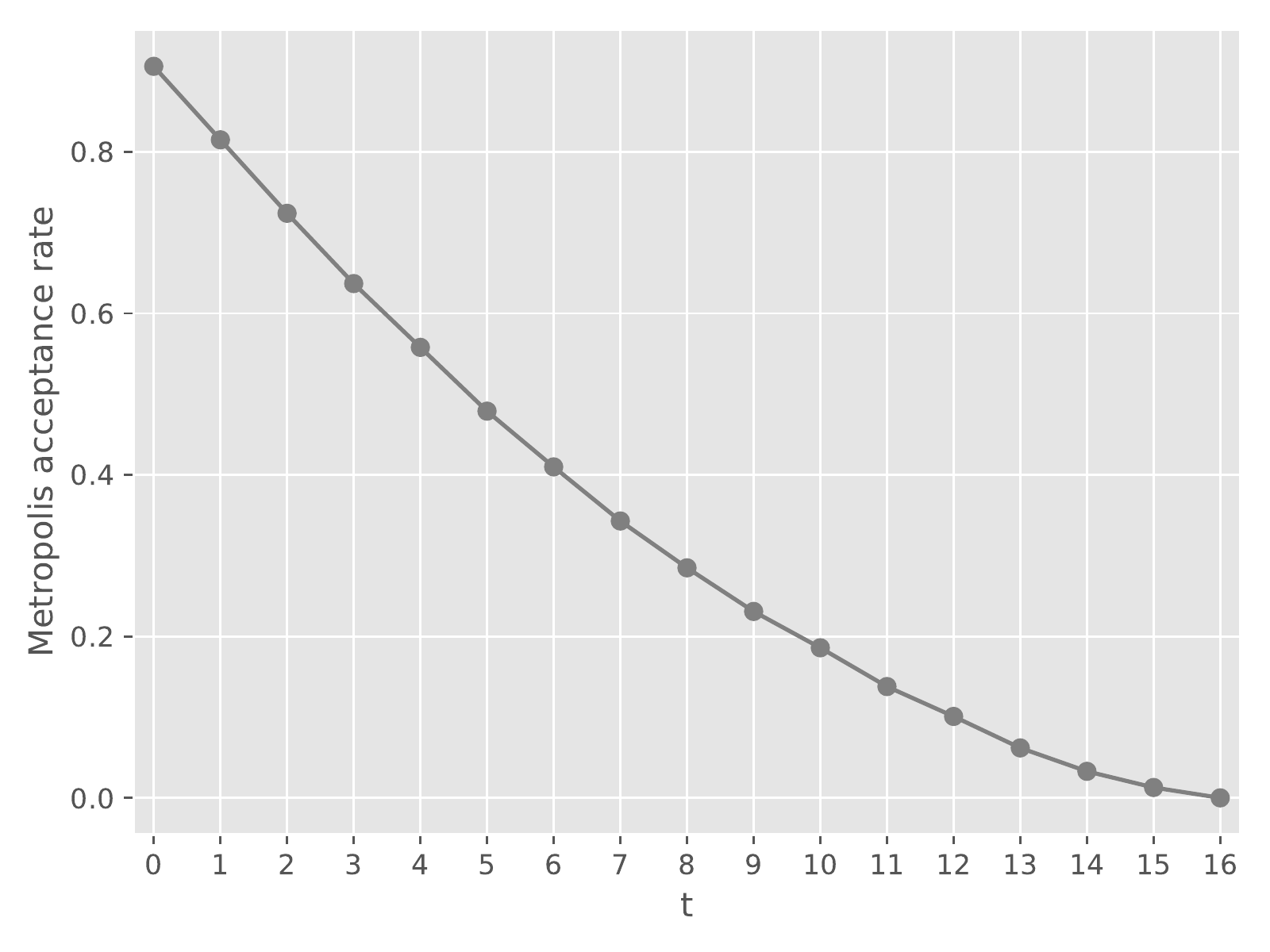}
  \caption{Latin squares: acceptance rate of the Metropolis kernel described in
    the text at each iteration $t$ of a run of waste-free SMC.
    \label{fig:latin_ar_vs_t}
  }
\end{figure}

It is interesting to note that waste-free SMC seems to work well despite this.
Unfortunately, it does seem to affect the performance of our $M-$chain variance
estimate. The left panel of Figure~\ref{fig:latin_var} makes the same
comparison as Figure~\ref{fig:bin_sonar_var} in our first example. This time,
however, the $M-$chain estimator seems to be biased downward, by a factor of
two. This bias seems to originate from the terms of for the last values of $t$;
these terms are both larger, and more difficult to estimate if $P$ is not large
enough.

\begin{figure}
  \centering \includegraphics[scale=0.37]{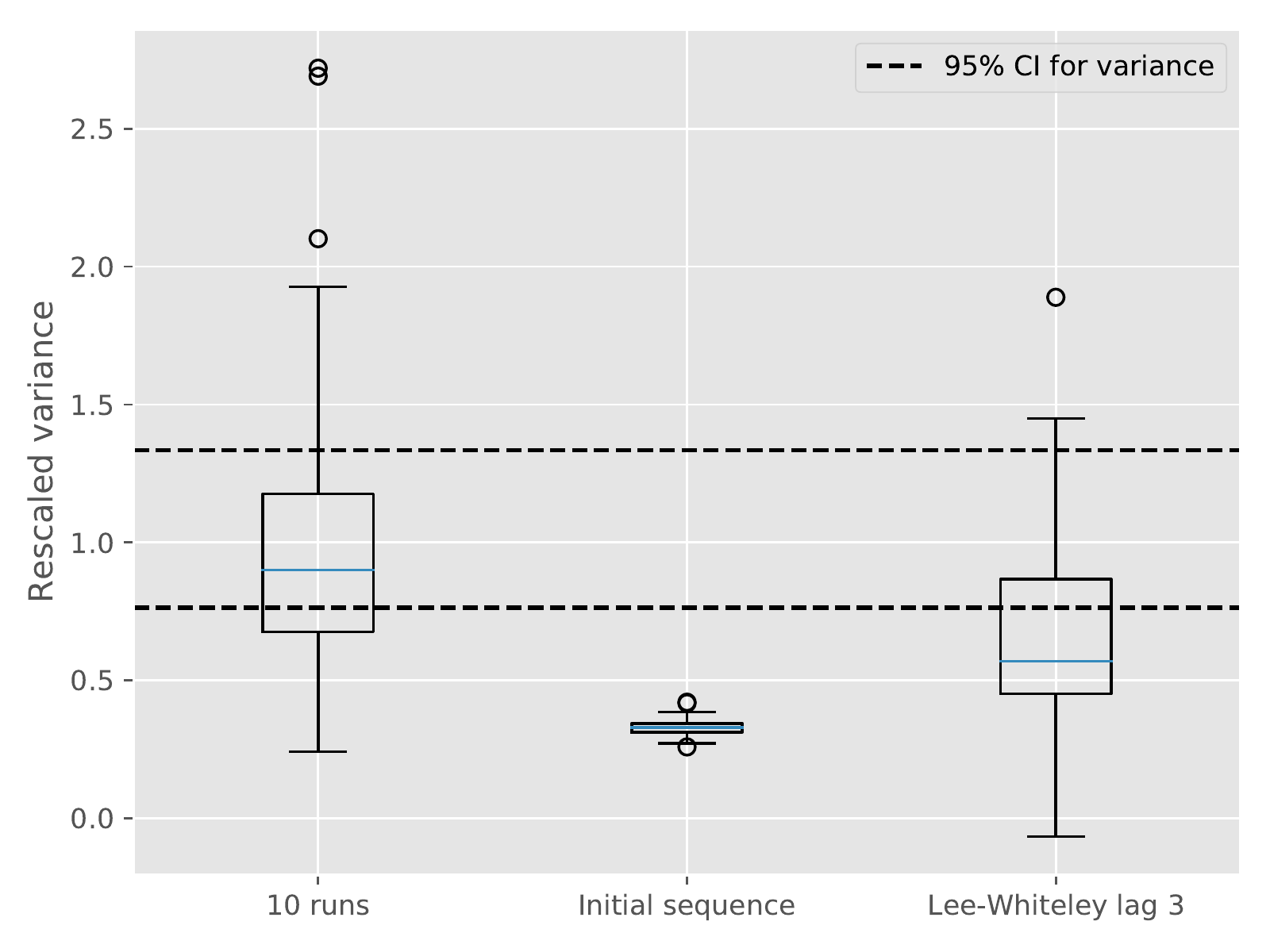}
  \includegraphics[scale=0.37]{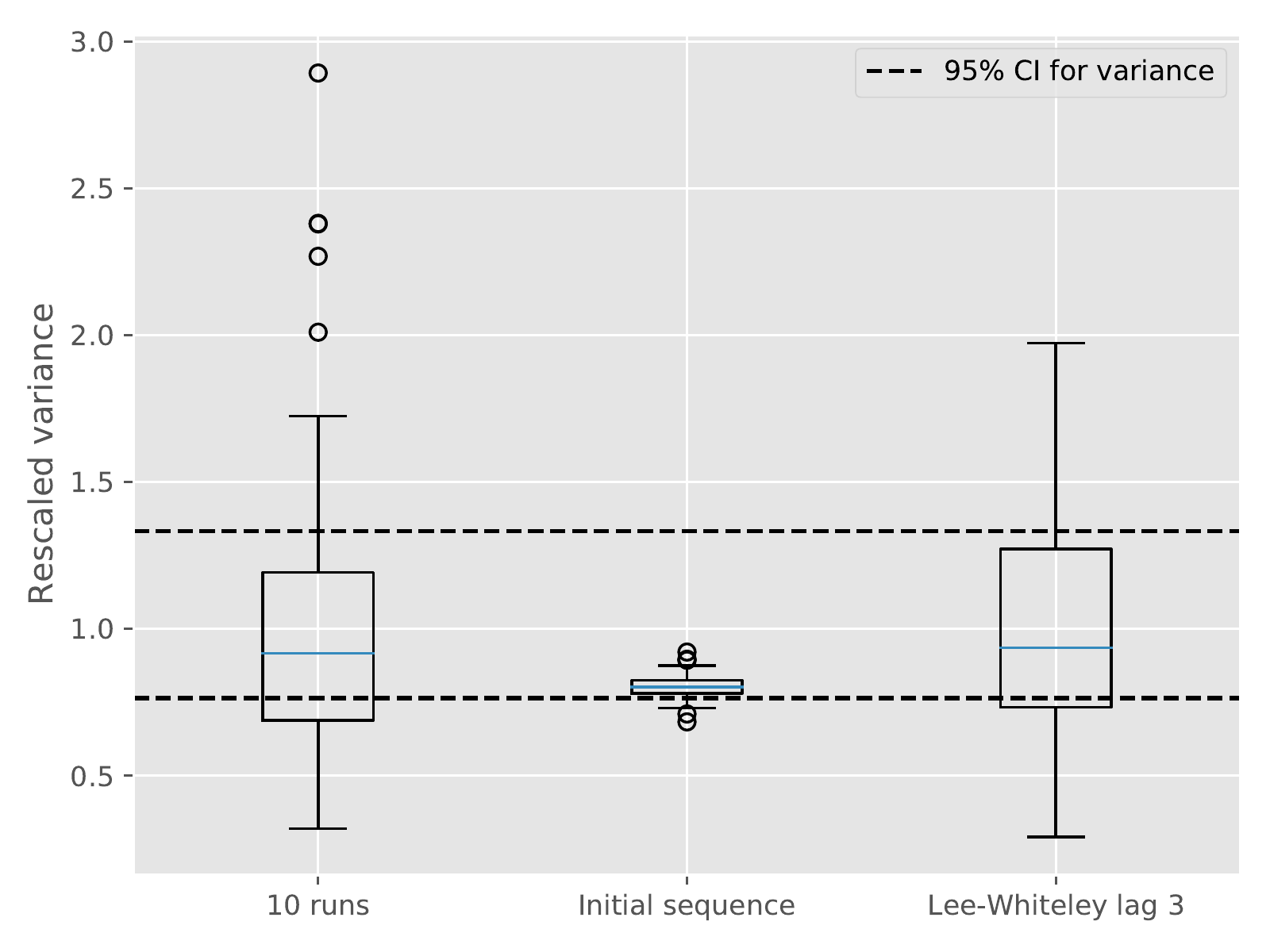}
  \caption{Latin squares: same plot as Figure~\ref{fig:bin_sonar_var}, for the 
    estimate of log-normalising constant $\log L_T$. Left: non-adaptive version ($M=50$,
    $N=2 \times 10^5$); Right: adaptive version ($M=50$, $N_0=5$). See text for more
    details. 
    \label{fig:latin_var}
  }
\end{figure}

These results showcase the interest of adapting $P$ across time, as discussed in
Section~\ref{sub:adaptive}. We re-run waste-free SMC for the same problem, with
$M=50$, and $\kappa=5$; that is, at each iteration $t$, $P_t$ is adjusted to be
close to $\kappa$ times the auto-correlation time, for function $\bar{G}_t$. The
right side of Figure~\ref{fig:latin_var} repeats the comparison of the variance
estimates, but for the adaptive $P$ algorithm. This time, our $M-$chain estimate
seems to perform satisfactorily.

In addition, Figure~\ref{fig:latin_perf_adaptive} compares the CPU vs error
trade-off for both variants of waste-free SMC. In both cases, we set $M=50$;
``CPU time'' on the x-axis is measured by the number of calls to the score
function, re-scaled so that the smallest observed value is 1. (Both axis use a
$\log2$-scale.) Each dot corresponds to an average over 100 runs. For the
vanilla version, we set $N=6250$, $2.5\times 10^4$, $10^5$, $4\times 10^5$ and
$8\times 10^5$. For the adaptive version, we set $\kappa=2$, 5 and 10. The
dotted lines have slope $-1$. For high CPU time both algorithms show the same
level of performance. If $N$ is set to too low a value for vanilla waste-free
(e.g. $N=6250$), then one obtains a very large MSE, because $P=N/M=125$ is too
small relative to the auto-correlation time of the kernels $M_t$ for large $t$.
Note that for the adaptive version, it does not make sense to take $\kappa\ll 2$,
as one cannot properly estimate the auto-correlation time of a chain without
running it for a length commensurate with its auto-correlation time. In a sense, 
the adaptive version of waste-free prevents us from setting $P$ to too low
a value, where performance becomes sub-optimal.

\begin{figure}
  \centering \includegraphics[scale=0.37]{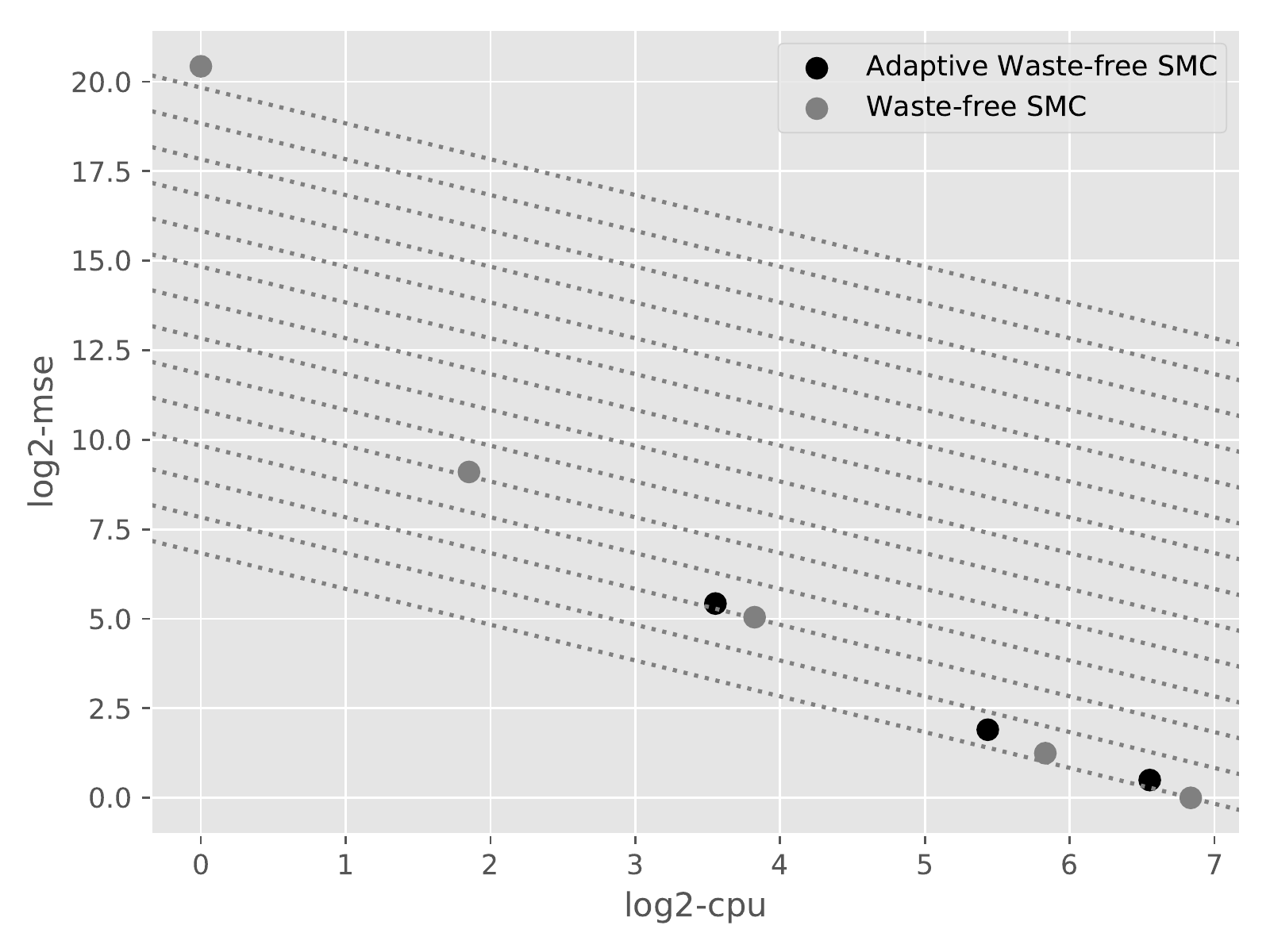}
  \caption{Latin squares: MSE (mean square error) vs CPU time (number of calls
    to score function), averaged over 100 independent runs of vanilla waste-free
    (grey dots, $N=6250$, $2.5\times 10^4$, $10^5$, $4\times 10^5$, $8\times
    10^5$), and adaptive waste-free (black dots, $\kappa=2$, $5$, $10$).
    Both axes use a $\log2-$scale; parallel dotted lines have slope $-1$.}
  \label{fig:latin_perf_adaptive}
\end{figure}

By and large, in any problem when there is some evidence that the mixing of
kernels $M_t$ may decrease significantly over time, we recommend to use the
adaptive $P$ strategy. It is a bit less practical to use, as it gives less
control to the user on the running time of the algorithm; but on the other hand
it seems to provide more reliable  variance estimates in this kind of scenario.

\subsection{Orthant probabilities}
\label{sub:orthant}

Finally, we consider the problem of evaluating Gaussian orthant probabilities,
i.e. $p(a,\Sigma)\eqdef\P(Z\geq a)$, where $a\in\R^d$, $Z\sim\N_d(0, \Sigma)$,
and $\Sigma$ is a covariance matrix of size $d\times d$.

\cite{MR3515028} developed the following SMC approach for evaluating such
probabilities. Let $\Gamma$ be the lower triangle in the Cholesky decomposition
of $\Sigma$: $\Sigma=\Gamma \Gamma^T$; $\Gamma=(\gamma_{ij})$ and
$\gamma_{ii}>0$ for all $i$. The orthant probability $p(a,\Sigma)$ may be
rewritten as the joint probability that $X_t \geq f_t(X_{1:t-1})$ for
$t=1,\ldots,d$, where $f_t(x_{1:t-1}) = (a_t-\sum_{s<t} \gamma_{st}x_s) /
\gamma_{tt} $, and the $X_t$'s are IID $\N(0,1)$ variables. (At time $1$,
$f_1(x_{1:0})$ is simply $a_1$, i.e. the constraint is $X_1\geq a_1$.)

The SMC algorithm of \cite{MR3515028} applies the following operations to
particles $X_{1:t}^n$, from time $1$ to time $T=d$. (We change notations
slightly and start at time 1, for the sake of readability.) (a) At time $t$,
particles $X_{1:t-1}^n$ are extended by sampling an extra component, $X_t^n$,
from a univariate truncated Gaussian distribution (the distribution of $X_t\sim
N(0,1)$ conditional on $X_t\geq f_t(x_{0:t-1})$); (b) particles $X_{1:t}^n$ are
then reweighted according to function $\Phi(-f_t(X_{1:t-1}^n))$, where $\Phi$ is
the $\N(0, 1)$ cumulative distribution function; and (c) when the ESS (effective
sample size) of the weights gets too low, the particles are moved through $k$
iterations of a certain MCMC kernel that leaves invariant $\pi_t$, the
distribution that corresponds to $X_{1:t}\sim N_t(0,I_t)$ constrained to
$X_s\geq f_s(X_{1:s-1})$ for $s=1,\ldots,t$. Based on numerical experiments,
\cite{MR3515028} recommended to use for the MCMC kernel at time $t$ a Gibbs
sampler that leaves $\pi_t$ invariant. (the update of each variable amounts to
sampling from a univariate truncated normal distribution.)

This SMC algorithm does not fit in the framework of
Algorithm~\ref{alg:genericSMC}; in particular the dimension of the state-space
$\setX=\R^t$ increases over time. However, we can easily generalise waste-free
SMC to this setting: whenever an MCMC rejuvenation step is applied, resample
$M\ll N$ particles, apply $P-1$ steps of the chosen MCMC kernels to these $M$
resampled particles, and gather the $N=MP$ so obtained values to form the new
particle sample.

To make the problem challenging, we take $d=150$, $a=(1.5, 1.5, \ldots)$, and
$\Sigma$ a random correlation matrix with eigenvalues uniformly distributed in
the simplex $\left\{  x_1 + \cdots + x_d = 150, x_i \geq 0\right\}$, which we
simulated using the algorithm of \citet{davies2000numerically}. As in
\cite{MR3515028}, before the computation we re-order the variables according to
the heuristic of \citet{gibson1994monte}. 

Figures~\ref{fig:orthant_logLT} and \ref{fig:orthant_QT} do the same comparison
of standard SMC and waste-free SMC as in the two previous examples: $N=2\times
10^5$ for waste-free, $N=2\times10^5/k$ for standard SMC, and $M$ (resp. $k$)
varies over a range of values. Figure~\ref{fig:orthant_logLT} plots box-plots
of estimates of $\log L_T$ (the log of the orthant probability), while
Figure~\ref{fig:orthant_QT} does the same for $\Q_T(\varphi)$, with
$\varphi(x_{0:T})= (\sum_{t=0}^Tx_t)/T$; i.e. the expectation of $\varphi$ with
respect to the corresponding truncated Gaussian distribution. 

We observe again that waste-free SMC outperforms standard SMC, at least whenever
$M\ll N$. In addition, the greater robustness of waste-free is quite striking 
in this example.

\begin{figure}
  \centering \includegraphics[scale=0.55]{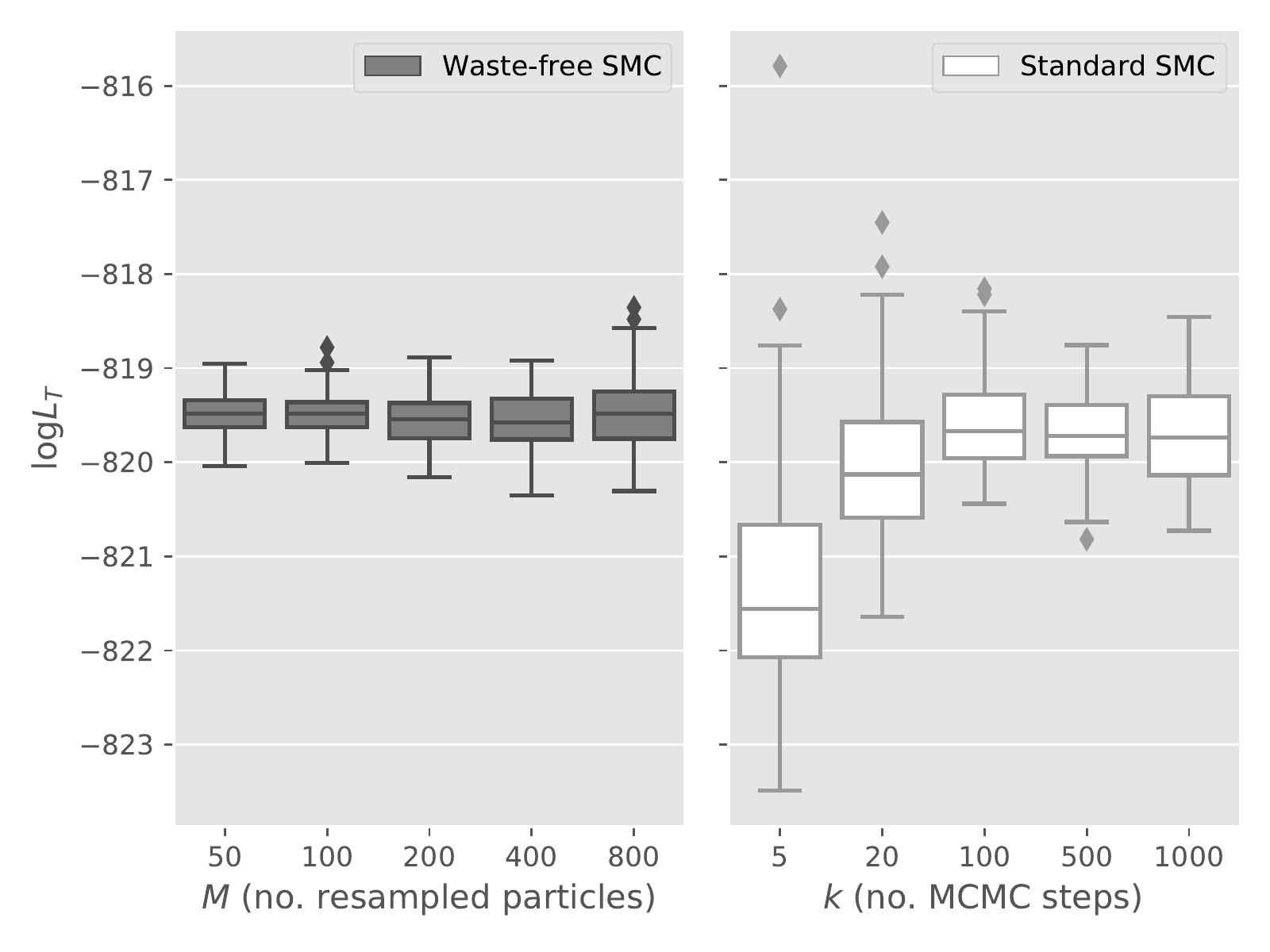}
  \caption{Orthants: estimates of the log normalizing constant obtained
    from waste-free SMC ($N=2 \times 10^5$) and standard SMC ($N=2\times10^5/k$). 
    \label{fig:orthant_logLT}
  }
\end{figure}

\begin{figure}
  \centering \includegraphics[scale=0.55]{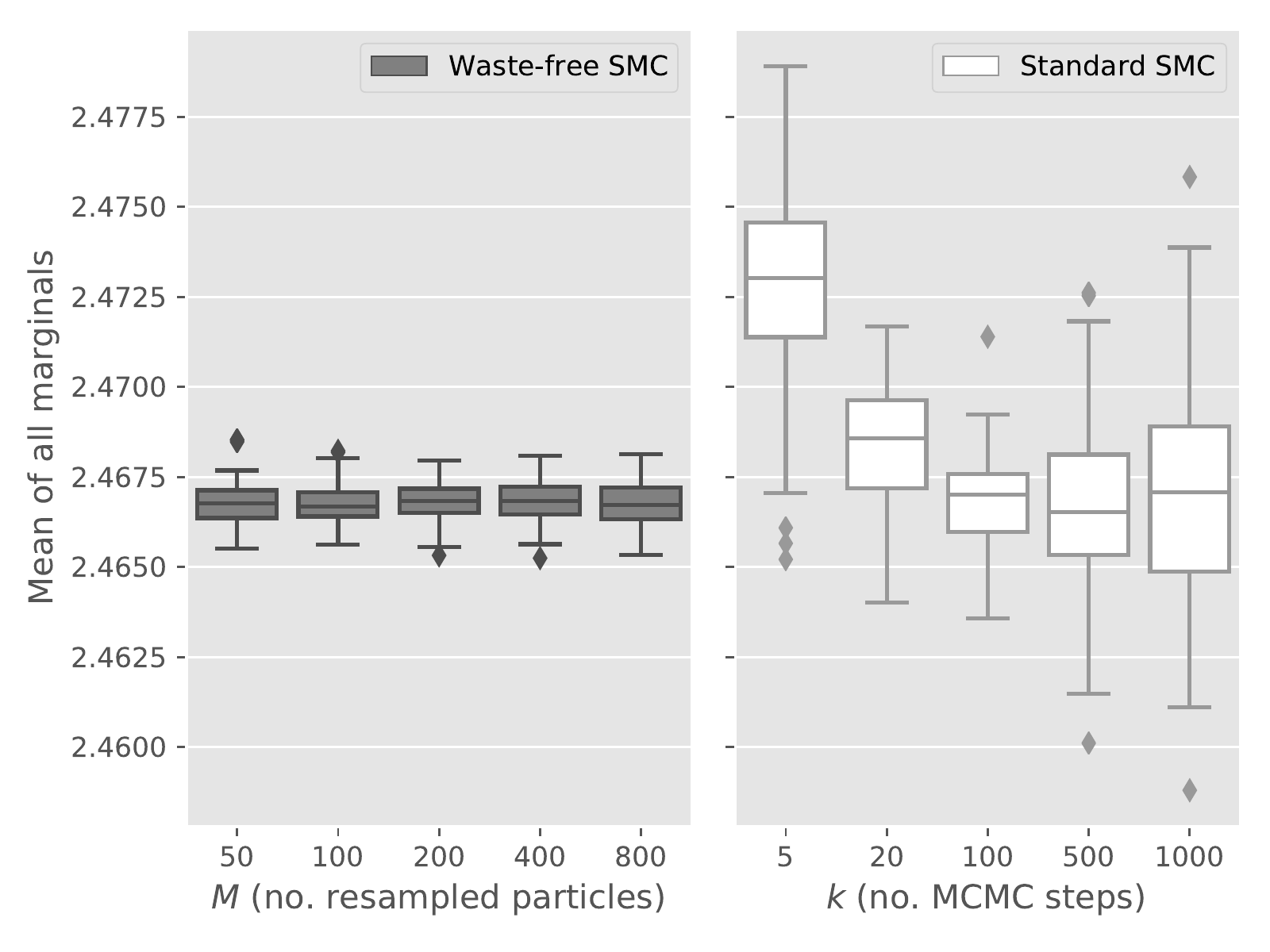}
  \caption{Orthants: Same plot as Figure~\ref{fig:orthant_logLT} for
    $\Q_T(\varphi)$, the expectation of function $\varphi(x_{0:T})=(\sum_{t=0}^T
    x_t)/T$ with respect to truncated Gaussian distribution $\N_{>0}(a,\Sigma)$. 
    \label{fig:orthant_QT}
  }
\end{figure}

Finally, Figure~\ref{fig:orthant_var} compares $M-$chain estimators of the
variance of the orthant probability estimate based on two single-chain
estimators: the initial sequence estimator we recommended by default in Section
\ref{sub:variance_est}, and we used in the two previous examples; and a spectral
estimator based on the Tukey-Hanning window \citep[see e.g.][]{MR2604704}. In
this example, the kernels $M_t$ are Gibbs kernels, and are therefore
not reversible. This seems to explain the poor performance of the former.

(As in previous plots, Figures~\ref{fig:bin_sonar_var} and \ref{fig:latin_var},
we include for comparison the variance estimator obtained by taking an empirical
variance over 10 runs; however we do not include, for the sake of readability,
the estimator based on \cite{Lee2018}, but note simply it performs poorly in
this case too.)

\begin{figure}
  \centering
  \includegraphics[scale=0.35]{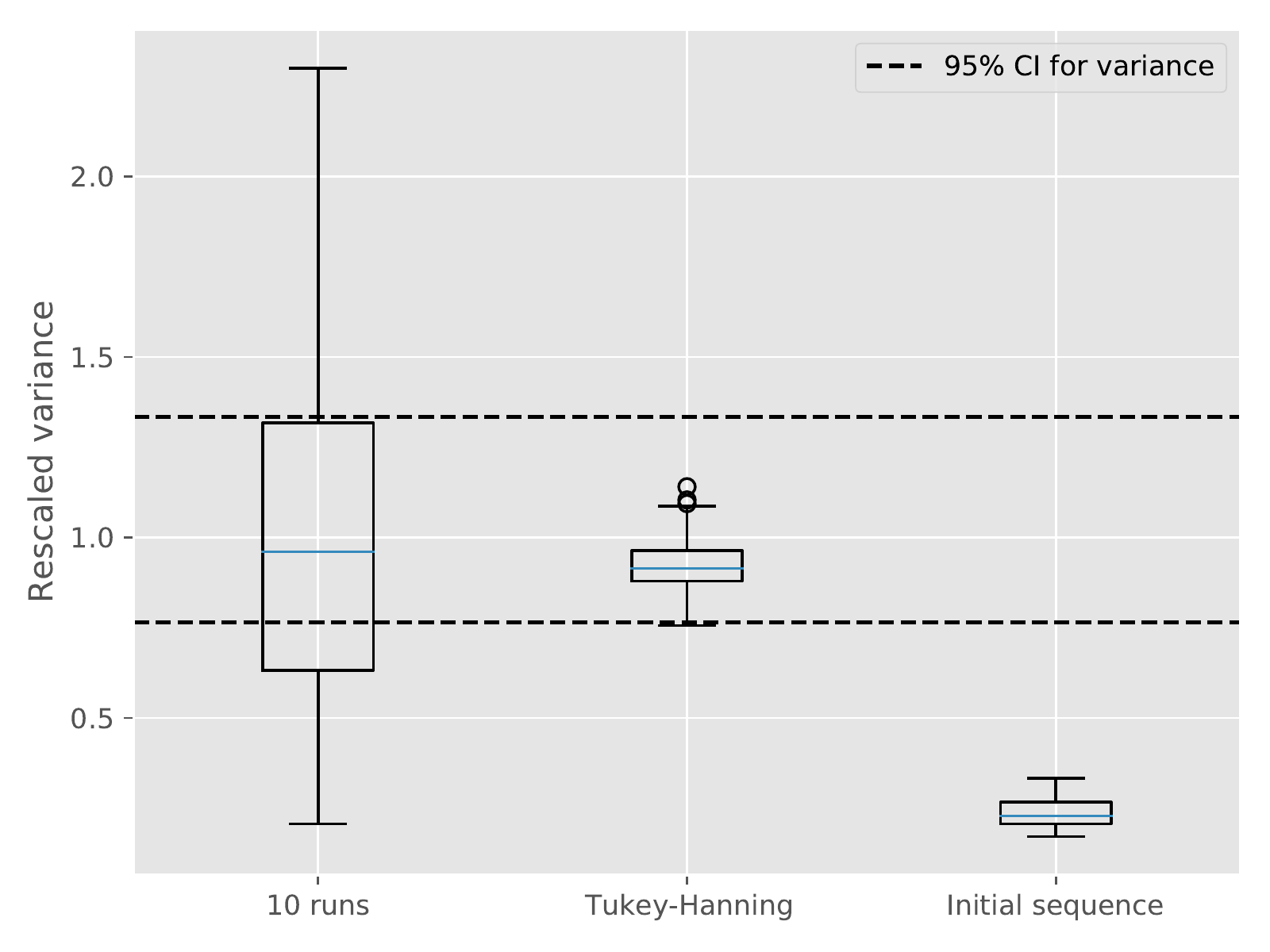}
  \caption{Orthants: box-plots of variance estimates over 100 runs for the
estimate of the log orthant probability. The variance estimates are re-scaled so
that the empirical variance over the 100 runs equals one; see text for more
details.
   \label{fig:orthant_var} 
 }
\end{figure}

\section{Concluding remarks}
\label{sec:conclusion}

\subsection{Connection with nested sampling}
\label{subsec:nested}

In our definition of waste-free SMC, we took $N=MP$, with $P\geq 2$; thus $M$
divides $N$. We may generalise the algorithm to any pair $(M, N)$, $M<N$: at
time $t$, resample $M$ particles, generate $M$ chains of length $k\eqdef \lfloor
N /M \rfloor $ (using kernel $M_t$, and the resampled particles as the starting
points); then select (without replacement) $N-Mk$ chains and extend them to have
length $k+1$. The total number of particles is then $N$.

One interesting special case is $M=N-1$. In that case, $N-1$ particles are
resampled (thus at least one particle is discarded), and, among these $N-1$
resampled particles, only one particle is moved through kernel $M_t$. In addition,
if the target distributions $\pi_t$ are of the form $\pi_t(\dx)\propto \nu(\dx)
\ind\{ L(x) \geq l_t\}$, where $\nu$ is a prior distribution, and $L$ a
likelihood function, then one recovers essentially the nested sampling algorithm
of \cite{MR2282208}.

This raises the question whether the regime $M=N-1$ is useful, either for such a
sequence of distributions, or more generally. For the former, the numerical
experiments of \cite{salomone2018unbiased} seem to indicate than standard SMC,
when applied to this type of sequence, may perform as well as nested sampling.
This suggests waste-free SMC should also perform at least as well as nested
sampling, although we leave that point for further investigation. For the
latter, we note that taking $M=N-1$ is not very convenient, as this means we
move only one particle at each iteration, although each iteration costs
$\bigO(N)$. (In nested sampling, the cost of a single iteration may be reduced
to $\bigO(1)$ by using the fact that weights are either 0 or 1.)

\subsection{Further work}
\label{subsec:label}

Our convergence results assume that the kernels $M_t$ are uniformly ergodic.
However, many practical MCMC kernels are not uniformly ergodic, hence it seems
worthwhile to extend these results to, say, geometrically ergodic kernels. 
Another result we would like to establish is that waste-free SMC dominates
standard SMC in terms of asymptotic variance, at least under certain conditions
on the mixing of the kernels $M_t$.

In terms of applications, we wish to explore how waste-free may be implemented
in various SMC schemes, in particular in the SMC$^2$ algorithm of
\cite{MR3065473}. This algorithm is an SMC sampler with expensive Markov kernels
(as a single step amounts to propagate a large number of particles in a
``local'' particle filter), hence the benefits brought by waste-free SMC may be
particularly valuable in this type of scenario. 

The original implementation of the numerical examples may be found at
\url{https://github.com/hai-dang-dau/waste-free-smc}.  Waste-free SMC is also
now implemented in the \verb+particles+ library, see 
\url{https://github.com/nchopin/particles}.

\section*{Acknowledgements}

The first author acknowledges a CREST PhD scholarship via AMX funding. The
second author acknowledges partial support from Labex Ecodec
(Ecodec/ANR-11-LABX-0047).  We are grateful to Chris Drovandi, Pierre Jacob and
two anonymous referees for helpful comments on a preliminary version of the
paper. 

\appendix
\section{Proofs}

\subsection{Proof of Proposition~\ref{prop:unbiased}}

We may rewrite \eqref{eq:norm_const_est} as:
\[
    \prod_{s=0}^t \left\{ \frac 1 M \sum_{m=1}^M \left( \frac 1 P \sum_{p=1}^P
    G_s(z_s^m[p]) \right) \right\}
\]
where $z_s^m[p]$ stands for variable $\tilde{X}_s^{m,p}$ which is defined
inside Algorithm~\ref{alg:wasteless}.

We recognise the normalising constant estimate of a standard SMC sampler,
Algorithm~\ref{alg:genericSMC}, when applied to the waste-free \FK~model defined
in Proposition~\ref{prop:wf_FK}. The expectation of this quantity is therefore
the normalising constant $L_t^{\wf}=L_t$ (Proposition~\ref{prop:wf_FK}), since
such estimates are unbiased \citep{DelMoral1996unbiased}.

\subsection{Proof of Proposition~\ref{thm:L2}}
\label{sub:proof_thL2}

We start by establishing two technical lemmas regarding a uniformly ergodic
Markov chain $(X_p)_{p\geq 0}$, $(X_p)$ for short, on probability space
$(\setX, \mathbb{X})$; i.e. $\tv{K^k(x,\dx') -  \pi(\dd x')} \leq C \rho^k$ for
certain constants $C>0$ and $\rho<1$ and a certain probability distribution
$\pi$, where $K^k(x, \dx)$ stands for the $k-$fold Markov kernel that defines
the distribution of $X_{p+k}$ given $X_p$. Then $\pi(\dx)$ is its stationary
distribution.

\begin{lem}
    \label{lem:unif_ergo_finite_variance} Assume that $(X_p)$ is stationary,
    i.e. $X_0\sim \pi(\dx)$, and therefore
    $X_p \sim \pi(\dx)$ for all $p\geq 0$. Then there exists a constant $C_1>0$
    such that:
    \begin{equation*}
        \Var\left(\varphi( X_0)\right)
        + 2 \sum_{k=1}^\infty \left|\Cov\left(\varphi(X_0), \varphi(X_k)\right)\right|
        \leq C_1 \infnorm{\varphi}^2
    \end{equation*}
    for any measurable bounded function $\varphi:\setX\rightarrow \R$.
\end{lem}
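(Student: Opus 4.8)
The plan is to exploit the uniform ergodicity assumption to control the covariances geometrically. First I would note that the variance term $\Var(\varphi(X_0))$ is trivially bounded by $\infnorm{\varphi}^2$ (in fact by $\frac14 (2\infnorm{\varphi})^2 = \infnorm{\varphi}^2$, but any crude bound suffices), so the whole task reduces to bounding the sum of absolute covariances $\sum_{k=1}^\infty \abs{\Cov(\varphi(X_0),\varphi(X_k))}$ by a geometric series.

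The key step is a pointwise covariance bound. Since $(X_p)$ is stationary with $X_0\sim\pi$, I would write
\[
  \Cov\pr{\varphi(X_0),\varphi(X_k)}
  = \int_\setX \pi(\dx)\, \ps{\varphi(x)-\pi(\varphi)}
      \ps{K^k(\varphi)(x) - \pi(\varphi)},
\]
using that $\E\ps{\varphi(X_k)\mid X_0=x} = K^k(\varphi)(x)$ and $\pi(\varphi)=\pi K^k(\varphi)$. The inner factor can be rewritten as $\int K^k(x,\dd y)\ps{\varphi(y)-\pi(\varphi)}$ which, because $\pi$ integrates $\varphi-\pi(\varphi)$ to zero, equals $\int \ps{K^k(x,\dd y)-\pi(\dd y)}\varphi(y)$. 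Bounding this by the total variation distance gives
\[
  \abs{K^k(\varphi)(x)-\pi(\varphi)}
  \leq 2\,\infnorm{\varphi}\,\tv{K^k(x,\cdot)-\pi}
  \leq 2 C \rho^k \,\infnorm{\varphi},
\]
where the final inequality is exactly Assumption (M) / the uniform ergodicity hypothesis of the lemma. Combining with $\abs{\varphi(x)-\pi(\varphi)}\leq 2\infnorm{\varphi}$ and integrating against $\pi$ yields $\abs{\Cov(\varphi(X_0),\varphi(X_k))}\leq 4C\infnorm{\varphi}^2\rho^k$.

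Finally I would sum the geometric series: $2\sum_{k=1}^\infty 4C\infnorm{\varphi}^2\rho^k = 8C\infnorm{\varphi}^2 \rho/(1-\rho)$, which is finite since $\rho<1$. Adding the variance bound gives the claim with, for instance, $C_1 \eqdef 1 + 8C\rho/(1-\rho)$, a constant depending only on $C$ and $\rho$ (hence on the kernel through Assumption (M)) and not on $\varphi$. The only mild subtlety — and the step I would be most careful about — is the passage from the total variation bound to the factor $2\infnorm{\varphi}$: one must recall that for a signed measure $\mu$ of total mass zero, $\abs{\int \varphi\, \dd\mu}\leq \infnorm{\varphi}\,\abs{\mu}(\setX)$ and that $\tv{\cdot}$ as defined in the paper is the $\sup$ over events, so $\abs{\mu}(\setX) = 2\tv{\mu}$; this is where the factor of $2$ (and ultimately the constant $4$) enters. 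Everything else is a routine geometric summation.
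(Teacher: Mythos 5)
Your proof is correct and follows essentially the same route as the paper's: both express the covariance through the $k$-step kernel as $\int \pi(\dx_0)\,\varphi(x_0)\bigl[K^k(\varphi)(x_0)-\pi(\varphi)\bigr]$ (the paper does not center the outer factor $\varphi(x_0)$, which saves a factor of $2$ in the constant but is otherwise identical), bound the bracketed term by the total-variation distance and Assumption (M), and sum the resulting geometric series. Your care about the factor $2\infnorm{\varphi}$ arising from the paper's sup-over-events definition of $\tv{\cdot}$ is exactly the right point to be careful about, and it is handled correctly.
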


\begin{proof}
    One has
    \begin{align*}
        \left|\Cov\left(\varphi (X_0), \varphi (X_k)\right)\right|
    &= \left|\E[\varphi (X_0) \varphi (X_k)] - \E[\varphi X_0] \E[\varphi X_k]\right| \\
    &= \left|\int \left\{\int \varphi(x_k) K^k(x_0,\dx_k) - \int \varphi(x_k)
    \pi(\dx_k)  \right\} \varphi(x_0)\pi(\dx_0) \right| \\
    &\leq 2 \rho^k C \infnorm{\varphi}^2 
    \end{align*}
    from which the result follows.
\end{proof}

In the second lemma, the distribution of the initial state $X_0$ is arbitrary,
and therefore the chain is not necessarily stationary.

\begin{lem} \label{lem:var_unif_ergo} 
    There exists a constant $C_2>0$ (which does not depend on the initial
    distribution of the chain, i.e. the distribution of $X_0$), such that
    \begin{equation*}
        \Var\left(\frac 1 P \sum_{p=1}^P \varphi(X_p)\right) 
        \leq C_2 \frac{\infnorm{\varphi}^2}{P}
    \end{equation*}
    for any $P\geq 1$ and any bounded measurable function
    $\varphi:\setX\rightarrow \R$.
\end{lem}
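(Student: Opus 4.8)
The plan is to expand the variance directly as a double sum of covariances and to control the off-diagonal terms using uniform ergodicity, being careful throughout to obtain bounds that are uniform over the (arbitrary) law of $X_0$. Writing $\mu_p$ for the distribution of $X_p$, I would start from
\[
  \Var\pr{\frac 1 P \sum_{p=1}^P \varphi(X_p)}
  = \frac{1}{P^2}\sum_{p=1}^P \Var\pr{\varphi(X_p)}
  + \frac{2}{P^2}\sum_{p=1}^P\sum_{k=1}^{P-p}\Cov\pr{\varphi(X_p),\varphi(X_{p+k})}.
\]
The diagonal terms are immediate, $\Var(\varphi(X_p)) \le \infnorm{\varphi}^2$, so they contribute at most $\infnorm{\varphi}^2 / P$ after dividing by $P^2$ and summing.

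For the off-diagonal terms the main idea is to remove any explicit dependence on the unknown marginals $\mu_p$. Conditioning on $X_p$ and using the Markov property gives $\E[\varphi(X_{p+k}) \mid X_p] = (K^k\varphi)(X_p)$, whence $\Cov(\varphi(X_p),\varphi(X_{p+k})) = \Cov\pr{\varphi(X_p),(K^k\varphi)(X_p)}$. Uniform ergodicity then controls $K^k\varphi$ around the constant $\pi(\varphi)$: for every $x$,
\[
  \abs{(K^k\varphi)(x) - \pi(\varphi)}
  = \abs{\int \varphi(y)\ps{K^k(x,\dd y) - \pi(\dd y)}}
  \le 2 C\rho^k \infnorm{\varphi}.
\]
Since a covariance is unchanged when a constant is subtracted from either argument, I would replace the second argument by $(K^k\varphi)(X_p) - \pi(\varphi)$ and apply the crude bound $\abs{\Cov(U,V)} \le 2\infnorm{U}\infnorm{V}$, yielding $\abs{\Cov(\varphi(X_p),\varphi(X_{p+k}))} \le 4C\rho^k \infnorm{\varphi}^2$. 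The crucial point is that this estimate holds for every $p$ and involves neither $\mu_p$ nor stationarity, which is exactly what produces a constant independent of the initial distribution.

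Summing the geometric series, $\sum_{p=1}^P\sum_{k\ge 1}\rho^k = P\rho/(1-\rho)$, bounds the off-diagonal contribution by $\tfrac{2}{P^2}\cdot P \cdot \tfrac{4C\rho}{1-\rho}\infnorm{\varphi}^2$, and combining with the diagonal term gives the claim with $C_2 = 1 + 8C\rho/(1-\rho)$, which depends only on $C$ and $\rho$. The one subtlety — and the reason Lemma~\ref{lem:unif_ergo_finite_variance} cannot simply be reused — is that the chain here is \emph{not} stationary, so the marginals $\mu_p$ are unknown and may differ wildly from $\pi$; the entire argument therefore rests on the fact that uniform ergodicity bounds $K^k\varphi - \pi(\varphi)$ pointwise in the starting state, hence uniformly over all possible marginals. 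This is the main obstacle, and once it is isolated the remaining steps are routine.
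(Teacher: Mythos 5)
Your proof is correct, but it takes a genuinely different route from the paper's. You expand the variance directly into diagonal and covariance terms, use the Markov property to write $\Cov\pr{\varphi(X_p),\varphi(X_{p+k})}=\Cov\pr{\varphi(X_p),(K^k\varphi)(X_p)}$, and then exploit the fact that uniform ergodicity bounds $K^k\varphi-\pi(\varphi)$ \emph{pointwise in the starting state} by $2C\rho^k\infnorm{\varphi}$, which is exactly what makes the bound uniform over the law of $X_0$; summing the geometric series gives an explicit constant $C_2=1+8C\rho/(1-\rho)$. The paper instead couples $(X_p)$ maximally at a time $R$ to a stationary copy $(X_p^\star)$, applies Lemma~\ref{lem:unif_ergo_finite_variance} to the stationary chain, bounds the two boundary/decoupling terms crudely (by $R^2/P^2$ and by the coupling failure probability $C\rho^R$), and then chooses $R=\lceil\sqrt{P}\rceil$. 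Your argument is shorter, avoids the coupling machinery entirely, and yields a cleaner and sharper constant; you are also right that Lemma~\ref{lem:unif_ergo_finite_variance} cannot be invoked directly since the chain is not stationary. What the paper's route buys is that the same coupling decomposition is the engine of the proof of Theorem~\ref{thm:clt_longchain} (where a second-moment computation alone would not give a CLT), so introducing it here lets the later, harder proof reuse the construction verbatim. Both proofs are valid; yours is the more economical one for this particular lemma.
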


\begin{proof}
  The proof relies on a standard coupling argument, see e.g. Chapter 19 of
  \cite{MR3889011}. We introduce an arbitrary integer $R$, $1 \leq R \leq P$,
  and a Markov chain $(X_p^\star)$ constructed as follows: (a) $X_0^\star\sim
  \pi(\dx)$, the stationary distribution of $(X_p)$; (b) variables $X_R$,
  $X_R^\star$ are maximally coupled, which implies that:
  \begin{equation}
    \label{eq:coupling_prob}
    \P(X_R \neq X_R^\star) = \tv{\int \mu(\dx_0) K(x_0,\dx_p) - \pi(\dx_p)}  \leq C \rho^R
  \end{equation}
  where $\mu(\dx_0)$ denotes the probability distribution of $X_0$, and the
  inequality stems from the uniform ergodicity of the chain; (c) if
  $X_R=X_R^\star$, the two chains remain equal until time $P$, otherwise they
  are independent; (d) the distribution of $X_1^\star,\ldots,X_{R-1}^\star$
  given $X_0^\star$, $X_R^\star$ is the conditional distribution of these states
  induced by $K(x, \dx')$, the Markov kernel of $(X_p)$. For more details on
  maximal coupling of two probability distributions, see e.g. Chap. 19 of 
  \cite{MR3889011}.
  
  Using the inequality $\Var(X + Y + Z) \leq 3 (\Var(X) + \Var(Y) + \Var(Z))$,
  we have:
  \begin{align*}
    \Var\left( \frac 1 P \sum_{p=1}^P \varphi(X_p) \right)
    & \leq 3 \Var\left( \frac 1 P \sum_{p=1}^P \varphi(X_p^\star) \right) 
      + 3 \Var\left( \frac 1 P \sum_{p=1}^{R} \left\{\varphi(X_p)
      - \varphi(X_p^\star) \right\} \right) \\
    & + 3 \Var\left( \ind\{X_R\neq X_R^\star\}
      \frac 1 P \sum_{p=R}^P \left\{\varphi(X_p) - \varphi(X_p^\star) \right\} \right) \\
    & \leq  \frac{ 3 C_1\infnorm{\varphi}^2}{P}
      + \frac{12 R^2 \infnorm{\varphi}^2 }{P}
      + C \rho^R \infnorm{\varphi}^2 
  \end{align*}
  where we have applied Lemma~\ref{lem:unif_ergo_finite_variance} to the first
  term, and \eqref{eq:coupling_prob} to the third term. We conclude by taking $R
  = \lceil \sqrt{P} \rceil$.

\end{proof}

We now prove Proposition~\ref{thm:L2} by induction. Clearly, \eqref{eq:L2_pred}
holds at time $0$. The implication \eqref{eq:L2_pred} $\Rightarrow$
\eqref{eq:L2_filt} at time $t$ follows the same lines as for a standard SMC
sampler, see e.g. Section 11.2.2 in \cite{SMCbook}. Now assume that
\eqref{eq:L2_filt} holds at time $t-1\geq 0$, and let
$\bar{\varphi}=\varphi-\Q_{t-1}(\varphi)$,
$\mathcal{F}_{t-1}=\sigma(X_{t-1}^{1:N})$ (the $\sigma$-field generated by
variables $X_{t-1}^n$, $n=1,\ldots,N$). Then
\begin{flalign*}
  \MoveEqLeft \CE{\pr{\frac 1 N \sum_{n=1}^N \varphi(X_t^n) - \Q_{t-1}
      (\varphi)}^2}{\mathcal{F}_{t-1}} \\
  & = \CE{\pr{\frac 1 M \sum_{m=1}^M \frac 1 P
      \sum_{p=1}^P \bar{\varphi}(\tilde{X}_t^{m,p}) }^2}{\mathcal{F}_{t-1}} \\
  &= \pr{\CE{\frac 1 P \sum_{p=1}^P
      \bar{\varphi}(\tilde{X}_t^{1,p})}{\mathcal{F}_{t-1}} }^2 + \frac 1 M
  \CVar{\frac 1 P \sum_{p=1}^P \bar{\varphi}(X_t^{1,p})}{\mathcal{F}_{t-1}}
\end{flalign*}
since the blocks of variables $X_t^{m,1:P}$ are IID (independent and identically
distributed) conditional on $\mathcal{F}_{t-1}$.
 
The expectation of the first term may be bounded by $c'_{t-1}
\infnorm{\varphi}^2/N$ by applying \eqref{eq:L2_filt} to function $ P^{-1}
\sum_{p=0}^{P-1} M_t^p \varphi$. The second term may be bounded by $C_2
\infnorm{\varphi}^2 /N$ using Lemma~\ref{lem:var_unif_ergo}.

\subsection{Proof of Theorem~\ref{thm:clt_longchain}}
\label{sec:proof_thm_clt_long}

We start by proving a few basic lemmas. The first one concerns product measures.
We use symbol $\bigotimes$ throughout to represent the product of two
probability measures.

\begin{lem}[Total variation distance for product measure]
  \label{lem:tv_prod_mes}
  Let $\mu_{1:N}$ and $\nu_{1:N}$ be $2N$ probability measures on $(\setX,
  \mathbb{X})$. Then the following inequality holds:
$$ \tv{\bigotimes_{n=1}^N \mu_n-\bigotimes_{n=1}^N \nu_n}
\leq \sum_{n=1}^N \tv{\mu_n- \nu_n}. $$
\end{lem}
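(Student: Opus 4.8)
The plan is to prove the bound via the coupling characterisation of total variation distance, which is already the tool used in the proof of Lemma~\ref{lem:var_unif_ergo}. Recall that for two probability measures $\mu,\nu$ on $(\setX,\mathbb{X})$ there exists a \emph{maximal coupling} $(X,Y)$ with $X\sim\mu$, $Y\sim\nu$ and $\P(X\neq Y)=\tv{\mu-\nu}$, and that more generally $\tv{\mu-\nu}\leq \P(X\neq Y)$ for \emph{any} coupling. First I would, for each $n$, fix a maximal coupling $(X_n,Y_n)$ of $\mu_n$ and $\nu_n$, so that $X_n\sim\mu_n$, $Y_n\sim\nu_n$ and $\P(X_n\neq Y_n)=\tv{\mu_n-\nu_n}$, and I would take these $N$ couplings mutually independent.

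By independence across $n$, the vectors $X\eqdef(X_1,\ldots,X_N)$ and $Y\eqdef(Y_1,\ldots,Y_N)$ have laws $\bigotimes_{n=1}^N\mu_n$ and $\bigotimes_{n=1}^N\nu_n$ respectively, so $(X,Y)$ is a coupling of the two product measures. Applying the coupling upper bound, then the union bound, gives
\[
  \tv{\bigotimes_{n=1}^N \mu_n-\bigotimes_{n=1}^N \nu_n}
  \leq \P(X\neq Y)
  = \P\Big(\bigcup_{n=1}^N \{X_n\neq Y_n\}\Big)
  \leq \sum_{n=1}^N \P(X_n\neq Y_n)
  = \sum_{n=1}^N \tv{\mu_n-\nu_n},
\]
which is exactly the claimed inequality.

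The argument is short, and the only point that needs care is the measure-theoretic justification of the coupling characterisation of $\tv{\cdot}$ — the existence of a maximal coupling attaining $\P(X\neq Y)=\tv{\mu-\nu}$, and the fact that an arbitrary coupling upper-bounds the distance — both of which are standard (see e.g. Chapter 19 of \cite{MR3889011}, already cited for Lemma~\ref{lem:var_unif_ergo}). An alternative route avoiding couplings would be a telescoping argument: interpolate between the two products one factor at a time, write $\bigotimes\mu_n-\bigotimes\nu_n$ as a sum of $N$ differences of product measures each differing in a single coordinate, bound each such difference by $\tv{\mu_n-\nu_n}$, and conclude by the triangle inequality. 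In that version the main obstacle is the one-factor reduction $\tv{\lambda\otimes\mu-\lambda\otimes\nu}=\tv{\mu-\nu}$, which follows immediately from writing both measures with respect to a common dominating measure; I would favour the coupling proof as the more direct of the two.
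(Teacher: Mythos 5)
Your proof is correct, but it takes a genuinely different route from the paper. The paper's proof is a telescoping argument: it interpolates between the two product measures one factor at a time, bounds $\tv{\mu_1\otimes\mu_2-\mu_1\otimes\nu_2}$ by writing the total variation norm as a supremum over functions $f:\setX^2\to[0,1]$ and reducing to a supremum over functions of the second coordinate only, and then proceeds recursively for $N\geq 3$ --- exactly the ``alternative route'' you sketch in your last paragraph, except that the one-factor reduction is handled via the functional characterisation of $\tv{\cdot}$ rather than via a common dominating measure. Your main argument instead couples each pair $(\mu_n,\nu_n)$ maximally and independently across $n$ and applies the union bound. Both are valid: with the paper's normalisation $\tv{\mu-\nu}=\sup_{A\in\mathbb{X}}|\mu(A)-\nu(A)|$, the coupling inequality $\tv{\mu-\nu}\leq\P(X\neq Y)$ and its attainment by a maximal coupling hold on an arbitrary measurable space (the maximal coupling is built from Radon--Nikodym densities with respect to $\mu+\nu$, so no topological assumptions are needed), and the independent product of the $N$ couplings is indeed a coupling of the two product measures. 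What each approach buys: yours is shorter, conceptually transparent, and reuses the coupling machinery already invoked for Lemma~\ref{lem:var_unif_ergo}, at the cost of importing the existence of maximal couplings as an external fact; the paper's telescoping proof is entirely self-contained, using only the variational definition of the total variation norm and the triangle inequality, which is arguably preferable for a two-line appendix lemma. Either proof is acceptable.
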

\begin{proof}
  Take $N=2$. Then
  \[ \tv{\mu_1 \otimes \mu_2 - \nu_1 \otimes \nu_2} \leq \tv{\mu_1 \otimes \mu_2
      - \mu_1 \otimes \nu_2} + \tv{\mu_1 \otimes \nu_2 - \nu_1 \otimes \nu_2}
  \]
  and we may bound the first term as follows:
  \begin{flalign*}
    \MoveEqLeft \tv{\mu_1 \otimes \mu_2 - \mu_1 \otimes \nu_2} \\
    & = \sup_{f: \setX^2 \to [0,1]} \abs{\int \left( \int f(x,y)\mu_1
        (\dx)\right)
      \mu_2 (\dd y) - \int \pr{\int f(x,y)\mu_1 (\dx)} \nu_2 (\dd y)} \\
    & \leq \sup_{g: \setX \to [0,1]} \abs{\int g(y) \mu_2(\dd y) - \int g(y)
      \nu_2(\dd y)} = \tv{\mu_2 - \nu_2}.
  \end{flalign*}
  The result follows by bounding the second term similarly. For $N\geq 3$,
  proceed recursively.
\end{proof}

The two next lemmas concern the behaviour of $M\geq 1$ independent, stationary,
Markov chains, $(Y_p^m)_{p\geq 0}$ on $(\setX, \mathbb{X})$, $m=1,\ldots,M$ with
uniformly ergodic Markov kernel $K$, and invariant distribution $\pi$:
$\tv{\delta_x K^p - \pi} \leq C \rho^k$ for constants $C\geq 0$ and
$\rho\in[0,1)$.

\begin{lem}
  \label{lem:product_kernel}
  The product kernel
  \begin{equation*}
    K^{\otimes M} (x_{1:M}, \dx'_{1:M}) = \prod_{m=1}^M K(x_m, \dx'_m)
  \end{equation*}
  is uniformly ergodic, with stationary distribution $\pi^{\otimes M}$.
\end{lem}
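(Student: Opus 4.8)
The plan is to reduce the statement to two elementary facts: the invariance of the product measure under the product kernel, and the observation that iterating the product kernel coincides with taking the product of the iterated kernels. The uniform ergodicity bound then drops out of Lemma~\ref{lem:tv_prod_mes}.

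First I would check that $\pi^{\otimes M}$ is invariant for $K^{\otimes M}$. Since $\pi K = \pi$ by hypothesis, the product structure together with Fubini's theorem gives
\[
  \pi^{\otimes M} K^{\otimes M}(\dd x'_{1:M})
  = \prod_{m=1}^M \left( \int \pi(\dd x_m)\, K(x_m, \dd x'_m) \right)
  = \prod_{m=1}^M \pi(\dd x'_m),
\]
so that $\pi^{\otimes M} K^{\otimes M} = \pi^{\otimes M}$, as required. Next I would establish the iterate identity $(K^{\otimes M})^p = (K^p)^{\otimes M}$ for every $p \geq 1$, which follows by a one-line induction: under the product kernel the $M$ coordinates evolve independently, so running the product chain for $p$ steps amounts to running each of the $M$ marginal chains for $p$ steps. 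Consequently, started from $x_{1:M}$, the $p$-fold product kernel has law $\bigotimes_{m=1}^M K^p(x_m, \cdot)$.

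Finally I would invoke Lemma~\ref{lem:tv_prod_mes} to control the distance to stationarity. Writing $\pi^{\otimes M} = \bigotimes_{m=1}^M \pi$ and applying the lemma, followed by the uniform ergodicity of $K$, yields
\[
  \tv{ (K^{\otimes M})^p(x_{1:M}, \cdot) - \pi^{\otimes M} }
  \leq \sum_{m=1}^M \tv{ K^p(x_m, \cdot) - \pi }
  \leq M C \rho^p ,
\]
and the right-hand side does not depend on $x_{1:M}$. Taking $C' = M C$ and $\rho' = \rho \in [0,1)$ then exhibits $K^{\otimes M}$ as uniformly ergodic with invariant distribution $\pi^{\otimes M}$, which is exactly the claim.

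I do not anticipate a genuine obstacle here, as every ingredient is routine. The only points deserving a little care are the iterate identity $(K^{\otimes M})^p = (K^p)^{\otimes M}$, which hinges on the conditional independence of the coordinates under the product kernel, and the verification that the final bound $M C \rho^p$ is uniform in the starting state $x_{1:M}$ — both of which hold precisely because the product construction decouples the $M$ chains.
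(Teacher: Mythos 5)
Your argument is correct and follows the same route as the paper: the paper's proof is a one-line application of Lemma~\ref{lem:tv_prod_mes} giving $\tv{\delta_{x_{1:M}}(K^{\otimes M})^p - \pi^{\otimes M}} \leq \sum_{m=1}^M \tv{\delta_{x_m}K^p - \pi} \leq CM\rho^p$. You simply make explicit two steps the paper leaves implicit (the invariance of $\pi^{\otimes M}$ and the identity $(K^{\otimes M})^p = (K^p)^{\otimes M}$), which is harmless.
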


\begin{proof}
  This is a direct consequence of Lemma~\ref{lem:tv_prod_mes}:
  \begin{align*}
    \tv{\delta_{x_{1:M}} \left( K^{\otimes M} \right)^p - \pi^{\otimes M}}
    & \leq \sum_{m=1}^M \tv{\delta_{x_m}K^p -\pi} \\
    & \leq C M \rho^p. 
  \end{align*}
\end{proof}

\begin{lem}
  \label{lem:clt_Mchains}
  For $\varphi:\setX \rightarrow \R$ measurable and bounded, one has:
  \[
    \sqrt{MP} \left( \frac{\sum_{m=1}^M\sum_{p=1}^P \varphi(Y_p^m) }{MP} -
      \pi(\varphi) \right) \cvd \N\left( 0, \vas(K, \varphi) \right)
  \]
  as $P\rightarrow +\infty$, whether $M\geq 1$ is fixed, or $M$ grows with $P$;
  i.e. $M=M(P)\rightarrow +\infty$ as $P\rightarrow +\infty$.
\end{lem}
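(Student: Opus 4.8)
The plan is to reduce this statement to a central limit theorem for a single, long, uniformly ergodic Markov chain, and then exploit the independence of the $M$ chains to handle the case where $M$ also grows. First I would consider the fixed-$M$ case. For a single stationary uniformly ergodic chain, Lemma~\ref{lem:unif_ergo_finite_variance} guarantees that the asymptotic variance $\vas(K,\varphi)$ is finite (the series of covariances converges absolutely). A standard CLT for stationary, uniformly ergodic (hence $\rho$-mixing, or geometrically ergodic) Markov chains then yields
\[
  \sqrt{P}\left( \frac{1}{P}\sum_{p=1}^P \varphi(Y_p^m) - \pi(\varphi) \right)
  \cvd \N\left(0, \vas(K,\varphi)\right)
\]
for each individual chain $m$. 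When $M$ is fixed, the $M$ chains are independent, so the vector of these $M$ standardized partial sums converges jointly to $\N(0, \vas(K,\varphi) I_M)$; averaging the $M$ components and rescaling by $\sqrt{M}$ (note $\sqrt{MP}=\sqrt{M}\sqrt{P}$) gives the claimed limit, since an average of $M$ independent $\N(0,\vas)$ variables, scaled by $\sqrt M$, is again $\N(0,\vas)$.

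For the case $M=M(P)\rightarrow\infty$, the cleanest route is to view the combined process as a single Markov chain on the product space $\setX^M$ driven by the product kernel $K^{\otimes M}$. Lemma~\ref{lem:product_kernel} tells us this product chain is itself uniformly ergodic with stationary law $\pi^{\otimes M}$, so the single-chain CLT applies with the test function $\Phi(y_{1:M}) = M^{-1}\sum_{m=1}^M \varphi(y_m)$. The subtlety is that $M$ is now changing with $P$, so I would instead set up a triangular-array argument: write $S_{M,P} = \sum_{m=1}^M \sum_{p=1}^P \bar\varphi(Y_p^m)$ with $\bar\varphi=\varphi-\pi(\varphi)$, and verify a Lindeberg-type CLT for the array of row sums $T_m = \sum_{p=1}^P \bar\varphi(Y_p^m)$, which are IID across $m$ for fixed $P$. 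Each $T_m$ has mean zero and variance $P\,\sigma_P^2$ where $\sigma_P^2 = \Var(P^{-1/2}\sum_p \varphi(Y_p^m)) \to \vas(K,\varphi)$ by Lemma~\ref{lem:unif_ergo_finite_variance} and dominated convergence on the covariance series. Boundedness of $\varphi$ gives $|T_m|\leq 2P\infnorm{\varphi}$, so the Lindeberg condition reduces to checking that $\Var(T_m/\sqrt{P})$ stays bounded and the normalized fourth moment (or a truncation) is controlled; since the $T_m$ are IID and $\sigma_P^2$ converges, a Lyapunov condition on the array suffices.

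The main obstacle I anticipate is the interplay of the two limits: for fixed $M$ the variance parameter $\vas(K,\varphi)$ is reached exactly in the limit, but when $M\to\infty$ the per-chain variance $\sigma_P^2$ is only approximately $\vas(K,\varphi)$ for finite $P$, and I must ensure the approximation error vanishes fast enough relative to the fluctuations across the $M$ rows. Concretely, I would need a quantitative bound of the form $|\sigma_P^2 - \vas(K,\varphi)| = \bigO(\rho^{P}/P)$ or at least $o(1)$ uniform enough that $M(\sigma_P^2-\vas)$ does not contaminate the limit — but because the total sum is normalized by $MP$ and the row contributions enter additively, the relevant quantity is just $\sigma_P^2\to\vas(K,\varphi)$, which the absolutely convergent covariance series from Lemma~\ref{lem:unif_ergo_finite_variance} delivers by dominated convergence. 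The other delicate point is justifying the single-chain CLT itself under only uniform ergodicity; I would cite a standard result (e.g.\ the geometric-ergodicity CLT in \cite{MR3889011}) rather than reprove it, since uniform ergodicity is a strong enough mixing hypothesis to guarantee both the finite asymptotic variance and the Gaussian limit.
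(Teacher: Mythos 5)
Your overall strategy coincides with the paper's: handle fixed $M$ via the single-chain CLT (the paper runs it on the product chain with kernel $K^{\otimes M}$ via Lemma~\ref{lem:product_kernel}, you via joint convergence of $M$ independent chains --- these are equivalent), and handle growing $M$ as a triangular array of IID row sums $T_m=\sum_{p=1}^P\bar{\varphi}(Y_p^m)$ with per-row variance $P\sigma_P^2$ and $\sigma_P^2\to\vas(K,\varphi)$. The paper's characteristic-function computation, with the bound $\E\min\pr{u^2S_P^2,\abs{u^3S_P^3}/6\sqrt{M}}$, is precisely a hand-rolled Lindeberg--Feller argument for this array, so you have identified the right reduction.

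The gap is in your verification of the Lindeberg condition. The crude bound $\abs{T_m}\leq 2P\infnorm{\varphi}$ is useless here: the Lindeberg truncation level is $\epsilon\sqrt{MP}$, which is far below $P\infnorm{\varphi}$ whenever $M\ll P$ (a regime the lemma must cover, e.g. $M$ growing very slowly with $P$), so the indicator in the Lindeberg sum is not trivially zero. Your fallback, that ``a Lyapunov condition on the array suffices,'' requires a bound on $\E\abs{S_P}^{2+\delta}$ uniform in $P$, where $S_P=T_1/\sqrt{P}$; this is true for uniformly ergodic chains with bounded $\varphi$, but it is established neither in your argument nor in the paper's preliminary lemmas (Lemma~\ref{lem:unif_ergo_finite_variance} controls only second moments), and would need a separate Rosenthal-type or fourth-moment inequality. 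The clean way to close the argument with the tools you already invoke is uniform integrability: since $S_P\cvd\N\pr{0,\vas(K,\varphi)}$ by the single-chain CLT and $\E[S_P^2]\to\vas(K,\varphi)$, the family $\px{S_P^2}_P$ is uniformly integrable, hence $\E\ps{S_P^2\,\ind\px{\abs{S_P}>\epsilon\sigma_P\sqrt{M}}}\to 0$ as soon as $M\to\infty$, which is exactly the Lindeberg condition. This is, in substance, what the paper's decomposition $f_{M_0}=f^1_{M_0}+f^2_{M_0}$ accomplishes. With that one step supplied, your proof is complete.
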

\begin{proof}
  For $M=1$, this is simply the classical central limit theorem for uniformly
  ergodic Markov chains, see e.g. Theorem 23 in \cite{roberts2004general} and
  references therein. For $M\geq 2$ fixed, we may apply the same theorem to the
  Markov chain $(Y_p^{1:M})_p$ in $(\setX^M,\mathbb{X}^M)$, which is also
  uniformly ergodic (Lemma~\ref{lem:product_kernel}) and to test function
  $\varphi_M(y^{1:M})=M^{-1}\sum_{m=1}^M\varphi(y^m)$.

  Assume now $M=M(P)$ grows with $P$. Let
  $\bar{\varphi}=\varphi-\pi(\varphi)$ and let $S_P$ denote a variable with
  the same distribution as $S_P^m \eqdef P^{-1/2}\sum_{p=1}^P
  \bar{\varphi}(Y_p^m)$ for $m=1,\ldots, M$. (These $M$ variables are IID.) By the formula (19) of \cite{roberts2004general}, we have $\E[S_P^2] \rightarrow v_\infty(K,\varphi)$. Therefore, 
 fixing $u\in \R$, we wish to prove that $\Delta_P\rightarrow 0$, where
  \begin{equation*}
    \Delta_P := \abs{\pr{\E e^{iuS_P / \sqrt M}}^M - \pr{1 - \frac{u^2}{2M} \E(S_P^2)}^M}. 
  \end{equation*}

  Let $M_0 \geq 1$ be fixed such that $u^2 \E[S_P^2]/2M_0<1$ for all $P > M_0$. Since $\abs{a^M-b^M}\leq M\abs{a-b}$ for $\abs{a}, \abs{b} \leq 1$ and
  $\abs{e^{ix} - 1 - ix + x^2/2} \leq \min(x^2, |x^3|/6)$ for $x \in
  \mathbb{R}$, we have, for any $M \geq M_0$:
  \begin{equation}
    \label{eq:deltaP}
    \Delta_P
    \leq \E \min \pr{u^2 S_P^2, \frac{\abs{u^3 S_P^3}}{6 \sqrt M}}
    \leq \E f_{M_0} (S_P)
  \end{equation}
  where $f_m(x) := \min\pr{u^2 x^2, \abs{u^3 x^3 / 6 \sqrt{m}}} = f_m^1(x) +
  f_m^2(x)$, $f_m^1(x) := u^2 x^2$ and $f_m^2(x):= \ind_{\abs{x} \leq 6 \sqrt m
    / \abs{u}} \pr{\abs{u^3 x^3}/6 \sqrt{m} - u^2x^2}$. Then, if $G$ is a
  Gaussian variable with variance $v_\infty(K,\varphi)$, we have $\E f^1_{M_0} (S_P)
  \to \E f^1_{M_0} (G)$ as $P\rightarrow +\infty$. Moreover, $\E f^2_{M_0} (S_P) \to \E f^2_{M_0}(G)$ by
  Theorem 23 of \cite{roberts2004general} and the fact that $f^2_{M_0}$ is a
  bounded function and is only discontinuous on a set of measure zero with
  respect to a Gaussian distribution. Thus \eqref{eq:deltaP} implies that
  $\limsup_{P \to \infty} \Delta_P \leq \E f_{M_0}(G)$. But $\E f_{M_0}(G) \to
  0$ as $M_0 \to \infty$ by the dominated convergence theorem, hence $\Delta_P
  \to 0$ and the lemma is proved.
\end{proof}

We now prove Theorem~\ref{thm:clt_longchain}. We proceed by induction:
\eqref{eq:clt_pred} at time 0 is simply the standard central limit theorem for
IID variables. The implication \eqref{eq:clt_pred} $\Rightarrow$
\eqref{eq:clt_filt} at time $t$ may be established exactly as in other proofs
for central limit theorems for SMC algorithms; see e.g. Section 11.3 of
\cite{SMCbook}.

We now assume that \eqref{eq:clt_filt} holds at time $t-1\geq 0$, and we wish to
show that \eqref{eq:clt_pred} holds at time $t$, or, equivalently, that:
\begin{equation}
  \label{eq:clt_in_P}
  \frac{1}{\sqrt P} \sum_{p=1}^P \varphi_{M}(Z_p)
  \Rightarrow \N\left(0, v_\infty(M_t, \varphi)\right)
\end{equation}
where (dropping the dependence on $t$ as it is fixed) $Z_p\eqdef
(\tilde{X}_t^{1,p},\ldots,\tilde{X}_t^{M,p})$ is a Markov chain on $\setX^M$,
which is uniformly ergodic (Lemma~\ref{lem:product_kernel}), and
$\varphi_M(z)=M^{-1/2}\sum_{m=1}^M\bar{\varphi}(z[m])$.

We apply the coupling construction we used in the proof of Lemma
\ref{lem:var_unif_ergo} to this Markov chain: we introduce a stationary Markov
chain, $(Z_p^\star)$, with the same Markov kernel as $(Z_p)$, i.e. $M_t^{\otimes
  M}$, which is coupled to $(Z_p)$ at time $R$, $1\leq R \leq P$, with maximum
coupling probability:
\begin{equation}
  \label{eq:n_alpha_coupling_bound}
  \P(Z_R \neq Z_R^\star) =
  \tv{\mathcal{L}(Z_1) (M_t^{\otimes M})^R - \pi_{t-1}^{\otimes M}}
  \leq M C \rho^R
\end{equation}
If the two chains are successfully coupled at time $R$, they remain equal at
times $R+1, \ldots, P$.

We decompose the left-hand side of \eqref{eq:clt_in_P} as:
\begin{multline}
  \label{eq:decomp_clt}
  \frac{1}{\sqrt P} \sum_{p=1}^P \varphi_{M}(Z_p) = \frac{1}{\sqrt P}
  \sum_{p=1}^P \varphi_{M}(Z_p^\star)
  + \frac{1}{\sqrt P} \sum_{p=1}^R \varphi_{M}(Z_p)\\
  - \frac{1}{\sqrt P} \sum_{p=1}^R \varphi_{M}(Z_p^\star) + \frac{1}{\sqrt P}
  \ind\{Z_R\neq Z_R^\star\} \sum_{p=R+1}^P \left( \varphi_M(Z_p) -
    \varphi_M(Z_p^\star) \right).
\end{multline}

The first terms converges to $\N\left( 0,v_\infty(M_t,\varphi) \right)$, see Lemma
\ref{lem:clt_Mchains}. What remains to prove is that the three other terms
converge to zero in probability.

The fourth term is non-zero with probability \eqref{eq:n_alpha_coupling_bound},
and tends to zero as soon as $R\rightarrow +\infty$; e.g. $R=\bigO(P^\beta)$,
$\beta\in(0,1)$. Using the inequality $\Var(Y_1+\ldots+Y_R) \leq R \left(
  \Var(Y_1) + \ldots +\Var(Y_R) \right)$, we may bound the the $L^2$ norm of the
third term as follows:
\begin{equation*}
  \Var\left( \frac 1 {\sqrt{P}} \sum_{p=1}^R \varphi_M(Z_p^\star) \right)
  \leq \frac{R^2}{P} \Var_\pi(\bar{\varphi})
  \leq 2 \frac{R^2}{P} \infnorm{\varphi}^2
\end{equation*}
which tends to zero as soon as $R^2\ll P$, e.g. $R=\bigO(P^\beta)$,
$\beta\in(0,1/2)$.

The second term equals:
\begin{equation}
  \label{convg:n_alpha:2nd_term_bound}
  R  \sqrt{\frac{M}{P}} \pr{\frac 1 M \sum_{m=1}^M \frac 1 R
    \sum_{p=1}^R \bar{\varphi}(\tilde{X}_t^{m,p})}
\end{equation}
and, since the $M$ chains $\tilde{X}_t^{m,1:P}$ are independent, for
$m=1,\ldots,M$, conditional on $\mathcal{F}_{t-1}=\sigma(X_{t-1}^{1:N})$, we
have:
\begin{flalign*}
  \MoveEqLeft  \CE{\pr{\frac 1 M \sum_{m=1}^M \frac 1 R \sum_{p=1}^R \bar{\varphi}(X_{t}^{m,p}) }^2 }{\mathcal{F}_{t-1}} \\
  &= \pr{\CE{\frac 1 R \sum_{p=1}^R
      \bar{\varphi}(\tilde{X}_t^{1,p})}{\mathcal{F}_{t-1}}}^2
  + \frac 1 M \CVar{\frac 1 R \sum_{p=1}^R \varphi(\tilde{X}_{t}^{m,p})}{\mathcal{F}_{t-1}} \\
  &\leq \left\{\Q_{t-1}^N \pr{\frac 1 R \sum_{p=1}^R M_t^{p-1}\bar{\varphi}}
  \right\}^2 + \frac 2 M \infnorm{\varphi}^2
\end{flalign*}
where $\Q_{t-1}^N(\varphi)=\sum_{n=1}^NW_{t-1}^n \varphi(X_{t-1}^n)$.

The expectation of the first term can be bounded by a constant times
$\infnorm{\varphi}^2 /N$ by Proposition~\ref{thm:L2}, thus the $\mathbb{L}^2$ norm
of \eqref{convg:n_alpha:2nd_term_bound} is $\bigO(R/\sqrt{MP})$, which tends to
zero as soon $R^2\ll MP$. Taking $R=\bigO(P^\beta)$, $\beta\in(0, 1/2)$
therefore ensures that all the terms in \eqref{eq:decomp_clt}, minus the first,
goes to zero.

\subsection{Proof of Theorem~\ref{thm:clt_norm_cst}}
\label{subsec:proof_clt_norm_cst}

Before proving Theorem~\ref{thm:clt_norm_cst}, we need to define some new notations
to work comfortably with the convergence of conditional distributions. We start with
a simple example.

Most Markov chains used in MCMC algorithms admit a central limit theorem
regardless of its starting point, i.e., one has, for a Markov chain $(Y_p)$
with invariant distribution $\pi$, and 
and a fixed point $y_1$,
\begin{equation*}
  \left. \sqrt P \pr{\frac 1 P \sum_{p=1}^P \varphi(Y_p) - \pi(\varphi)} \right
  \vert Y_1 = y_1 \Rightarrow \N(0, \sigma^2)
\end{equation*}
for some $\sigma^2$, as $P \to \infty$. For uniformly ergodic Markov chains,
stronger results hold. For example, for any deterministic sequence
$(y_p)_{p=1}^\infty$:
\begin{equation*}
  \left. \sqrt P \pr{\frac 1 P \sum_{p=1}^P \varphi(Y_p) - \pi(\varphi)}
  \right \vert Y_1 = y_P \Rightarrow \N(0, \sigma^2).
\end{equation*}

If instead of having a single Markov chain, we have $M= M(P)$
chains $(Y_{p}^{m})$, $m=1,\ldots,M$, running in parallel, then, provided
that the number of chains $M$ is negligible compared to their length $P$, it is
possible to average the result of $M$ chains to get a better one. Specifically,
it can be shown that for any deterministic sequence $(y_P^m)$ indexed by $m$ and
$p$,
\begin{equation}
    \label{eq:intro_cond_convg}
    \left. \sqrt{MP} \pr{\frac 1 M \sum_{m=1}^M \frac 1 P \sum_{p=1}^P \varphi(Y_p^m)
        - \pi(\varphi)} \right \vert Y_1^{1:M} = y_{P}^{1:M}
    \Rightarrow \mathcal{N}(0, \sigma^2)
\end{equation}
as $P \to \infty$. It is natural to reformulate \eqref{eq:intro_cond_convg}
using the following simplified notation:
\begin{equation}
\label{eq:intro_cond_convg_2}
\left. \sqrt{MP} \pr{\frac 1 M \sum_{m=1}^M \frac 1 P \sum_{p=1}^P \varphi(Y_p^m)
    - \pi(\varphi)} \right \vert Y_1^{1:M} \Rightarrow \N(0, \sigma^2)
\end{equation}
while keeping in mind that $M = M(P)$ and in particular the $\sigma$-algebra
generated by $Y_{1}^{1:M}$ does not stay the same when $P \to \infty$. While the
interpretation \eqref{eq:intro_cond_convg} of the notation of
\eqref{eq:intro_cond_convg_2} is intuitive, a more rigorous formalization will
make manipulations easier. That is the point of the following definition and
lemma, which are simple specific cases of more general results in
\citet{sweeting1989conditional}. The difference with
\citet{sweeting1989conditional} is that we prefer, if possible, to work with
probability conditioned on an event, which is simpler than probability
conditioned on a filtration or a variable.

\begin{defi}[Convergence of conditional distributions]
\label{def:convg_cond_dist}
Let $(X_n)_{n=1}^\infty$ be a sequence of random variables and let
$(\mathcal{F}_n)_{n=1}^\infty$ be a sequence of $\sigma$-algebras (which are not
necessarily nested as in a filtration). We say that the sequence $X_n |
\mathcal{F}_n$ of conditional distributions converge as $n\rightarrow \infty$ to
distribution $\pi$, 
\begin{equation*}
    X_n | \mathcal{F}_n \Rightarrow \pi,
\end{equation*}
if for any sequence $(B_n)_{n=1}^\infty$ of events such that $B_n \in 
\mathcal{F}_n$ and $\P(B_n) > 0$, we have $X_n | B_n \Rightarrow \pi$.
\end{defi}

\begin{lem}
\label{lem:as_cond}
Under the notations of definition~\ref{def:convg_cond_dist}, we have, for any
continuous bounded function $\varphi:\setX\rightarrow \R$,
\begin{equation*}
    \CE{\varphi(X_n)}{\mathcal{F}_n} \cvas \pi(\varphi).
\end{equation*}
\end{lem}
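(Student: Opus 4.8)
The plan is to write $c\eqdef\pi(\varphi)$ and $Y_n\eqdef\CE{\varphi(X_n)}{\mathcal{F}_n}$, an $\mathcal{F}_n$-measurable random variable bounded by $\infnorm{\varphi}$, and to prove that $Y_n\cvas c$. The whole argument rests on one observation, which is strictly stronger than what an ordinary in-probability argument would give: the hypothesis forces the one-sided tails of $Y_n$ to vanish \emph{exactly} for large $n$. Precisely, for every $\epsilon>0$ there is an index $N_\epsilon$ such that $\P(Y_n>c+\epsilon)=0$ for all $n\geq N_\epsilon$, and symmetrically $\P(Y_n<c-\epsilon)=0$ for all $n$ past some $N_\epsilon'$.

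First I would establish this tail claim by contradiction. Suppose the set $S\eqdef\px{n:\P(Y_n>c+\epsilon)>0}$ were infinite. I would then assemble a single admissible conditioning sequence covering all indices at once, by setting $B_n\eqdef\px{Y_n>c+\epsilon}$ for $n\in S$ and $B_n\eqdef\Omega$ otherwise. Each $B_n$ lies in $\mathcal{F}_n$ and has $\P(B_n)>0$, so $(B_n)$ is admissible in the sense of Definition~\ref{def:convg_cond_dist}; hence $X_n\mid B_n\Rightarrow\pi$, and applying this weak convergence to the bounded continuous test function $\varphi$ gives $\CE{\varphi(X_n)}{B_n}\to c$. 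On the other hand, since $B_n\in\mathcal{F}_n$, the tower property yields $\CE{\varphi(X_n)}{B_n}=\E\ps{Y_n\ind_{B_n}}/\P(B_n)$, and for $n\in S$ the elementary bound $\int_{B_n}Y_n\,\dd\P\geq(c+\epsilon)\P(B_n)$ (valid because $Y_n>c+\epsilon$ on $B_n$) forces $\CE{\varphi(X_n)}{B_n}\geq c+\epsilon$. As $S$ is infinite, the latter would hold along a whole subsequence, contradicting convergence to $c$. Therefore $S$ is finite, which is the claim; the lower tail is treated identically with $B_n=\px{Y_n<c-\epsilon}$.

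Second, I would upgrade the tail claim to almost sure convergence by a short, Borel--Cantelli-free argument. For each integer $k\geq1$, the tail claim with $\epsilon=1/k$ produces an index $N_k$ such that the event $E_k\eqdef\bigcup_{n\geq N_k}\px{\abs{Y_n-c}>1/k}$ is a countable union of $\P$-null sets, hence null. The union $E\eqdef\bigcup_{k\geq1}E_k$ is then $\P$-null, and on its complement one has, for every $k$, $\abs{Y_n-c}\leq1/k$ for all $n\geq N_k$; this is exactly $Y_n\to c$ pointwise on $E^c$. Hence $Y_n\cvas c=\pi(\varphi)$.

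The step carrying the weight is the tail claim, and the point I expect to be easy to miss is that the hypothesis genuinely delivers $\P(Y_n>c+\epsilon)=0$ for all large $n$, rather than the weaker $\P(Y_n>c+\epsilon)\to0$ one would extract by testing $B_n=\px{Y_n>c+\epsilon}$ only along a subsequence. This exact vanishing is precisely what makes an almost-sure (and not merely in-probability) conclusion available with no summability or rate assumption, and it hinges on being allowed to interleave the bad events with copies of $\Omega$ so as to form one admissible sequence indexed by every $n$. Everything after that---the identity $\CE{\varphi(X_n)}{B_n}=\E\ps{Y_n\ind_{B_n}}/\P(B_n)$, the inequality $\int_{B_n}Y_n\,\dd\P\geq(c+\epsilon)\P(B_n)$, and the passage to a null set---is routine.
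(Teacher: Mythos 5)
Your proof is correct and follows essentially the same route as the paper's: show by contradiction (conditioning on the bad events, using the tower property and the definition of conditional convergence) that $\P(Y_n > c+\epsilon)$ and $\P(Y_n < c-\epsilon)$ vanish exactly for all large $n$, then conclude almost-sure convergence from a countable union of null sets. Your explicit padding of the conditioning sequence with $\Omega$ at the good indices is a slightly more careful rendering of a step the paper leaves implicit, but it is the same argument.
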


\textbf{Remark.} This result is in fact used in \cite{sweeting1989conditional}
as the \textit{definition} of convergence in conditional probability. As 
said above, we prefer Definition~\ref{def:convg_cond_dist} as we find it
more convenient to work with probability conditioned on events than
probability conditioned on sigma-algebras. 

\begin{proof}
For some $\epsilon>0$, define the events $B_n$ as 
\begin{equation*}
    B_n \eqdef \px{\CE{\varphi(X_n)}{\mathcal{F}_n} - \pi(\varphi) \geq \epsilon}.
\end{equation*}
If $\P(B_n) > 0$, one can write, since $B_n \in \mathcal{F}_n$:
\begin{equation}
\label{eq:as_cond}
    \CE{\varphi(X_n)}{B_n} = \CE{\CE{\varphi(X_n)}{\mathcal{F}_n}}{B_n} \geq \pi(\varphi) + \epsilon.
\end{equation}
If there exists an infinity of $n$ such that $\P(B_n) > 0$, we have by
Definition~\ref{def:convg_cond_dist} that $X_n | B_n \Rightarrow \pi$, which
leads to a contradiction if we let $n \to \infty$ in both sides of
\eqref{eq:as_cond}. Thus, there exists some $n_1$ such that $\P(B_n) = 0
,\forall n \geq n_1 $. Similarly, one may show that there exists $n_2$ such that
$\P(C_n) = 0$, $\forall n \geq n_2$, where $$C_n =
\px{\E[\varphi(X_n)|\mathcal{F}_n] - \pi(\varphi) < -\epsilon}.$$ Now, note that
the desired almost-sure convergence is equivalent to the fact that the random
variable
$$R := \limsup_{n\rightarrow\infty} \abs{\E[\varphi(X_n) | \mathcal{F}_n] - \pi(\varphi)}$$
equals $0$ almost surely. Indeed, for any $\epsilon > 0$, the event $\px{R \geq
  \epsilon}$ is contained in $\pr{\bigcup_{n=n_1}^\infty B_n}$ $\cup$
$\pr{\bigcup_{n=n_2}^\infty C_n}$, which has probability zero.
\end{proof}

\begin{lem}
\label{lem:conv_prod}
Let $(X_n)_{n=1}^\infty$ and $(Y_n)_{n=1}^\infty$ be two sequences of random
variables such that $X_n \Rightarrow \P_X$ and $Y_n | X_n \Rightarrow \P_Y$
where the latter is understood in terms of Definition~\ref{def:convg_cond_dist}.
Then $(X_n, Y_n)$ $\Rightarrow \P_X \otimes \P_Y$.
\end{lem}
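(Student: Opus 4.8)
The plan is to reduce the joint statement to the convergence of expectations of \emph{product} test functions, and then to exploit the conditioning structure via Lemma~\ref{lem:as_cond}. Concretely, it suffices to show that
\[
  \E\ps{f(X_n)\, g(Y_n)} \longrightarrow \P_X(f)\,\P_Y(g)
\]
for every pair of bounded continuous functions $f$ and $g$. In the application (to Theorem~\ref{thm:clt_norm_cst}) the variables are real- or vector-valued, so this reduction is immediate from L\'evy's continuity theorem: taking $f$ and $g$ to be the real and imaginary parts of $x\mapsto e^{\mathrm{i} s\cdot x}$ and $y\mapsto e^{\mathrm{i} t\cdot y}$, the right-hand side above becomes $\widehat{\P_X}(s)\,\widehat{\P_Y}(t)$, the characteristic function of $\P_X\otimes\P_Y$. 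More generally, the functions $f\otimes g$ form a convergence-determining class for the product measure, once one checks that both marginal sequences are tight (which itself will follow from the product-expectation convergence below, by taking $f\equiv 1$).

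The core of the argument is then a single conditioning step. I would write
\[
  \E\ps{f(X_n)\, g(Y_n)} = \E\ps{f(X_n)\, \CE{g(Y_n)}{\sigma(X_n)}},
\]
and note that the hypothesis $Y_n\mid X_n \Rightarrow \P_Y$, read through Definition~\ref{def:convg_cond_dist} with $\mathcal{F}_n=\sigma(X_n)$, is exactly the premise of Lemma~\ref{lem:as_cond}. That lemma therefore gives $\CE{g(Y_n)}{\sigma(X_n)} \cvas \P_Y(g)$ for the bounded continuous function $g$.

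To finish, I would split
\[
  \E\ps{f(X_n)\, g(Y_n)} = \P_Y(g)\,\E\ps{f(X_n)} + \E\ps{f(X_n)\pr{\CE{g(Y_n)}{\sigma(X_n)} - \P_Y(g)}}.
\]
The first summand tends to $\P_Y(g)\,\P_X(f)$ because $X_n\Rightarrow\P_X$ and $f$ is bounded continuous. The second is bounded in modulus by $\infnorm{f}\,\E\abs{\CE{g(Y_n)}{\sigma(X_n)} - \P_Y(g)}$, whose integrand is dominated by $2\infnorm{g}$ and converges to $0$ almost surely by the previous step; the bounded convergence theorem then sends this term to $0$. Adding the two limits yields the claim.

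The step I expect to require the most care is the reduction in the first paragraph: one must argue that matching the expectations of all product functions $f\otimes g$ genuinely identifies the joint limit as the product $\P_X\otimes\P_Y$, and not merely its two marginals. The conditioning step itself is clean, but I would also double-check that $\mathcal{F}_n=\sigma(X_n)$ matches the hypothesis so that Lemma~\ref{lem:as_cond} applies verbatim, even though these $\sigma$-algebras change with $n$.
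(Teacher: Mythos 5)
Your proof is correct and follows essentially the same route as the paper's: the paper works directly with characteristic functions, writing $\E[e^{iuX_n+ivY_n}]-\E[e^{iuX_n}]\E[e^{ivY}]=\E[e^{iuX_n}(\E[e^{ivY_n}|X_n]-\E[e^{ivY}])]$ and sending this to zero via Lemma~\ref{lem:as_cond} and dominated convergence, exactly your conditioning-plus-bounded-convergence step. Your extra worry about whether product test functions determine the joint limit is resolved precisely by the characteristic-function formulation you mention (L\'evy's continuity theorem), which is what the paper uses from the start.
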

\begin{proof}
  Let $Y$ be a $\P_Y$-distributed random variable. We have that
\begin{equation*}
  \abs{\E[e^{iuX_n + ivY_n}] - \E[e^{iuX_n}]\E[e^{ivY}]} = \abs{\E\ps{e^{iuX_n} \pr{\E[e^{ivY_n}|X_n] - \E[e^{ivY}]}}}
\end{equation*}
tends to $0$ by dominated convergence theorem and the fact that
$\E[e^{ivY_n}|X_n] - \E[e^{ivY}]$ converges almost surely to $0$ (Lemma~\ref{lem:as_cond}).
\end{proof}

We are now able to prove Theorem~\ref{thm:clt_norm_cst}. 

\begin{proof}
  The idea of the proof is to show something very similar to
  \eqref{eq:intro_cond_convg}. Indeed, we shall show the following conditional
  version of \eqref{eq:clt_in_P}:
\begin{equation}
\label{eq:cond_convg_at_t}
\left. \frac{1}{\sqrt P} \sum_{p=1}^P \varphi_{M}(Z_p) \right \vert \mathcal{F}_{t-1}
\Rightarrow \N \left( 0, \vas(M_t,\varphi) \right)
\end{equation}
which by Definition~\ref{def:convg_cond_dist} means
\begin{equation}
\label{eq:cond_convg_at_t_bis}
\left. \frac{1}{\sqrt P} \sum_{p=1}^P \varphi_{M}(Z_p) \right \vert B_{t-1}
\Rightarrow \left( 0, \vas(M_t,\varphi) \right)
\end{equation}
for any sequence $B_{t-1}$ (implicitly indexed by $P$) of events such that $B_{t-1}^P \in \mathcal{F}_{t-1}^P$. The
left hand side of \eqref{eq:cond_convg_at_t_bis} can be decomposed into four
terms as in \eqref{eq:decomp_clt}, where now $(Z_p^{\star})$ is a
stationary Markov chain constructed via a maximal coupling of $\Q_{t-1}^{\otimes
  M}$ and the \textit{conditional} (instead of the full) distribution of
$Z_R$. The first, third and the fourth terms of \eqref{eq:decomp_clt}
can be treated exactly as before. The second term tends to $0$ in probability
when $R = N^\epsilon$ for small enough $\epsilon$, because $M = O(N^\alpha)$ for
$\alpha < 1/2$. Thus \eqref{eq:cond_convg_at_t} holds. Applying it for
$\varphi = G_t$ and using the delta method give the convergence of $\sqrt N
(\log\hat \ell_t - \log \ell_t) | \mathcal{F}_{t-1}$ with asymptotic variance
$\vas(M_t, \bar{G}_t)$. Furthermore, note that by Definition~\ref{def:convg_cond_dist}, 
the convergence of $X_N | \mathcal{F}_N$ implies
the convergence of $X_N | \mathcal{F}'_N$ if $\mathcal{F}'_n \subset
\mathcal{F}_n$ for all $n$. Hence
\begin{equation}
\label{eq:log_before}
\left. \sqrt N \pr{\log \ell_t^N - \log \ell_t} \right
\vert \sqrt N \pr{\log  L_{t-1}^N - \log L_{t-1}}
\Rightarrow \N\left( 0, \vas(M_t, \bar{G}_t) \right). 
\end{equation}

We can now proceed by induction. Suppose that the assertion is verified up to
time $t-1$, that is,
\begin{equation}
\label{eq:log_at_t}
\sqrt N \pr{\log L_{t-1}^N - \log L_{t-1}}
\Rightarrow \N\pr{0, \sum_{s=0}^{t-1} \vas(M_s, \bar{G}_s)}. 
\end{equation}
Then, \eqref{eq:log_before}, \eqref{eq:log_at_t} and Lemma~\ref{lem:conv_prod}
prove the assertion at time $t$.
\end{proof}

\subsection{Proof of Proposition~\ref{prop:comparison}}

We first calculate $\wstd{t}{k}(\varphi)$ by using e.g. formula (11.14) in
\cite{SMCbook}: 
\begin{equation} \label{eq:vtphi_smcbook} 
    \wstd{t}{k}(\varphi) = 
    \sum_{s=0}^t \Q_{s-1}\ps{\px{\bar G_s R_{s+1:t} C_t \varphi}^2}
\end{equation} 
where $\bar{G}_t = G_t / r$, 
$R_t(\varphi) \eqdef M_t \bar G_t \varphi$, $R_{s+1:t}
\eqdef R_{s+1} \circ \ldots \circ R_t$, and $C_t(\varphi) \eqdef \varphi -
\Q_t(\varphi)$. Note that $M_t$, $\bar G_t$ and $C_t$ are all linear
functionals.  From the definition of $M_t$, we have 
\[M_t(x_{t-1}, B) 
    = (1- \tilde p_k) \ind_B(x_{t-1}) + \tilde p_k \pi_{t-1}(B), 
\]
with $\tilde{p}_k = 1 - (1-p)^k$, which leads to 
\[\bar G_s R_{s+1:t} C_t \varphi = \bar G_s
\ps{\tilde p_k \Q_s^\star \bar G_{s+1} + (1-\tilde p_k) \bar G_{s+1}} \ldots
\ps{\tilde p_k \Q_{t-1}^\star \bar G_t + (1-\tilde p_k) \bar G_t} C_t \varphi.
\] 
It is easy to fully extend the above expression if one remarks that for any
$l<t$, $\tilde p_k \Q_l^\star \bar G_{l+1:t} C_t \varphi = 0$. Therefore only
terms without any $\tilde p_k \Q_l^\star \bar G_{l+1}$ actually contribute to
the result. Thus 
\[\bar G_s R_{s+1:t} C_t \varphi = (1-\tilde p_k)^{t-s} \bar
G_{s:t} C_t \varphi.
\] 
We can now plug this into \eqref{eq:vtphi_smcbook} and get 
\begin{align*} \wstd{t}{k}(\varphi) &= \sum_{s=0}^t (1-\tilde p_k)^{2(t-s)}
    \Q_{t-1} \ps{\bar G_{s:t}^2 (C_t \varphi)^2} \\ &= \sum_{s=0}^t (1-\tilde
    p_k)^{2(t-s)} \Q_{s-1} \ps{\bar G_{s:t} \frac{1}{r^{t-s+1}} (C_t\varphi)^2}
    \\ &= \sum_{s=0}^t \frac 1 r \ps{\frac{(1-\tilde p_k)^2}{r}}^{t-s}
\Q_t\ps{(C_t \varphi)^2} \\ &= \frac 1 r \sum_{s=0}^t
\pr{\frac{(1-p)^{2k}}{r}}^s \Var_{\Q_t}(\varphi).  
\end{align*} 

We thus see that the variance of the standard SMC sampler evolves
proportionally to the sum of a geometric series and its stability depends on
whether the base of the series is smaller than or greater than $1$. This proves
the second point of the proposition. For the third point, note that
\begin{align*} \mathcal{\tilde V}_t(\varphi) &= \Q_{t-1} \ps{(C_{t-1}\varphi)^2
    + 2\sum_{s=1}^\infty (C_{t-1}\varphi) (K_t^s C_{t-1} \varphi)} \\ &=
    \Q_{t-1} \ps{(C_{t-1}\varphi)^2 + 2 \sum_{s=1}^\infty (C_{t-1}\varphi)^2
(1-p)^s} \\ &= \pr{\frac 2 p -1} \Q_{t-1}\ps{(C_{t-1}\varphi)^2}, 
\end{align*}
from which 
\begin{align*} 
    \wwf{t}(\varphi) &= \mathcal{\tilde V}_t(\bar G_t C_t \varphi) \\ 
    &= \pr{\frac 2 p -1} \Q_{t-1}\ps{(C_{t-1}\bar G_t C_t \varphi)^2} \\ 
    &= \pr{\frac 2p-1} \Q_{t-1} \ps{(\bar G_t C_t \varphi)^2} \\
    &= \frac 1r \pr{\frac 2 p -1} \Var_{\Q_t}(\varphi).  
\end{align*} 

Finally, to prove the last point of the proposition, we write 
\begin{equation} \label{eq:time_lim_bound} 
    \lim_{t\to\infty}
    \frac{\ifactorwf{t}}{k\ifactorstd{t}{k}} = \frac{r^{-1} (\frac 2 p
    -1)}{r^{-1}k \pr{1-\frac{(1-p)^{2k}}{r}}^{-1}} \leq \pr{\frac 2 p -1}
    \frac{1-(1-p)^{2k}}{k} 
\end{equation} 
as the second to last expression is non-decreasing in $r$. Next, consider the
function $f(p) \eqdef (1-p)^{2k} + 2kp$ of which the derivative $f'(p) =
2k(1-(1-p)^{2k-1})$ is non-negative thanks to the fact that $k \geq 1$.  We
have $f(p) \geq f(0)=1$, which, when plugged into Equation
\eqref{eq:time_lim_bound}, gives 
\[ \lim_{t\to\infty} \frac{\ifactorwf{t}}{k\ifactorstd{t}{k}} \leq \pr{\frac 2
p -1} \frac{2kp}{k} \leq 4.  \]

\bibliography{bib/complete.bib} \bibliographystyle{apalike}

\end{document}
